\documentclass[letterpaper]{article}
\usepackage[top=1in, bottom=0.9in, left=0.9in, right=1in]{geometry}
\usepackage{parskip} 
\usepackage{amssymb}
\usepackage{makecell}

\usepackage{fullpage}
\usepackage[dvips]{color}
\usepackage{xcolor,float}
\usepackage{latexsym}
\usepackage[activeacute,english]{babel}
\usepackage{graphicx}
\usepackage{amsmath}
\usepackage{bm}
\usepackage{nccmath}
\usepackage{natbib}
\usepackage{amsthm,appendix,url,inconsolata}
\usepackage{algorithmic}
\usepackage{algorithm}
\usepackage{mathtools}
\usepackage{enumitem,wrapfig}
\usepackage{multirow,sidecap}

\usepackage{amscd}
\usepackage{amsfonts}
\usepackage{amssymb}
\usepackage[tableposition=top]{caption}
\usepackage{wrapfig}
\usepackage{ifthen}
\usepackage[utf8]{inputenc}
\usepackage{ulem} 
\usepackage{xspace} 

\usepackage{afterpage}
\usepackage{setspace}
\usepackage[customcolors]{hf-tikz}
\usepackage{authblk} 
\usepackage{hyperref}

\usepackage{caption}
\usepackage{subcaption}
\def\E{\mathbb{E}}
\setlength{\topsep}{1.2em}
\newtheoremstyle{mytheoremstyle}
  {\topsep} 
  {\topsep} 
  {} 
  {} 
  {\bfseries} 
  {.} 
  {.5em} 
  {} 
\theoremstyle{mytheoremstyle}
\newtheorem{theorem}{Theorem}
\newtheorem{definition}[theorem]{Definition}
\newtheorem{proposition}[theorem]{Proposition}
\newtheorem{corollary}{Corollary}[theorem]

\allowdisplaybreaks[4]

\DeclareMathOperator*{\argmin}{\arg\!\min}

\graphicspath{{./art/}}

\usepackage{booktabs}
\usepackage{colortbl}

\colorlet{tableheadcolor}{gray!25} 

\newcolumntype{K}[1]{>{\centering\arraybackslash}m{#1}}

\usepackage{natbib}

\bibliographystyle{abbrvnat}



\title{Statistical summaries of unlabelled evolutionary trees and ranked hierarchical clustering trees}

\author[1]{Rajanala Samyak}
\author[1,2,*]{Julia A. Palacios}
\affil[1]{Department of Statistics, Stanford University, Stanford, CA 94305, USA}
\affil[2]{Department of Biomedical Data Science, Stanford Medicine, Stanford, CA 94305, USA}
\affil[*]{Corresponding author email: juliapr@stanford.edu}
\date{\today}


\begin{document}
\onehalfspacing
\maketitle

\begin{abstract}
	Rooted and ranked binary trees are mathematical objects of great importance used to model hierarchical data and evolutionary relationships with applications in many fields including evolutionary biology and genetic epidemiology.  
	Bayesian phylogenetic inference usually explore the posterior distribution of trees via Markov Chain Monte Carlo methods, however assessing uncertainty and summarizing distributions or samples of such trees remains challenging.
	While labelled phylogenetic trees have been extensively studied, relatively less literature exists for unlabelled trees which are increasingly useful, for example when 
	one seeks to summarize samples of trees obtained with different methods, or from different samples and environments, and wishes to assess stability and generalizability of these summaries.  
	In our paper, we exploit recently proposed distance metrics of unlabelled ranked binary trees and unlabelled ranked genealogies (equipped with branch lengths) to define the Fr\'{e}chet mean and variance as summaries of these tree distributions. 
	We provide an efficient combinatorial optimization algorithm for computing the Fr\'{e}chet mean from a sample of or distribution on unlabelled ranked tree shapes and unlabelled ranked genealogies. 
	We show the applicability of our summary statistics for studying popular tree distributions and for comparing the SARS-CoV-2 evolutionary trees across different locations during the COVID-19 epidemic in 2020. 
\end{abstract}

\textbf{Keywords:} 
	Binary trees; Combinatorial optimization; Evolutionary trees; Fr\'echet; Summarizing trees; Unlabelled trees.


\section{Introduction}

Binary trees are used to represent the ancestral relationships of samples in evolving populations and to model other type of dependent data that is the result of tree-like structures, such as transmission trees. 
In phylogenetics, population genetics, and cell biology, evolutionary trees are inferred from observed molecular sequence data or from observed evolutionary traits in a sample of individuals from a population. 
These individuals can be viral sequences from infected hosts (viral phylodynamics), species (phylogenetics), individuals from a single species (population genetics), or cells (cancer evolution). 
The estimated tree describes the inferred ancestral relationships of the samples and provides information about the past evolutionary dynamics of the sample's population. 
For example, in the context of viral phylodynamics, the tree provides information about the past transmission history and pathogenesis \citep{volz2013viral}.

\begin{figure}[H]
	\centering
	\includegraphics[width=\textwidth]{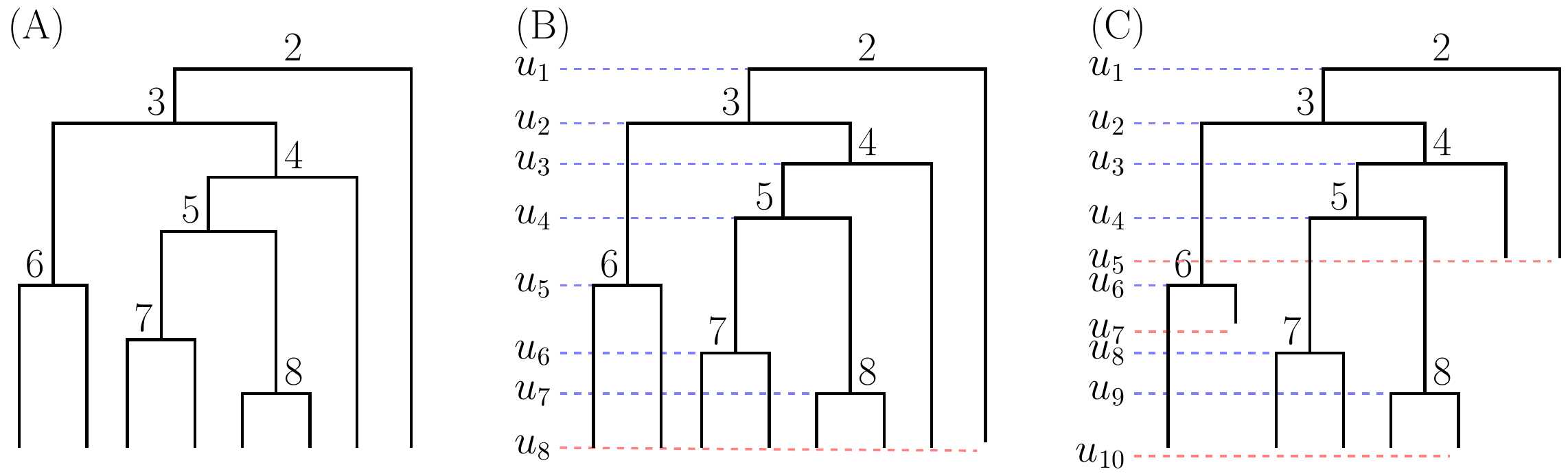}
	\caption{\textbf{Tree examples.} (A) isochronous ranked tree shape (topology only); (B) isochronous ranked genealogies (ranked tree shape and branching times); and (C) heterochronous ranked genealogy with different sampling times (red dashed lines).}
	\label{fig:rut_example}
\end{figure}

Distance-based summaries of labelled and unranked tree structures have been extensively studied such as phylogenetic trees \citep{hillis2005analysis, chakerian2012computational, benner2013computing, Willis2018, Owen2019}, and hierarchical clustering trees such as CART \citep{breiman1984algorithm} in the last few decades.  
These summaries rely on metrics such as the BHV distance \citep{Billera2001} defined on tree space. \citet{KuhnerYam} presents a comparison of different distances on such space. 
In the case of labelled trees, it is usually assumed that all sampled trees have the same set of leaves (labels). 
Other type of summaries of labelled and unranked trees include bootstrap confidence levels \citep{felsenstein1985confidence, efron1996bootstrap}, and maximum clade credibility trees \citep{heled2013looking, o2018efficacy}.

In this article, we are interested in summarizing different tree structures, namely those that are ranked and unlabelled.
These trees are useful in the study of the ancestral relationships of a sample of objects which are exchangeable.
One potential area of application is in the study of cancer evolution where the tree represents the evolutionary history of many cells in a patient's tumor.
We may want to summarize many such trees, each tree inferred from a different patient, in order to find a ``representative" tree and quantify how much variation or heterogeneity is present across patients with the same type of cancer or across different types of cancer.  
Similar type of questions have been studied assuming a coarser type of tree structure \citep{govek2018consensus}.  
Another application is in infectious disease transmission, where we study the mutation history of different viral sequences.
Here, we compute summary statistics from samples of SARS-CoV-2 trees using RNA sequences available in the data repository GISAID between February and September 2020.

Recent mathematical results concerning distance metrics on the space of ranked evolutionary trees without leaf labels enable quantitative comparisons of evolutionary trees of different sets of organisms living at different geographic locations and different time periods.
These metrics have been used for comparing posterior distributions (and testing for distribution equality) of trees of human influenza A virus from two different geographic regions; for summarizing  empirical distributions via medoids; and for defining credible sets of posterior distributions \citep{Kim2019}. 

In this article, we use the previously defined distance metrics on unlabelled ranked evolutionary trees to understand distributional properties of some popular tree models used in Biology and to summarize samples of trees. 
Tree samples can be either obtained from the posterior distribution for a given sample of molecular sequences such as those obtained with \texttt{BEAST} \citep{Suchard2018}, or trees independently obtained in different studies. 

A recently proposed Bayesian method for inferring evolutionary parameters from molecular data is based on the Tajima coalescent for unlabelled ranked evolutionary trees \citep{Palacios2019, cappello2020tajima}. 
This method approximates the posterior distribution of model parameters and evolutionary trees by Markov chain Monte Carlo methods (MCMC). 
While summarizing real-valued parameters is straightforward, summarizing a posterior sample of unlabelled ranked evolutionary trees is more challenging. 
As discussed above, methods for summarizing posterior samples of labelled evolutionary trees exist \citep{heled2013looking, benner2013computing, Owen2019}.  
However, we are not aware of methods applicable for summarizing posterior samples of unlabelled ranked trees.

To summarize samples and populations of unlabelled ranked evolutionary trees, we define the sample and population Fr\'{e}chet means and variances in terms of the recently proposed $d_{1}$ and $d_{2}$ distances. 
We compare these summaries to others measures of centrality and dispersion on the same space. 
We show that finding the Fr\'{e}chet mean consists in finding the solution of an integer programming problem. 
However, the search space of ranked evolutionary trees grows super-exponentially in $n$, the number of leaves, which makes the problem of finding the Fr\'{e}chet mean computationally challenging for large values of $n$. 
For large $n$, we rely on stochastic combinatorial optimization, and propose a simulated annealing algorithm for estimating the Fr\'{e}chet mean in the case of both isochronous and heterochronous ranked tree shapes.  
The time complexity for computation of these trees is independent of $N$, the number of trees, and instead only depends on $n$, the number of leaves in each tree. 
The advantage of our approach is that the output is not restricted to the sample, and can be any element of the space. 

We also introduce exploratory tools for the space of unlabelled ranked trees such as a total ordering, a Markov chain on the space, and credible balls.  These are used in various analyses in this paper but also can also merit separate study.

In Section \ref{sec:prelim} we formally define the class of evolutionary trees considered, show how these trees are in bijection with a particular set of triangular matrices of integers, and show how to compute the distance metrics proposed in \citep{Kim2019}. 
Section \ref{sec:mean} introduces the Fr\'{e}chet mean as a summary of location and methods for computing and approximating it, including a simulated annealing algorithm using a Markov chain for state space exploration. 
In Section \ref{sec:var} we describe the Fr\'{e}chet variance as a statistic of spread for a given sample or distribution. 
In Section \ref{sec:results} we compare various summary statistics of theoretical distributions as well as empirical simulated distributions. 
As a real data example, we analyse the posterior distributions of evolutionary trees inferred from SARS-CoV-2 molecular sequences from the states of California, Texas, Florida, and Washington.  We obtain Fr\'echet mean trees for different samples and display multidimensional scaling plots to visualize intra-state and inter-state variability.
In Section \ref{sec:discussion}, we conclude and discuss future directions. 

We have developed an R package, \textsc{fmatrix} (\verb|github.com/RSamyak/fmatrix|), which implements the various methods discussed in this paper.  
The package is compatible with \textsc{phylodyn} \citep{phylodynRpackage}, an R package for phylodynamic simulation and inference, and \textsc{ape} \citep{paradis2019ape}, an R package to handle phylogenetic trees.


\section{Preliminaries} \label{sec:prelim}

\textit{Ranked unlabelled trees} or \textit{ranked tree shapes} are rooted binary trees, with an increasing ordering of the interior nodes.
They are \textit{unlabelled} in the sense that the external nodes (leaves) are unlabelled.  
We however rank the internal nodes, starting at the root with label $2$ (Figure \ref{fig:rut_example} (A)). 
A ranked unlabelled tree, additionally equipped with the vector of branching times is called a \textit{ranked genealogy} (Figure \ref{fig:rut_example} (B)).

A ranked unlabelled tree (or ranked genealogy) is called \textit{isochronous} if all the leaves are sampled at the same time, usually assumed to be sampled at time $0$ (Figure \ref{fig:rut_example} (A-B)). 
In applications of rapidly evolving pathogens such as Influenza A virus, molecular sequences (leaves) are sampled at different times (Figure \ref{fig:rut_example} (C)) and these trees are called \textit{heterochronous}.


Recently, \citet{Kim2019} proposed metrics on the space of ranked tree shapes and ranked genealogies for comparing and assessing differences between tree distributions. 
The utility of the metrics was demonstrated in a series of simulation studies to assess differences between the posterior distributions of ranked genealogies of Influenza A virus obtained from two different geographic regions. 
The proposed metrics rely on a unique representation of ranked tree shapes as triangular matrices of integers, called $\mathbf{F}$-matrices. 
The distances between two ranked unlabelled trees are then calculated as $L_1$ and $L_2$ distances between these matrices (Eqs. \ref{eq:dist1} and \ref{defn:d2-metric}).  
A formal definition of $\mathbf{F}$-matrices is given in Theorem~\ref{prop:bijection}.

For a given isochronous ranked tree shape with $n$ leaves, we denote the time of a branching event at each node $i$ by $u_{i-1}$ and the time interval between the two subsequent nodes $i$ and $i+1$ by $I_i=(u_{i-1}, u_{i})$. For convenience, we assign $u_n=0$ at the leaves. An $\mathbf{F}$-matrix representation of a ranked tree shape with $n$ leaves is an $(n-1) \times (n-1)$ triangular matrix of non-negative integers. The diagonal elements of the $\mathbf{F}$-matrix indicate the number of branches at each time interval. The off-diagonal element $F_{i,j}$, $2 \leq j < i \leq n-1$, represents the number of branches extant at $I_j=(u_{j+1}, u_{j})$ that do not bifurcate during the interval $(u_{i+1}, u_{j})$. Figure \ref{fig:fmat5} shows all ranked tree shapes with 5 leaves (first row) and their corresponding $\mathbf{F}$-matrix encodings (second row).

\begin{figure}[H]
	\centering
	\includegraphics[width=6in]{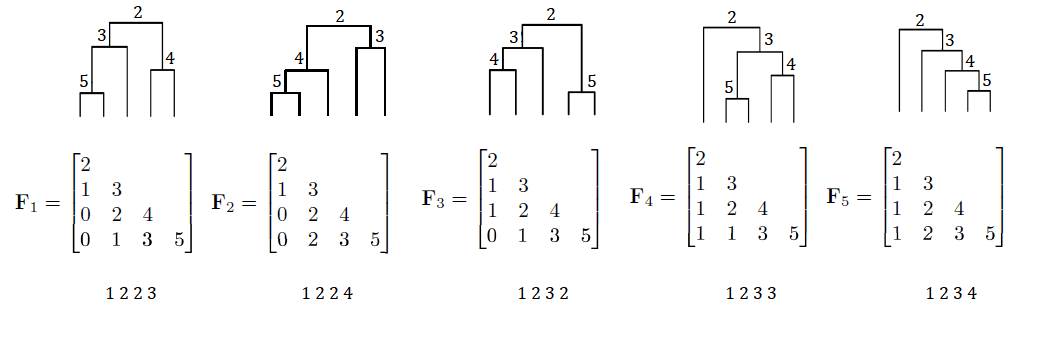}
	\caption{\textbf{All ranked tree shapes with $n = 5$ leaves.} The second row shows the corresponding $\mathbf{F}$-matrix representation of the ranked tree shape of the first row, and the third row shows their corresponding functional code representations (see section \ref{app:mc-encod}
		of the appendix).}
	\label{fig:fmat5} 
\end{figure}

\begin{theorem}
	\label{prop:bijection}
	The space of ranked tree shapes with $n$ leaves $\mathcal{T}_n$ is in bijection with the space $\mathcal{F}_n$ of $(n-1)\times(n-1)$  $\mathbf{F}$-matrices, which are lower triangular square matrices of non-negative integers that obey the following constraints:
	
	1. The diagonal elements are $F_{i,i} = i+1$ for $i = 1, \dots, n-1$ and the subdiagonal elements are $F_{i+1, i} = i$ for $i = 1, \dots, n-2$.
	
	2. The elements $F_{i,1}, i=3, \dots, n-1$, in the first column satisfy $\max \{0, F_{i-1, 1} - 1\} \leq F_{i,1} \leq F_{i-1, 1}$.
	
	3.  All the other elements $F_{i,k}, i= 4,\dots, n-1$ and $k = 2, \dots, i-2$ satisfy the following inequalities: 
	\begin{align*}
		\max\{0, F_{i, k - 1}\} &\leq F_{i,k} \\
		F_{i-1, k} - 1 &\leq F_{i,k} \leq F_{i-1, k}  \\
		F_{i, k - 1} + F_{i-1, k} - F_{i-1,k-1} - 1 &\leq F_{i,k} \leq F_{i,k-1} + F_{i-1, k} -F_{i-1,k-1}
	\end{align*}
\end{theorem}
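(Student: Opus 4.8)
The plan is to exhibit two explicit maps, $\Phi:\mathcal{T}_n\to\mathcal{F}_n$ sending a ranked tree shape to its matrix of survival counts and $\Psi:\mathcal{F}_n\to\mathcal{T}_n$ reconstructing a shape from a matrix, and to prove that they are mutually inverse. For $\Phi$, I read the tree from the root downward and work with the $n-1$ time intervals, indexing row and column $i$ by the interval during which exactly $i+1$ lineages are present. I define $F_{i,j}$, for $j\le i$, to be the number of lineages extant during interval $j$ that survive, without bifurcating, through interval $i$. With this definition the diagonal and subdiagonal are immediate: all $i+1$ lineages of interval $i$ survive to themselves, giving $F_{i,i}=i+1$, and exactly one of them bifurcates at the next event, giving $F_{i+1,i}=i$. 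Thus $\Phi$ automatically lands among matrices with the prescribed diagonal and subdiagonal, and it remains to verify the inequality constraints and to invert the map.

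First I would recast everything in the difference coordinates $g_{i,k}=F_{i,k}-F_{i,k-1}$ (with $F_{i,0}:=0$), which count the lineages extant at interval $i$ that were \emph{born} exactly at interval $k$, i.e.\ that first appear when the $k$-th event occurs. The governing combinatorial fact is that a single lineage bifurcates at each event: passing from interval $i-1$ to interval $i$, one extant lineage of some birth class $c$ disappears and is replaced by two lineages of birth class $i$. In difference coordinates this reads $g_{i,c}=g_{i-1,c}-1$, $g_{i,k}=g_{i-1,k}$ for the remaining $k<i$, and $g_{i,i}=2$. Translating these equalities back to the entries $F_{i,k}$ yields exactly the three families of inequalities in the statement: column monotonicity $F_{i,k-1}\le F_{i,k}$ is $g_{i,k}\ge 0$; the bound $F_{i-1,k}-1\le F_{i,k}\le F_{i-1,k}$ says each cumulative survival count drops by at most one and never increases from one event to the next; and the two-sided bound built from $F_{i,k-1}+F_{i-1,k}-F_{i-1,k-1}$ is precisely $g_{i-1,k}-1\le g_{i,k}\le g_{i-1,k}$, i.e.\ each birth class loses at most one surviving member per event. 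Checking that a genuine tree satisfies all of these is then routine bookkeeping, so $\Phi$ maps into $\mathcal{F}_n$.

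The substance of the theorem is the reverse direction: that every matrix obeying the constraints arises from a \emph{unique} ranked tree shape. The map $\Psi$ reads the matrix from the top, and at the transition recorded by row $i$ the constraints force exactly one birth class $c$ with $g_{i,c}=g_{i-1,c}-1$; I then declare that a lineage born at interval $c$ bifurcates. The point that makes this well defined on \emph{unlabelled} shapes is the following lemma, which I would prove first: any lineage that is extant during an interval is at that moment a bare edge with no descendant structure, since having bifurcated earlier would have removed it in favour of its two children. Consequently all lineages that share a birth class and are extant at a given interval are structurally indistinguishable, so specifying only the class $c$ of the bifurcating lineage determines the resulting unlabelled ranked shape unambiguously. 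I would then organise $\Psi\circ\Phi=\mathrm{id}$ and $\Phi\circ\Psi=\mathrm{id}$ by induction on $n$: the leading $(n-2)\times(n-2)$ principal submatrix of any $\mathbf{F}\in\mathcal{F}_n$ is itself a valid element of $\mathcal{F}_{n-1}$ encoding the tree truncated at $n-1$ lineages, and the last row records a single additional bifurcation, so the admissible choices for that row are in bijection with the ways of adding one event to the $(n-1)$-leaf shape.

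The main obstacle is this sufficiency half: showing that the purely local inequalities are not merely necessary but guarantee a globally realisable, consistent reconstruction from which no information about the unlabelled shape is lost. The interchangeability lemma is what dissolves the apparent ambiguity of ``which class-$c$ lineage splits,'' and the difference-coordinate reformulation is what makes the single-bifurcation constraint transparent; once these are in place the inductive step reduces to counting that the number of feasible last rows equals the number of distinct birth classes present at the final interval, which is exactly the number of admissible one-event extensions. I would therefore concentrate most of the effort on the interchangeability lemma and on verifying that the threshold pattern of decrements forced by the inequalities always corresponds to the legitimate removal of a single birth class.
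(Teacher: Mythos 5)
Your proposal is correct and follows essentially the same route as the paper: your difference coordinates $g_{i,k}=F_{i,k}-F_{i,k-1}$ are precisely the paper's intermediate matrix $D$ (whose entries count the not-yet-bifurcated children of node $k+1$ during interval $i$), and your translation of the three constraint families into ``$g_{i,k}\ge 0$, cumulative drop of at most one, and $g_{i-1,k}-1\le g_{i,k}\le g_{i-1,k}$'' matches the paper's verification that $\varphi(\mathcal{D}_n)\subseteq\mathcal{F}_n$ and $\varphi^{-1}(\mathcal{F}_n)\subseteq\mathcal{D}_n$. The one point you develop more explicitly than the paper --- the interchangeability of extant lineages within a birth class, which makes the reconstruction well defined on unlabelled shapes --- is left implicit there, but both arguments hinge on the same facts.
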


The proof is in the appendix, in Section \ref{app:proof_bijection}.


Given two ranked tree shapes $y_1$ and $y_2 \in \mathcal{T}_{n}$, with corresponding $\mathbf{F}$-matrices $F_1$ and $F_2$, $d_{1}$ and $d_{2}$ are the $L_1$ and the $L_{2}$ distances on the space of matrices, restricted to $\mathcal{F}_n$, the class of $\mathbf{F}$-matrices, that is

\begin{equation} \label{eq:dist1}
	d_1(y_1, y_2) := d_1(F_1, F_2) = \sum_{i,j}{|(F_1)_{ij} - (F_2)_{ij}|}
\end{equation}
and 
\begin{equation} 
	d_2(y_1, y_2) := d_2(F_1, F_2) = \sqrt{\sum_{i,j}{\left((F_1)_{ij} - (F_2)_{ij}\right)^2}}
	\label{defn:d2-metric}
\end{equation}

For ranked genealogies $G_1$ and $G_2 \in \mathcal{G}_{n}$, with  corresponding $F_1$ and $F_2$, $\mathbf{F}$-matrices 
$$ d_1(G_1, G_2) := \sum_{i,j}{|(F_1)_{ij}(W_1)_{ij} - (F_2)_{ij}(W_2)_{ij}|},$$ and
$$ d_2(G_1, G_2) := \sqrt{\sum_{i,j}{((F_1)_{ij}(W_1)_{ij} - (F_2)_{ij}(W_2)_{ij})^{2}}},$$ where $W_1$ and $W_2$ are weight matrices constructed using the respective branching event times $u_k = (u_{k,n}, u_{k, n-1}, \dots, u_{k,1})$ of the $k$-th tree, $k = 1,2$ with $(W_k)_{ij} := |u_{k, j} - u_{k, i+1}|$ (Figure \ref{fig:rut_example}(B)).

For defining a distance on the space of heterochronous ranked tree shapes or genealogies (Figure \ref{fig:rut_example}(C)), \citet{Kim2019} propose supplementing the $\mathbf{F}$-matrix with additional rows for sampling events. 
The heterochronous $d_{1}$ and $d_{2}$ distances are then computed analogously to the isochronous distances, as the $L_{1}$ and $L_{2}$ distances between the extended $\mathbf{F}$-matrices of the same size, as detailed in Section 3 of the appendix of \citet{Kim2019}.  
For improving computational efficiency when computing pairwise distances among a large number of trees, we modify the distance slightly and consider all trees together when adding additional rows to the $\mathbf{F}$-matrix.  
Further details can be found in Section \ref{app:additional_events} 
in the appendix.


\section{Central summaries} \label{sec:mean}

Let $y_{1},\ldots,y_{m} \in \mathcal{T}_{n}$ be $m$ ranked tree shapes with $n$ leaves independently drawn from a common probability distribution. 
We are interested in summarizing such samples and identifying a representative ranked tree shape of the sample. 
Similarly, given a probability distribution over the space of ranked tree shapes, we are interested in knowing what is the ``expected'' tree of that distribution. 

In evolutionary biology applications, trees are usually not directly observed. 
Instead, researchers either find a single tree via maximum likelihood estimation or maximum parsimony \citep{felsenstein2004inferring}, or generate a large sample of trees from the posterior distribution via Markov chain Monte Carlo methods \citep{Suchard2018}. 
In the latter case of Bayesian inference, it is not clear how to best summarize a sample of ranked unlabelled trees and define a representative ``mean" tree as a measure of centrality. 
A decision theoretic approach is to use the $d_{1}$ and the $d_{2}$ distance metrics to define the absolute and the squared error loss functions. 
Minimizing the expected posterior losses gives us the posterior median and posterior mean respectively. 
However, the value that minimizes the expected posterior loss in Euclidean space does not usually correspond to a tree.  In particular for  $d_2$, the Euclidean mean of the F-matrices may not even be a matrix of integers. 
However, we can restrict the minimizer to be an element of the space.
The resulting summary is the sample posterior Fr\'{e}chet mean. Similarly, in applications when a random sample of trees or a population of trees is available, the Fr\'{e}chet mean would be the tree in the space that minimizes the empirical mean loss and the expected loss respectively.

Current practices for summarizing labelled trees include reporting a majority-rule consensus tree (MRC), a maximum clade credibility (MCC), and a median tree based on metrics on labelled trees \citep{benner2013computing,Owen2019}. 
The MRC is obtained by choosing partitions with probability greater than $0.5$ from the list of observed partitions and it is usually annotated with marginal probabilities of each partition as a measure of uncertainty \citep{cranston2007summarizing}. 
The MCC tree is the tree with the maximum product of clade probabilities and it is arguably, the most used central summary of labelled trees. 

The concept of consensus partition is not longer applicable for ranked tree shapes. 
Instead, we rely on the proposed distances on ranked tree shapes to define the Fr\'{e}chet mean in the same line as the median tree for labelled trees. 
The Fr\'{e}chet mean is the tree in the space that has the minimum expected squared distance to a tree in the space, and hence it provides a natural notion of central tree. 
We extend the notion to ranked genealogies, including heterochronous genealogies.

\subsection{The Fr\'{e}chet mean}
\label{ss:mean}

We first consider the metric spaces $(\mathcal{T}_{n},d)$, where $d$ is either $d_{1}$ or $d_{2}$, and let $\mu$ denote a finite probability mass function on $\mathcal{T}_{n}$, then the barycenter of $\mu$, also called a Fr\'{e}chet mean tree \citep{frechet1948elements}, is any element $\bar{T}\in \mathcal{T}_{n}$ such that 
\begin{equation} \label{defn-t-mean}
	\bar{T} \in \argmin_{x \in \mathcal{T}_{n}} \sum_{y \in \mathcal{T}_{n}} d(x,y)^{2}\mu(y).
\end{equation}

Since $\mathcal{T}_n$ is finite, we immediately have the existence of the minimizer in Equation \ref{defn-t-mean}.  Note that uniqueness may not be guaranteed, as though the objective function is convex, the space is discrete.  If there is more than one minimizer, the elements of the Fr\'echet mean set will be close to each other in the metric $d$.  However, for many generating tree distributions, we observe that the Fr\'echet mean is unique and for the rest of the paper we consider as our summary one element of the Fr\'echet mean set.

Examples of probability distributions on $\mathcal{T}_{n}$ include the Tajima coalescent (Yule model) \citep{Tajima1983, veber} in which $\mu(T)=\mu(F)=\frac{2^{n-c-1}}{(n-1)!}$, where $c$ is the number of cherries, i.e., the number of internal nodes subtending two leaves. We consider a larger class of probability distributions on ranked tree shapes in Section~\ref{sec:results}.

Similarly, for the metric spaces $(\mathcal{G}_{n},d)$ with $d$ corresponding to $d_{1}$ or $d_{2}$ on $\mathcal{G}_{n}$ and $\nu$ a probability measure on $\mathcal{G}_{n}$ such that 
\begin{equation*}
	\int_{\mathcal{G}_{n}}d(x,y)^{2}d\nu(y) < \infty,
\end{equation*}
the Fr\'{e}chet mean genealogy is any element $\bar{G} \in \mathcal{G}_{n}$ such that 
\begin{equation} \label{defn-g-mean}
	\bar{G} \in \argmin_{G \in \mathcal{G}_{n}} \int_{H \in \mathcal{G}_{n}} d(G,H)^{2}d\nu(H).
\end{equation}

In this manuscript, we will consider probability models on isochronous genealogies such that for $G=(F,\mathbf{u})$,  $d\nu(G)=\mu(F)\prod^{n-1}_{j=1}f(u_{j} \mid u_{j+1})d(\mathbf{u})$, that is, the tree topology and the branching event times are independent. In evolutionary biology applications, this assumption corresponds to neutral evolution in a closed population \citep{wakeley_coalescent_2008}. In this case, the Fr\'{e}chet mean becomes:

\begin{equation} \label{eq:frech_gen}
	\bar{G} \in \argmin_{G \in \mathcal{G}_{n}} \sum_{F^{H} \in \mathcal{F}_{n}} \int_{0}^{\infty} \int_{u_{n-1}}^{\infty} \cdots \int^{\infty}_{u_{2}} d(G,H)^{2}\mu(F^{H})f(u_{n-1}\mid u_{n})\cdots f(u_{1}\mid u_{2})du_{1}du_{2}\cdots du_{n-1}.
\end{equation}

A remarkable property of the mean (Eq.~\ref{eq:frech_gen}) under the $d_{2}$ distance on genealogies, is that the optimization problem can be separated into two optimization problems, one for finding the tree topology and one for finding the branching event times. The following proposition formalizes this result. 

\begin{proposition}
	Let $\nu$ be a probability measure on $\mathcal{G}_n$, the space of isochronous genealogies, such that the tree topology and branching event times are independent under $\nu$.  The Fr\'{e}chet mean $\bar{G}_2 = (F^*, u^*)$ under the $d_2$ metric can be obtained by separately finding $F^*$ and $u^*$.
	\label{prop:separate_top_time}
\end{proposition}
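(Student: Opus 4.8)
The plan is to reduce the genealogy Fr\'echet problem to a least-squares ``projection onto a target matrix'' problem and then use independence to split that target into a topology factor and a time factor. Writing $G=(F,u)$ and $H=(F^{H},u^{H})$ with weighted matrices $M_{ij}=F_{ij}W_{ij}(u)$ and $M^{H}_{ij}=F^{H}_{ij}W^{H}_{ij}$, I would first expand
\[
\int d_{2}(G,H)^{2}\,d\nu(H)=\sum_{i,j}M_{ij}^{2}-2\sum_{i,j}M_{ij}\,\E_{\nu}[M^{H}_{ij}]+\E_{\nu}\|M^{H}\|^{2}.
\]
The last term is constant in $G$, so only the first two matter. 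The key input is the independence hypothesis: since $F^{H}_{ij}$ is a function of the topology and $W^{H}_{ij}$ of the branching times, $\E_{\nu}[M^{H}_{ij}]=\E_{\nu}[F^{H}_{ij}]\,\E_{\nu}[W^{H}_{ij}]=:\bar F_{ij}\bar W_{ij}$. Completing the square, the objective becomes, up to an additive constant, $\sum_{i,j}(F_{ij}W_{ij}-\bar F_{ij}\bar W_{ij})^{2}$, i.e. the squared distance from the realized weighted matrix $F\odot W(u)$ to the single target $\bar F\odot\bar W$.

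Second, I would record the structural fact that makes separation plausible: the target's time factor is itself realizable. Because $W_{ij}(u)=u_{j}-u_{i+1}$ is linear in the branching times (dropping the absolute value, as the times are ordered), we have $\bar W_{ij}=\E_{\nu}[u^{H}_{j}-u^{H}_{i+1}]=\bar u_{j}-\bar u_{i+1}=W_{ij}(\bar u)$, where $\bar u$ is the vector of mean branching times. Thus the time part of the target is hit exactly by the genealogy with branching times $u^{*}=\bar u$. The plan is then to center the two sources of variation: write $F^{H}=\bar F+\varepsilon$ and $W^{H}=\bar W+\eta$ with $\E\varepsilon=\E\eta=0$ and $\varepsilon\perp\eta$, and show that in the expected squared loss the cross term between a topology deviation and a time deviation vanishes. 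This is exactly where the squared ($d_{2}$) loss is essential: the deviations enter linearly inside the square, so their expectation is annihilated by the mean-zero/independence structure, whereas for $d_{1}$ no such cancellation occurs.

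With the cross term gone, I would argue that the loss splits into a pure topology term, of the form $\sum_{i,j}\bar W_{ij}^{2}\,\E_{\nu}[(F_{ij}-F^{H}_{ij})^{2}]$, minimized over $\mathcal F_{n}$ by a combinatorial search for $F^{*}$, plus a pure time term, of the form $\sum_{i,j}\E_{\nu}[(F^{H}_{ij})^{2}]\,\E_{\nu}[(W_{ij}(u)-W^{H}_{ij})^{2}]$, which, using the realizability of $\bar W$ and the consistency of the telescoping relations $W_{ij}=\sum_{\ell=j}^{i}g_{\ell}$ (with $g_{\ell}=u_{\ell}-u_{\ell+1}$ the interval lengths), is minimized simultaneously in all entries by $u^{*}=\bar u$. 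Since the first term depends only on $F$ and the second only on $u$, the two minimizations are carried out independently, yielding $\bar G_{2}=(F^{*},u^{*})$.

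The hard part will be justifying the decoupling rigorously rather than only formally. The weight $W_{ij}(u)$ multiplies the topology entry $F_{ij}$ and is a cumulative sum shared across many entries, so \emph{a priori} the optimal topology could depend on the chosen branching times, and conversely; the entire content of the proposition is that this interaction drops out. I expect the crux to be showing that the cross term contributes nothing to the minimizer, i.e. that evaluating the weights at $\bar u$ is simultaneously optimal for the time block and harmless for the topology block, together with verifying that the constraints defining $\mathcal F_{n}$ and the telescoping constraints on $W$ do not reintroduce coupling. I would be careful here, since the product $F\odot W$ means the clean additive split is not automatic; if a naive ``fix $u=\bar u$, then search over $F$'' step fails to be globally optimal, the honest route is to profile $u$ out analytically for each fixed $F$ and show the resulting topology objective still reduces to a pure combinatorial problem with $u^{*}$ recovered in closed form. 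Throughout I would keep the $d_{2}$-specific cancellation in clear view, as it is precisely what makes this separability ``remarkable'' and unavailable for $d_{1}$.
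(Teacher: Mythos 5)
Your overall route is essentially the paper's: use independence of topology and times to factorize $\E_{\nu}[F^{H}_{ij}W^{H}_{ij}]=\bar F_{ij}\bar W_{ij}$, reduce the Fr\'echet objective (up to an additive constant in $G$) to $\sum_{ij}\bigl(F_{ij}W_{ij}(u)-\bar F_{ij}\bar W_{ij}\bigr)^{2}$, and observe that $\bar W=W(\bar u)$ is realizable by the mean branching times. The paper reaches the same endpoint by differentiating the objective with respect to the interval lengths $I_{k}$ and exhibiting $F^{G}_{j}=\E(F_{j})$ together with $g_{j}(I^{G})=\E[w_{j}(\mathbf{u})]$ as a simultaneous solution of the first-order conditions; your completing-the-square derivation of the target $\bar F_{ij}\bar W_{ij}$ is a cleaner algebraic presentation of the same reduction.

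There is, however, a step in your third paragraph that fails as stated: the expected loss does not split as $\sum_{ij}\bar W_{ij}^{2}\,\E[(F_{ij}-F^{H}_{ij})^{2}]+\sum_{ij}\E[(F^{H}_{ij})^{2}]\,\E[(W_{ij}(u)-W^{H}_{ij})^{2}]$. Once the mean-zero cross terms are removed, the part of the objective that depends on the candidate $(F,u)$ is exactly $\sum_{ij}(F_{ij}W_{ij}(u)-\bar F_{ij}\bar W_{ij})^{2}$, whose quadratic term is $F_{ij}^{2}W_{ij}(u)^{2}$, not $\bar W_{ij}^{2}F_{ij}^{2}+\E[(F^{H}_{ij})^{2}]W_{ij}(u)^{2}$; the coupling through the product survives, exactly as you feared. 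Your fallback of profiling $u$ out for each fixed $F$ also does not obviously produce an $F$-independent optimum, since the profiled times solve a least-squares problem weighted by the $F_{ij}^{2}$. You should know that the paper's own proof does not close this gap either: it only verifies that $(\E F,\bar u)$ is a stationary point of the relaxed problem and explicitly concedes that it is ``ignoring the constraints imposed to $F$.'' The crux you correctly single out --- that $u^{*}=\bar u$ is simultaneously optimal for \emph{every} feasible $F\in\mathcal{F}_{n}$, so that the topology search can then be run with the times fixed --- is left at the level of a sufficient first-order condition in the published argument as well, so your proposal is honest about, but does not resolve, the same weakness.
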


The proof is in the appendix, in Section \ref{app:proof_separate_top_time}.

The empirical Fr\'{e}chet mean of a given sample $y_1, \dots y_m$ from the metric space $(\mathcal{T}_{n},d)$ is obtained by taking $\mu$ in (\ref{defn-t-mean}) to be the empirical measure.  The mean of a sample $h_1, \dots, h_m$ from $(\mathcal{G}_{n},d)$ is obtained analogously, where $\nu$ in (\ref{defn-g-mean}) is taken to be the empirical measure, that is

\begin{align} \label{defn-mean-sample}
	\bar{T} \in \argmin_{x \in \mathcal{T}_{n}} \sum_{j = 1}^m d(x,y_j)^{2} \\
	\bar{G} \in \argmin_{g \in \mathcal{G}_{n}} \sum_{j = 1}^m d(g,h_j)^{2}
\end{align}

The cardinality of the space $\mathcal{T}_n$ is given by the Euler zigzag numbers (OEIS A000111), which grow super-exponentially with $n$, ${\displaystyle |\mathcal{T}|_{n}\sim 2\left({{2}/{\pi }}\right)^{n+1}\cdot n!}$.  Finding the Fr\'{e}chet mean is hence  computationally challenging for large $n$.  

For a simpler summary of centrality, we can use an in-sample version of (\ref{defn-mean-sample}), which we call the \textit{restricted Fr\'{e}chet mean}:

\begin{align} \label{defn-medoid-sample}
	\bar{T}^{\text{in-sample}}_{i} \in \argmin_{x \in \{y_1, \dots y_m\}} \sum_{j = 1}^m d_{i}(x,y_j)^{2} \\
	\bar{G}^{\text{in-sample}}_{i} \in \argmin_{g \in \{h_1, \dots h_m\}} \sum_{j = 1}^m d_{i}(g,h_j)^{2}
\end{align}

This may be reasonable for spaces with large number of leaves when direct computation of the Fr\'{e}chet mean is not be possible. However, constraining ourselves to stay only within the sample may be undesirable.

\subsection{Mixed Integer Programming} \label{ss:mip}

In principle, the definition of the Fr\'{e}chet mean is not sensitive to the choice of metric. However, in the case of the $d_{1}$ and $d_{2}$ metrics, the Fr\'{e}chet mean is the minimizer of a convex objective function. In addition, the space of $\mathbf{F}$-matrices is characterized by a set of linear inequalities (Theorem \ref{prop:bijection}) and hence the problem of finding the Fr\'{e}chet mean can be framed as a mixed integer programming problem. 

Let $\mu$ be a probability measure on $\mathcal{T}_n$, or equivalently on $\mathcal{F}_n$, then 
in the case of the $d_2$ metric, the Fr\'{e}chet mean $\bar{F}_2$ is given by:

\begin{align*}
	\bar{F}_2 &\in 
	\underset{F \in \mathcal{F}_n}
	{\operatorname{argmin}} 
	\sum_{H \in \mathcal{F}_n}
	{\sum_{k,l} (F_{kl} - H_{kl})^2} \mu(H) \\
	&= \underset{F \in \mathcal{F}_n}{\operatorname{argmin}} 
	\sum_{H \in \mathcal{F}_n}
	{\sum_{k,l} {(F_{kl}^2 - 2F_{kl}H_{kl} )} \mu(H) }\\
	&= \underset{F \in \mathcal{F}_n}{\operatorname{argmin}} 
	{\sum_{k,l} \{F_{kl}^2 - 2F_{kl}M_{kl}} \}
\end{align*}

\begin{figure}[H]
	\centering
	\includegraphics[width=0.3\textwidth]{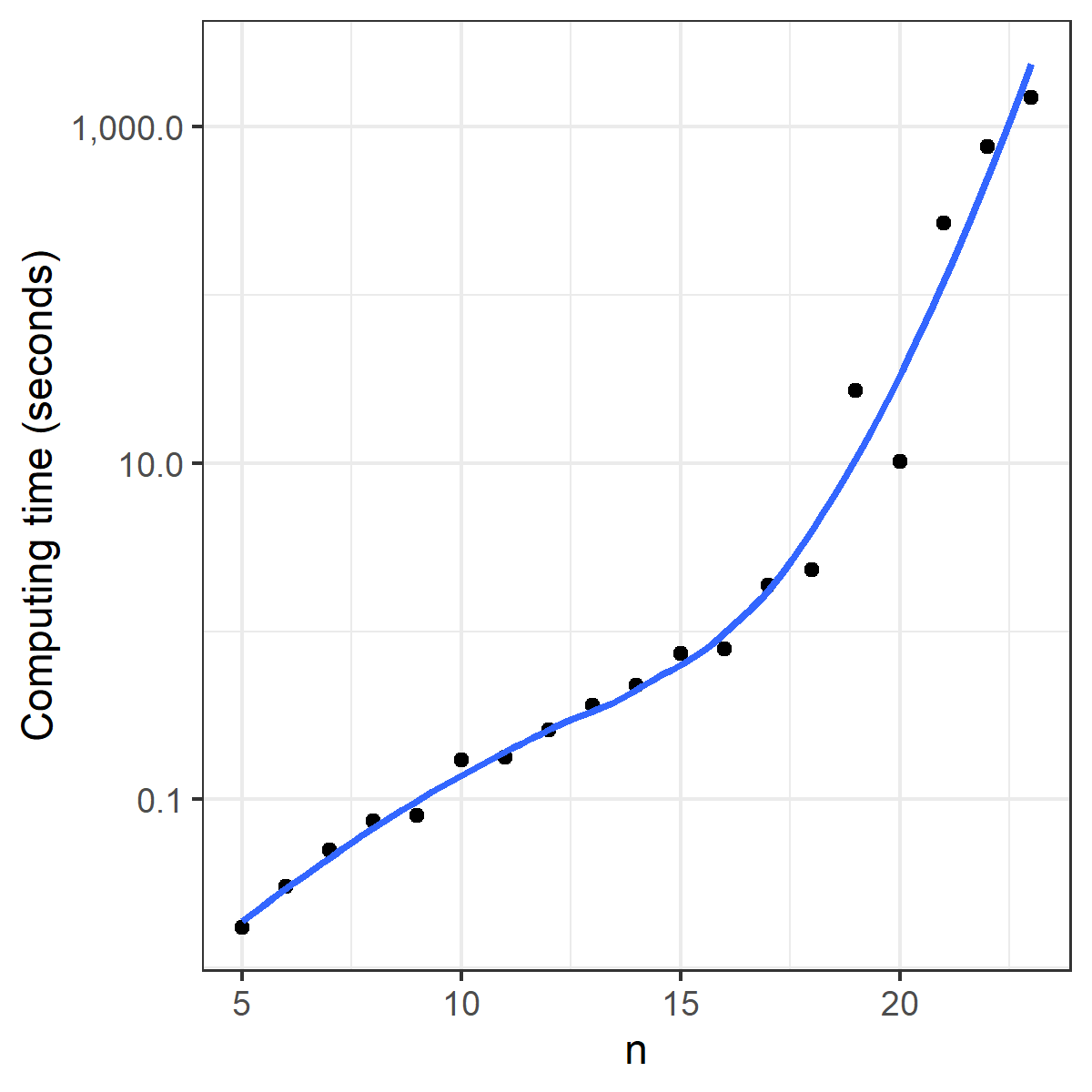}
	\caption{\textbf{Running time.}  Exact computation for Fr\'{e}chet mean under the Yule model using Gurobi, plotted against dimension of $\mathbf{F}$-matrices.  Computations done on laptop with an Intel i7 processor.}
	\label{fig:gurobi_time}
\end{figure} 
where 
$M_{kl} = \sum_{H \in \mathcal{F}_n}{H_{kl}\cdot \mu(H)}.$ This is a simple quadratic objective function with linear and integrality constraints.  We  use \textsc{gurobi} \citep{gurobi}, a standard MIP solver,  to directly perform the optimization. The implementaion of the code is available in the \texttt{R} package \textsc{fmatrix}. Note that once the means $M_{kl}$ are computed, the rest of the problem no longer involves the $m$ samples. That is, the problem scales with the number of leaves $n$ but not with the number of samples $m$. This is particularly important when summarizing  samples obtained through MCMC, since $m$ is usually of high order.

This method works well for small number of leaves $n$, such as $n = 20$, but it quickly becomes impractical with larger $n$. Figure \ref{fig:gurobi_time} shows how the computation time grows exponentially in $n$. For larger $n$, we resort to stochastic combinatorial optimization algorithms that scale well at the expense of solution guarantees as discussed in Section \ref{sec:sa}.

\subsection{Simulated annealing algorithm} \label{sec:sa}

When the number of leaves is large, the MIP solution is computationally demanding and often unfeasible.  
A simple technique that works well is simulated annealing \citep{kirkpatrick1983optimization}, which is a  general-purpose stochastic algorithm for optimizing an objective function over a potentially large discrete set. 
Simulated annealing explores the ranked tree shape space via a Metropolis-Hastings algorithm.
We trade the guarantee of an exact solution for computational tractability.

In order to describe the simulated annealing algorithm, we define two Markov chains on the space of ranked tree shapes (one for isochronous and one for heterochronous trees) in Section \ref{app:mc-encod} 
of the appendix.  The two Markov chains are then used as proposal distributions in the Metropolis-Hastings step of the simulated annealing algorithm.

In the case of the Fr\'{e}chet mean, SA aims to minimize the energy function $E(x) = \sum_{i=1}^m {d(x, y_i)^2}$ for a sample of trees $\{y_{i}\}^{m}_{i=1}$ over $x\in \mathcal{T}_{n}$. This problem is equivalent to finding the maximum of $\exp\{-E(x)/R\}$ at any given temperature $R>0$. Let $\{R_{k}\}$ be a sequence of monotone decreasing temperatures such that  $\lim_{k \rightarrow \infty} R_{k}=0$, for example $R_k = \alpha^k R_0$ for some high initial temperature $R_0$, and $\alpha < 1$.  (This is the exponential cooling schedule.) Then, at each temperature, the SA algorithm consists of Metropolis-Hastings (MH) steps that targets $\pi_{k}(x) \propto \exp\{-E(x)/R_{k}\}$ as the stationary distribution. As the number of steps increases, $\pi_{k}(x)$ puts more and more of its probability mass in the set of global maxima.  SA differs from descent algorithms by allowing transitions to higher energy states at higher temperatures, in order to avoid being stuck at local maxima.

In our implementations (Algorithm~\ref{alg:sa}) for isochronous and heterochronous Fr\'{e}chet means, the MH proposal distributions are given by the transition kernels of the Markov chains described in section \ref{app:mc-encod} 
of the appendix.  In both cases, the transition kernels are symmetric and hence, the MH acceptance probability of moving from $x_{k-1}$ to $x_{k}$ is given by:

$$a_{k}= \exp{\left(-\dfrac{E(x_{k})}{R_{k}} + \dfrac{E(x_{k-1})}{R_{k-1}} \right)} \wedge 1. $$

The temperature schedule in SA needs to be specified and affects the time taken for convergence of the algorithm.  Theoretical convergence guarantees exist for the logarithmic cooling schedule $R_k = R_0(1 + \alpha \log(1 + k))^{-1}$
with sufficiently high initial temperature and appropriately chosen $\alpha$ (see Chapter 3 of \citet{aarts1988simulated}).  However this schedule is prohibitively slow for most problems.  In practice, we observe that the exponential cooling schedule with $\alpha$ chosen very close to $1$ performs reasonably well. The benefits of simulated annealing are its easy implementation and the design of the algorithm which allows getting out of local optima.

\begin{algorithm}
	\caption{Fr\'{e}chet mean of a sample of ranked unlabelled trees via simulated annealing \label{alg:sa}}
	\label{alsf}
	\begin{algorithmic}
		\REQUIRE $T_1, \dots T_m$ sample of ranked unlabelled trees, starting position $T^{(0)}$, initial temperature $R_0 > 0$, decay parameter $\alpha \in (0,1)$.
		
		\STATE Define energy function $E(T) = \sum_{i=1}^m {d(T, T_i)^2}$.  \textit{($d$ is a metric defined in Section \ref{sec:prelim})} 
		
		\STATE $k \leftarrow 0$
		
		\REPEAT 
		\STATE $S \leftarrow \text{random neighbour of } T^{(k)}$  \textit{(generate proposal using Markov
			chains of Definitions \ref{def:mc_defn} or \ref{def:mc_defn_hetero}, for isochronous and heterochronous trees respectively)}
		\IF{$\texttt{runif}(1) < \exp\left(-\dfrac{E(S) - E(T^{(k)})}{R_k} \right)$} 
		\STATE $T^{(k+1)} \leftarrow S$  \textit{(accept)}
		\ELSE
		\STATE $T^{(k+1)} \leftarrow T^{(k)}$ \textit{(reject)}
		\ENDIF
		\STATE $R_{k+1} \leftarrow \alpha R_k$ \textit{(reduce temperature)}
		\STATE $k \leftarrow k+1$
		\UNTIL{convergence of $T^{(k)}$}
		
	\end{algorithmic}
\end{algorithm}

Using the result shown in Section \ref{ss:mip}, we can replace the energy function $E(T) = \sum_{i=1}^m {d(T, T_i)^2}$ in the case of the $d^2$ metric by $E(T) = \|F-M\|^2$ where $F$ is the $\mathbf{F}$-matrix corresponding to $T$ and $M$ is the Euclidean mean of the $\mathbf{F}$-matrices corresponding to the $T_i$.

We note that for both isochronous and heterochronous genealogies, our algorithm first finds the average coalescent times and then finds the tree topology via the SA algorithm just described. In the case of heterochronous genealogies, the Markov chain used is conditioned on a fix set of coalescent and sampling times. We analyse the computational performance of the SA algorithm in the appendix.


\section{Measures of dispersion} \label{sec:var}

In this section we define and discuss three notions for quantifying uncertainty or dispersion in a distribution or a sample of ranked tree shapes (or ranked genealogies).

The \textbf{Fr\'{e}chet variance} is a natural measure of dispersion for arbitrary probability metric spaces. It measures the concentration around the Fr\'{e}chet mean. While the Fr\'{e}chet mean is an object of the metric space, the Fr\'{e}chet variance is a simple scalar summary of spread. 

Let $\mu$ be a probability measure on $\mathcal{T}_n$ ($\mathcal{F}_n$).  The \textit{Fr\'{e}chet variance}  of $y \sim \mu$ with respect to the metric $d$ is defined as follows:

\begin{equation}
	V = \underset{y \in \mathcal{T}_n}{\sum}{d(y, \bar{T})^2 \cdot \mu(y)}, \quad \text{where} \quad \bar{T} = \underset{x \in \mathcal{T}_n}{\operatorname{argmin}} {\underset{y \in \mathcal{T}_n}{\sum}{d(x, y)^2 \cdot\mu(y)}}
\end{equation}

Let $y_1, \dots y_m \in \mathcal{T}_n$ be a random sample of ranked tree shapes.  Then the \textit{sample Fr\'{e}chet variance} is given by  

\begin{equation}
	V_{m} = \dfrac{1}{m}\sum_{i=1}^m {d(y_i, \bar{T})^2}, \quad \text{where} \quad \bar{T} = \underset{x \in \mathcal{T}_n}{\operatorname{argmin}} {\sum_{i=1}^m {d(x, y_i)^2}}
\end{equation}

Similarly, the Fr\'{e}chet variance of $G \sim d \nu$ can be obtained by integrating over the probability space of branching event times and ranked tree shapes.

Another scalar measure of dispersion is \textbf{entropy} \citep{mezardmontanari2009}.  Entropy is a function of the probability measure only and it does not depend on the metric $d$.  A measure with zero entropy is concentrated on a single point, and a large entropy indicates greater uncertainty in the position of a random variable with the underlying measure.

Let $\mu$ be a probability measure on $\mathcal{T}_n$.  The \textit{entropy} of the space is given by 

\begin{equation}
	H = - \underset{y \in \mathcal{T}_n}{\sum} {\mu(y)\cdot \log\left[\mu(y)\right]}
\end{equation}

The discrete distribution with maximum entropy is the uniform distribution. While entropy is meaningful at the population level, it does not provide a meaningful value for a given sample. In Figure \ref{fig:frechet_all} we compare entropy with Fr\'{e}chet variance for a particular class of probability models on ranked tree shapes.

In many applications, a single mean value and the variance are not enough for summarizing the distribution. \textbf{Interquartiles and credible intervals} are typically used to inform about the concentration of the distribution around the central value for real-valued distributions. The analogues in the space of ranked tree shapes are defined as follows: 

A \textit{central interquartile ball} of ranked tree shapes of level $1-\alpha$, $\alpha \in [0,1]$ is the set 
\begin{equation*}
	B_{\varepsilon}(\bar{T}) \overset{d}{=} 
	\{y \in \mathcal{T}_n : d(y, \bar{T}) \leq \varepsilon \}, 
\end{equation*}
where $\varepsilon \text{ is the smallest } \epsilon \geq 0 \text{ such that } \text{P}(B _{\epsilon}(\bar{T}))\geq 1-\alpha$, where $\bar{T}$ is a point estimate.

Similarly, a level $1-\alpha$ \textit{credible ball} is the set $B_{\varepsilon}(\bar{T})$ where $\varepsilon \text{ is the smallest } \epsilon \geq 0 \text{ such that } \text{P}(B _{\epsilon}(\bar{T})\mid \mathcal{D})\geq 1-\alpha$.

Although credible and interquartile balls can be defined in a meaningful way in terms of the $d_{1}$ and $d_{2}$ distances to the mean value, summarizing meaningful boundaries of the sets in this space is challenging. One attempt to meaningfully define boundaries for credible sets and interquartile sets is through a total ordering on $\mathcal{T}_n$.  Such an ordering roughly corresponds to a one-dimensional projection of the space, and the boundaries of a credible set can be taken to be the extreme points of the set with respect to the ordering.

We propose an ordering based on the distance to a reference ranked tree shape, for example the Fr\'{e}chet Mean $T_K$ of the Kingman model \citep{Kingman1982}, which is a commonly used neutral model for evolution, together with a lexicographic order in the $\mathbf{F}$-matrix representation.  We construct our ordering in such a way that the \textit{most unbalanced} tree $T_{\text{unb}}$ (also called caterpilar tree) and the \textit{most balanced} tree $T_{\text{bal}}$ are two poles of the order, and the Fr\'{e}chet mean $T_K$ lies somewhere in between those two poles.

To be more precise, the most unbalanced tree denoted here as $T_{\text{unb}} \in \mathcal{T}_{n}$ is the only ranked tree shape with one cherry, i.e., one internal node that subtends two leaves. The unbalanced tree with $n=5$ leaves is depicted in the last column of Figure \ref{fig:fmat5}.

\begin{proposition} The ranked tree shape at maximum $d_{1}$ and $d_{2}$ distances to the unbalanced tree $T_{\text{unb}}\in \mathcal{T}_{n}$ is $T_{\text{bal}}$
	with the following $\mathbf{F}$-matrix encoding:
	\begin{equation}\label{eq:fbal}
		F^{(\text{bal})}=\begin{bmatrix}
			2 & \\
			1 & 3 \\
			0 & 2 & 4 \\
			0 & 1 & 3 & 5 \\
			0 &  0 & 2 & 4 & 6 \\
			\vdots & \vdots & \vdots & \vdots &  \vdots &\ddots & \\
			0 & 0 & 0 & 0 & \cdots & n-2 & n\\
		\end{bmatrix},
	\end{equation}
	that is, $F_{i,j}^{(\text{bal})}=\max\{0,2j-i+1\}$ for $i=1,\ldots,n-1$ and $j=1,\ldots,i$, and  and $F^{(\text{bal})}_{i,j}=0$ for $i=1,\ldots, n-1$ and $j=i,\ldots,n-1$ (upper triangle).
\end{proposition}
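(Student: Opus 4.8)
The plan is to reduce the statement to an entrywise extremal property of $\mathbf{F}$-matrices and then exploit the coordinatewise monotonicity of the $L_1$ and $L_2$ distances. First I would record that, by Theorem~\ref{prop:bijection}, every $F \in \mathcal{F}_n$ shares the same diagonal $F_{i,i} = i+1$ and subdiagonal $F_{i+1,i} = i$, so these positions contribute nothing to any distance and all variation lives strictly below the subdiagonal. The column inequalities in constraints~2 and~3, namely $\max\{0, F_{i-1,j}-1\} \le F_{i,j} \le F_{i-1,j}$, say exactly that every column is non-increasing with decrements of $0$ or $1$ and stays non-negative. Iterating these from the fixed value $F_{j+1,j} = j$ downward gives, for each feasible $F$ and each $i > j$, the two-sided bound $\max\{0, 2j-i+1\} \le F_{i,j} \le j$.

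Next I would identify the two bounding matrices with the two named trees. The lower envelope $\max\{0, 2j-i+1\}$ is by definition $F^{(\text{bal})}$, so the displayed formula \eqref{eq:fbal} is precisely the entrywise minimum of $\mathcal{F}_n$; the upper envelope $F^{(\text{unb})}_{i,j} = j$ (for $j<i$, together with the forced diagonal) is the entrywise maximum. It then remains to check that each envelope is itself a legitimate $\mathbf{F}$-matrix and that the upper one is the caterpillar $T_{\text{unb}}$: feasibility is a direct substitution into constraint~3, and one verifies that the maximal matrix has a single cherry (equivalently, one reads off the caterpillar's $\mathbf{F}$-matrix directly from its persistent backbone and sees that every below-subdiagonal entry attains the maximal value $j$). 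This sandwiching, $F^{(\text{bal})}_{i,j} \le F_{i,j} \le F^{(\text{unb})}_{i,j}$ for all $F \in \mathcal{F}_n$ and all $(i,j)$, is the heart of the argument.

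With the sandwich in hand the distance computation is immediate and simultaneous for both metrics. Since $F_{i,j} \le F^{(\text{unb})}_{i,j}$ everywhere, each absolute value collapses: $d_1(F, F^{(\text{unb})}) = \sum_{i,j}\bigl(F^{(\text{unb})}_{i,j} - F_{i,j}\bigr) = \sum_{i,j} F^{(\text{unb})}_{i,j} - \sum_{i,j} F_{i,j}$, so maximizing the $d_1$ distance is the same as minimizing $\sum_{i,j} F_{i,j}$, which by the lower sandwich is attained precisely at $F = F^{(\text{bal})}$, and uniquely so because any other feasible matrix is entrywise $\ge F^{(\text{bal})}$ with at least one strict inequality. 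For $d_2$ I would argue termwise: the gap $g_{i,j} := F^{(\text{unb})}_{i,j} - F_{i,j}$ ranges over $[0,\, F^{(\text{unb})}_{i,j} - F^{(\text{bal})}_{i,j}]$, and $t \mapsto t^2$ is increasing there, so each squared term $g_{i,j}^2$ is maximized exactly when $F_{i,j} = F^{(\text{bal})}_{i,j}$; since $F^{(\text{bal})}$ realizes every coordinate's maximal gap simultaneously and is itself feasible, it maximizes $d_2(F, F^{(\text{unb})})^2$ as well.

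I expect the only real obstacle to be the bookkeeping in establishing the sandwich — specifically, confirming that the coordinatewise-maximal matrix is both feasible and equal to $F^{(\text{unb})}$, rather than merely an a priori upper bound on $\mathcal{F}_n$; the remaining monotonicity arguments and the reduction of $d_1$ and $d_2$ to coordinatewise gaps are routine. A pleasant byproduct of this route is that it explains why the \emph{same} tree $T_{\text{bal}}$ maximizes both distances: the conclusion only uses that the metric is coordinatewise monotone in the gaps $|F_{i,j} - F^{(\text{unb})}_{i,j}|$, a property shared by every $L_p$ distance.
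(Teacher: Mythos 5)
Your proposal is correct and follows essentially the same route as the paper's proof: both rest on the entrywise sandwich $F^{(\text{bal})}_{i,j} \le F_{i,j} \le F^{(\text{unb})}_{i,j}$ for all $F \in \mathcal{F}_n$, deduced from the monotonicity constraints of Theorem~\ref{prop:bijection}, followed by coordinatewise monotonicity of the $L_1$ and $L_2$ gaps. Your version is in fact somewhat more explicit than the paper's — you derive the lower envelope $\max\{0,2j-i+1\}$ by iterating the column constraints from the fixed subdiagonal and you flag the need to verify that both envelopes are themselves feasible, steps the paper asserts without detail.
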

\begin{proof}
	First note that the most unbalanced tree has the following $\mathbf{F}$-matrix encoding:
	\begin{equation}
		F^{(\text{unb})}=\begin{bmatrix}
			2 \\
			1 & 3 \\
			1 & 2 & 4 \\
			1 & 2 & 3 & 5 \\
			1 &  2 & 3 & 4 & 6\\
			\vdots & \vdots & \vdots & \vdots &  \vdots &\ddots & \\
			1 & 2 & 3 & 4 & \cdots & n-2 & n\\
		\end{bmatrix},
	\end{equation}
	that is, $F^{(\text{unb})}_{i,j}=j$ for $i=2,\ldots,n-1$, and  $j=1,\ldots,i-1$ and $F^{(\text{unb})}_{i,j}=0$ for $i=1,\ldots, n-1$ and $j=i+1,\ldots,n-1$ (upper triangle) and $F^{(\text{unb})}_{i,i}=i+1$ for $i=1,\ldots,n-1$ (diagonal).  Further, note that for any $F\in \mathcal{F}_{n}$, the values in each row are non-decreasing to the right, i.e. $F_{i,j} \leq F_{i,j+1}$ (constraint 3, Theorem \ref{prop:bijection})  and the values in each column are non-increasing, i.e. $F_{i,j} \geq F_{i+1,j}$ (constraints 2 and 3, Theorem \ref{prop:bijection}). Second, $F^{(\text{unb})}_{i,j}\geq F_{i,j}$ for all $F \in \mathcal{F}_{n}$ and $i,j \leq n-1$, that is, $F^{(\text{unb})}$ has the largest $d_{1}$ and $d_{2}$ norms. Third, note that $F^{(\text{bal})}_{i,j}\leq F_{i,j}$ for all $F \in \mathcal{F}_{n}$ and $i,j \leq n-1$, that is, $F^{(\text{bal})}$ has the smallest $d_{1}$ and $d_{2}$ norms. Moreover, $F^{(\text{bal})}_{i,j} \leq F^{(\text{unb})}_{i,j}$ for all $i,j \leq n-1$ and the pair: $T_{\text{unb}}$ and $T_{\text{bal}}$ have the largest $d_{1}$ and $d_{2}$ distances among all pairwise distances in $\mathcal{T}_{n}$.
\end{proof}

We note that the ranked tree shape corresponding to $F^{(\text{bal})}$ is called the most balanced ranked tree shape for ease of interpretation. However, there may be arguably many more similarly balanced trees in the population.

We will now define the \textit{signed-distance} as the distance to a reference ranked tree shape $\bar{T}$ with an additional sign depending on whether the tree is closer to the most unbalanced or to the most balanced tree.

\begin{definition}(The signed-distance function to $\bar{T}$). Let $f(x): \mathcal{T}_{n} \rightarrow \mathbb{R}^{+}$ and $\bar{T} \in \mathcal{T}_{n}$ a reference ranked tree shape, such that 
	\begin{equation}
		f(x)=\begin{cases}
			-d(x,\bar{T}) & \text{ if } d(x,T_{\text{unb}})\leq d(x,T_{\text{bal}})\\
			d(x,\bar{T}) & \text{ if } d(x,T_{\text{unb}})> d(x,T_{\text{bal}})\\
		\end{cases}
	\end{equation}
\end{definition}

The signed-distance induces a partial order on $\mathcal{T}_{n}$, however since many ranked tree shapes can have the same signed distance to $\bar{T}$, many pairs of trees will be incomparable . We will say that  $T_{1} \sim T_{2}$ belong to the same equivalence class if $f(T_{1})=f(T_{2})$. When a set of ranked tree shapes belong to the same equivalence class, we will order the ranked tree shapes in the equivalence class according to their lexicographic order using a vectorized $F$ representation as follows.

\begin{definition}(Lexicographic order).\label{def:lex_ord} Let $F^{(1)}=(F^{(1)}_{1,1},F^{(1)}_{2,1},\ldots,F^{(1)}_{1,n-1}, F^{(1)}_{2,2},F^{(1)}_{2,3},\ldots,F^{(1)}_{n-1,n-1})$ be the column-vectorized representation of $T_{1} \in \mathcal{T}_{n}$, and let $F^{(2)}=(F^{(2)}_{1,1},F^{(2)}_{2,1},\ldots,F^{(2)}_{1,n-1}, F^{(2)}_{2,2},F^{(2)}_{2,3},\ldots,F^{(2)}_{n-1,n-1})$ be the column-vectorized representation of $T_{2} \in \mathcal{T}_{n}$. We say that $T_{1} \preceq_{\text{lex}} T_{2}$ if $F^{(1)}=F^{(2)}$ or the first non-vanishing difference $F^{(1)}_{i}-F^{(2)}_{i}$ is positive for $i=1,\ldots,m$, $m=\frac{n(n-1)}{2}$. 
\end{definition}

For example $T^{(\text{unb})} \preceq_{\text{lex}} T^{(\text{bal})}$ and $T^{(\text{unb})} \preceq_{\text{lex}} T \preceq_{\text{lex}} T^{(\text{bal})}$ for any $T \in \mathcal{T}_{n}$. Although the lexicographic order is a total order, we propose to order all ranked tree shapes in the space according to their signed distance to $T_{K}$ and to only use the lexicographic order within equivalence classes as follows.

\begin{definition}\label{def:order} We say that $T_{1} \preceq T_{2}$ if $f(T_{1})< f(T_{2})$ or if $f(T_{1})=f(T_{2})$ and $T_{1} \preceq_{\text{lex}} T_{2}$.
\end{definition}

\begin{proposition} The order induced by $\preceq$ of Definition \ref{def:order} on $\mathcal{T}_{n}$ is a total order.
\end{proposition}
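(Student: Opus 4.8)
The plan is to verify directly that $\preceq$ satisfies the four defining properties of a total order on $\mathcal{T}_n$: reflexivity, antisymmetry, transitivity, and totality (comparability). The essential observation is that $\preceq$ is the lexicographic composition of two orders---first the real-valued key $f$ with the natural order on $\mathbb{R}$, and then the order $\preceq_{\text{lex}}$ used to break ties---so the argument reduces to the standard fact that refining the ties of a total preorder by a genuine total order yields a total order. I would first record the two ingredients I lean on: that $f:\mathcal{T}_n\to\mathbb{R}$ is a well-defined function (the two cases in its definition are mutually exclusive and exhaustive, so each ranked tree shape has a single signed distance), whence comparison by $f$ is governed by the trichotomy of $<$ on $\mathbb{R}$; and that $\preceq_{\text{lex}}$ is itself a total order on $\mathcal{T}_n$, as already noted after Definition \ref{def:lex_ord}. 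The antisymmetry of $\preceq_{\text{lex}}$ is where the bijection of Theorem \ref{prop:bijection} enters: distinct trees have distinct $\mathbf{F}$-matrices, hence distinct column-vectorizations, so $F^{(1)}=F^{(2)}$ forces $T_1=T_2$.

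Reflexivity and totality I would dispatch immediately. For reflexivity, $f(T)=f(T)$ and $T\preceq_{\text{lex}}T$, so $T\preceq T$ by Definition \ref{def:order}. For totality, given any $T_1,T_2$, the trichotomy for $f(T_1)$ versus $f(T_2)$ yields $T_1\preceq T_2$ when $f(T_1)<f(T_2)$ and $T_2\preceq T_1$ when $f(T_1)>f(T_2)$; in the remaining case $f(T_1)=f(T_2)$, the totality of $\preceq_{\text{lex}}$ supplies a comparison, which $\preceq$ inherits.

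For antisymmetry, I would assume $T_1\preceq T_2$ and $T_2\preceq T_1$ and rule out the strict branches: from $T_2\preceq T_1$ we always have $f(T_2)\le f(T_1)$, which contradicts $f(T_1)<f(T_2)$, and symmetrically, so both relations must fall into the tie branch $f(T_1)=f(T_2)$ with $T_1\preceq_{\text{lex}}T_2$ and $T_2\preceq_{\text{lex}}T_1$; antisymmetry of $\preceq_{\text{lex}}$ then gives $T_1=T_2$. Transitivity I would handle by a short case split on whether each hypothesis holds via the strict-$f$ branch or the tie branch: a strict inequality $f(T_1)<f(T_2)$ propagates through $f(T_2)\le f(T_3)$ to $f(T_1)<f(T_3)$, and when all three keys coincide, transitivity of $\preceq_{\text{lex}}$ carries the conclusion. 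I do not expect a genuine obstacle here; the only point requiring care is not to conflate the total preorder induced by $f$ alone---which is \emph{not} antisymmetric, since many trees share a signed distance---with the refined order $\preceq$. Antisymmetry is recovered precisely because the tie-breaking order is a bona fide total order, underwritten by the bijection of Theorem \ref{prop:bijection}.
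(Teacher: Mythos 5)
Your proposal is correct and follows essentially the same route as the paper's proof: antisymmetry is reduced to $F^{(1)}=F^{(2)}$ via the tie-breaking lexicographic order and then to $T_1=T_2$ via the bijection of Theorem~\ref{prop:bijection}, while transitivity and comparability are inherited from $<$ on $\mathbb{R}$ and from $\preceq_{\text{lex}}$. You simply spell out the reflexivity, totality, and case analyses that the paper leaves implicit.
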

\begin{proof}
	To show antisymmetry note that the only way $T_{1} \preceq T_{2}$ and $T_{2} \preceq T_{1}$ is that $f(T_{1})=f(T_{2})$ and $T_{1} \preceq_{\text{lex}} T_{2}$ and $T_{1} \preceq_{\text{lex}} T_{2}$. This occurs only if $F^{(1)}=F^{(2)}$. The bijection of Theorem~\ref{prop:bijection} then implies that  $T_{1} = T_{2}$. Transitivity and convexity follow directly from the transitivity and convexity of $<$ and $\preceq_{\text{lex}}$.
\end{proof}

We note that $\preceq_{\text{lex}}$ is not the only possible lexicographic order; for example a row-vectorized representation of $T\in \mathcal{T}_{n}$ can be replaced in definition \ref{def:lex_ord} to generate another ordering. Although the lexicographic order is not a biologically meaningful order, it provides a consistent way for comparing histograms across different tree models on the same space (see for example, the third row of Figure \ref{fig:theoretical}).

Having established the $\preceq$ order, we will summarize credible balls and interquartile balls by at most four ranked tree shapes and by at least two ranked tree shapes. Let $B_{\varepsilon}(\bar{T})$ denote the interquartile or credible ball and $\bar{T}$ the Fr\'{e}chet mean of the distribution. Then the set of ranked tree shapes at the boundary of $B_{\varepsilon}(\bar{T})$ will be partitioned into two sets, one with positive signed distance to $\bar{T}$ and one with negative signed distance to $\bar{T}$. If the cardinality of the sets is greater than one, we will then summarize each set by the smallest and the largest ranked tree shape in each set according to the $\preceq$ order.


\section{Results} \label{sec:results}


\subsection{Statistical summaries of Blum-Fran\c{c}ois distributions on ranked tree shapes} \label{ss:results_dist}

We present and analyse point summaries of a large family of ranked tree shape models called the Blum-Francois $\beta$-splitting model.  After the introduction of Aldous' $\beta$-splitting model 
on cladograms (tree shapes without rankings) \citep{Aldous1996,Aldous2001}, many extensions and generalizations of this model have been proposed on different resolutions of trees, including the Blum-Fran\c {c}ois model on ranked tree shapes \citep{Sainudiin2016} and the alpha-beta splitting model \citep{Maliet2018}. 
Henceforth, we will refer to the Blum-Fran\c {c}ois model on ranked tree shapes simply as the \textbf{Blum-Fran\c {c}ois model}.

\begin{figure}[H]
	\centering
	\includegraphics[width=0.3\textwidth]{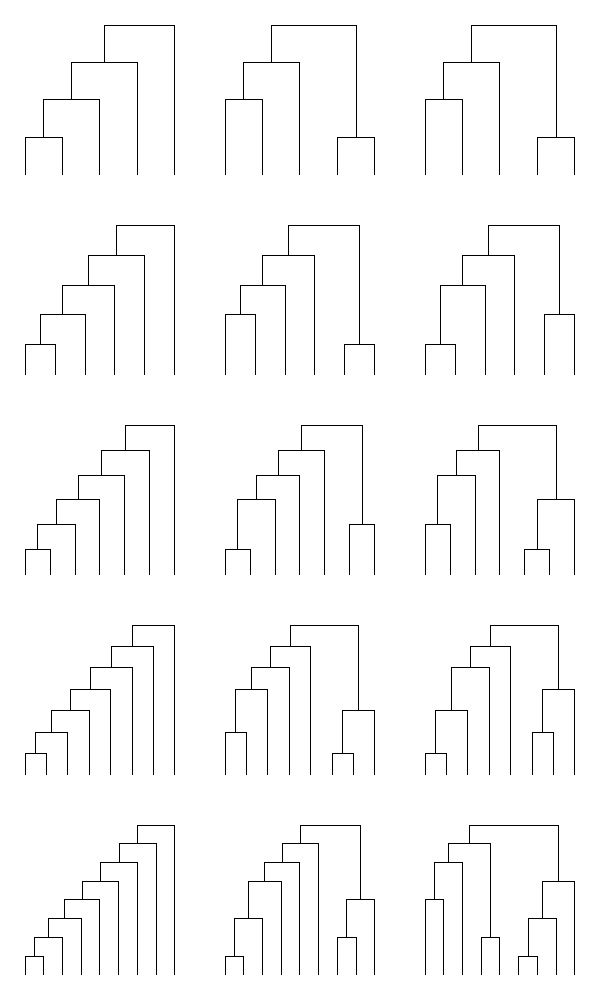}
	\caption{\textbf{Fr\'{e}chet mean under the Blum-Fran\c {c}ois $\beta$-splitting model.}  Top to bottom: $n = 5, \dots, 9$, Left to right: $\beta = -1, 0, 100.$}
	\label{fig:means_list_small_n}
\end{figure}
We start with the Blum-Fran\c {c}ois model on ranked unlabelled \textit{planar} trees (there is distinction between left and right offspring). Let $n_i^L$ and $n_i^R$ denote the number of internal nodes in the left and right subtrees below node $i$. In particular, if node $i$ is a cherry, i.e. subtends two leaves, then $n_i^L = n_i^R = 0$. Then, a ranked unlabelled planar tree with $n$ leaves has probability mass function given by:
$$P (T_{\text{planar}}) = \prod_{i=1}^{n-1} \dfrac{B(n_i^L + \beta + 1, n_i^R + \beta + 1)}{B(\beta+1, \beta+1)} $$ where $B(a,b) = \int_0^1 {x^{a-1} (1-x)^{b-1} dx}$ is the Beta function and $\beta \in [-1,\infty)$. A ranked tree shape $T$ obtained by ignoring the distinction between left and right subtrees has then the following probability mass function:
$$P (T) = 2^{n-1-c}\prod_{i=1}^{n-1} \dfrac{B(n_i^L + \beta + 1, n_i^R + \beta + 1)}{B(\beta+1, \beta+1)}$$ where $c$ is the number of cherries in $T$. The $\beta$ parameter controls the level of balancedness of the distribution. In particular, when $\beta=0$, the corresponding distribution $P(T)=2^{n-1-c}/(n-1)!$ is the coalescent distribution on ranked tree shapes (Kingman coalescent), also known as the Yule distribution.

\begin{figure}[H]
	\centering
	\includegraphics[width=\textwidth]{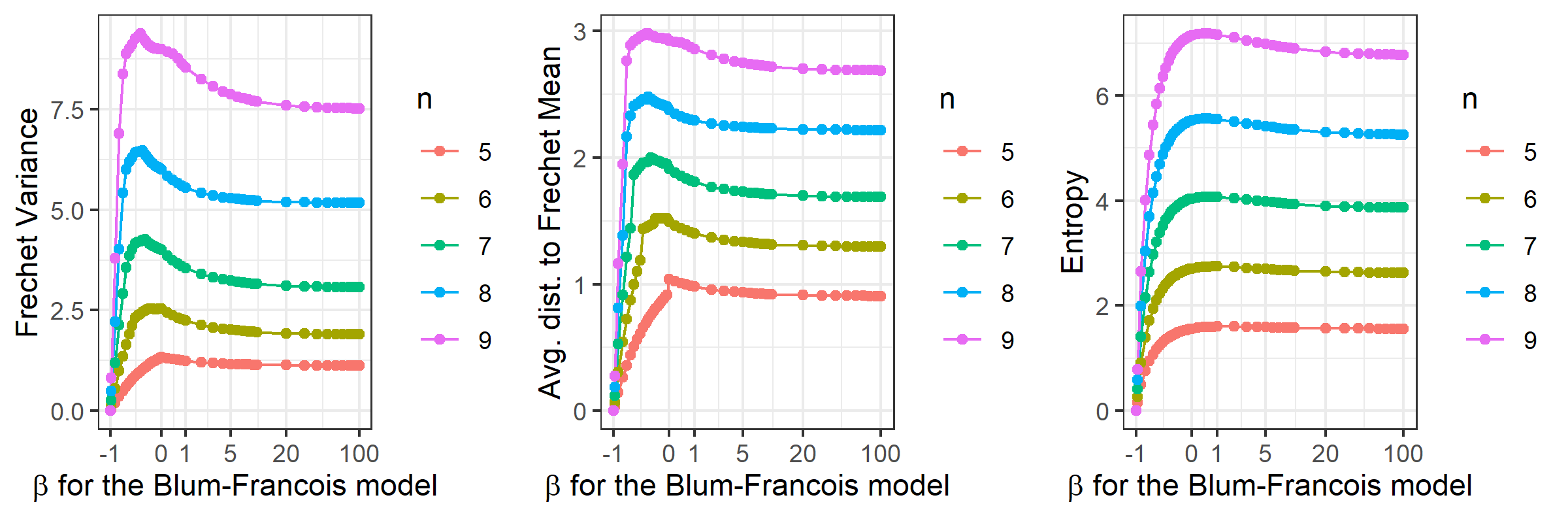}
	\caption{\textbf{Measures of dispersion.} Fr\'{e}chet variance, average distance to Fr\'{e}chet mean, and entropy, for small $n$ under the Blum-Fran\c{c}ois $\beta$-splitting model
	}
	\label{fig:frechet_all}
\end{figure}

Figure \ref{fig:means_list_small_n} shows the $d_{2}$ Fr\'{e}chet means of the ranked tree shapes distributions with $5,\ldots,9$ leaves under the Blum-Fran\c{c}ois distribution with $\beta \in \{-1,0,100\}$. For small values of $\beta$, the distribution generates unbalanced trees and for large values of $\beta$, the distribution generates balanced trees. Figure \ref{fig:frechet_all} shows the $d_{2}$ Fr\'{e}chet variance, expected distance to the Fr\'{e}chet mean and entropy for $n = 5, \dots, 9$ and $\beta$ values spaced out in $[-1, \infty]$. For $\beta>1$, the variance and entropy remain relatively constant as functions of $\beta$. The largest variance and entropy are obtained when $\beta\leq 0$. We explicitly compute the corresponding  Fr\'{e}chet mean, Fr\'{e}chet variance, and entropy for small $n$ by enumerating all the ranked tree shapes and evaluating their probability mass functions. 

\begin{figure}[H]
	\centering
	\includegraphics[height=0.8\textheight]{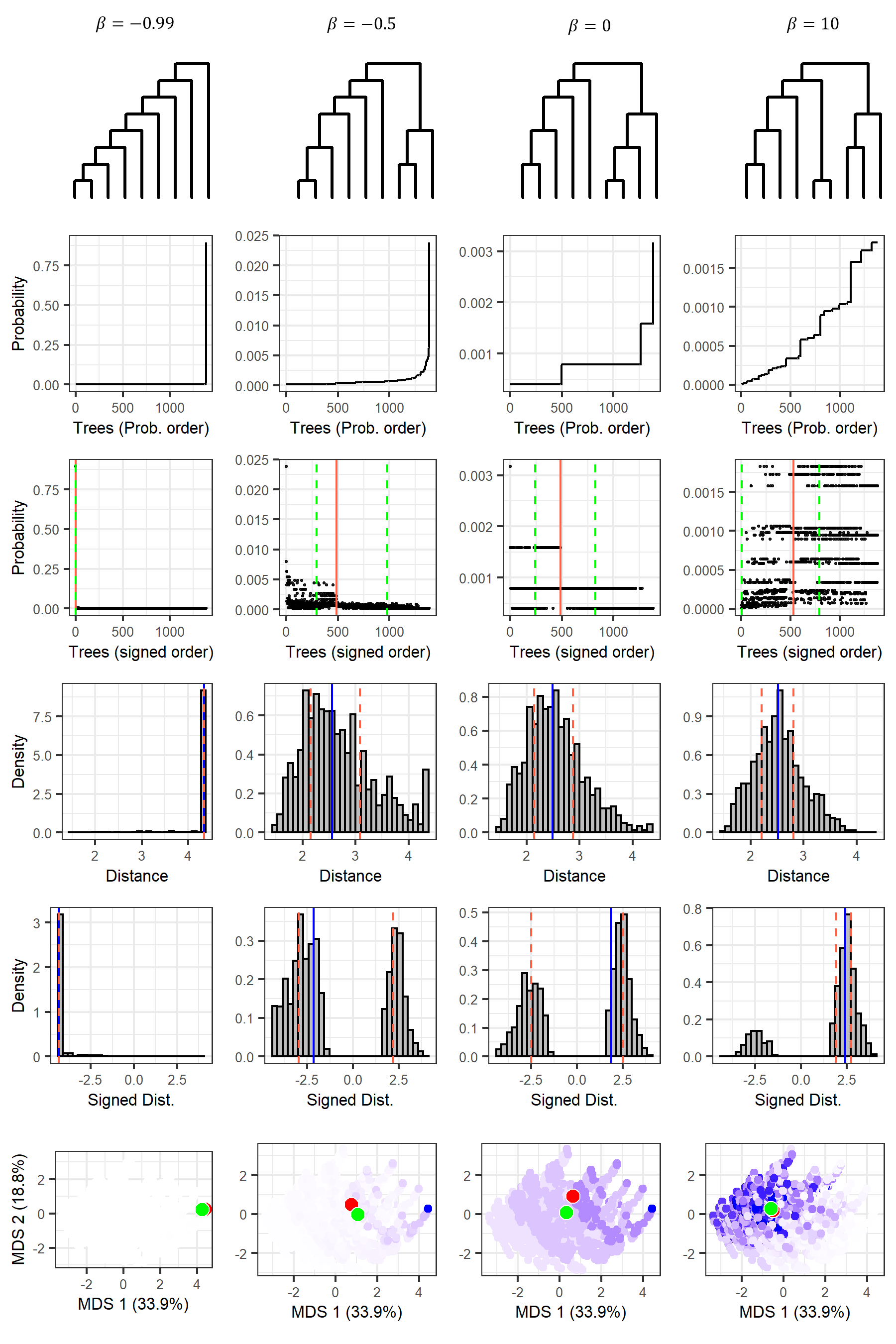}
	\caption{\textbf{Summarizing Blum-Fran\c{c}ois distributions on ranked tree shapes.} Blum-Fran\c{c}ois distributions on ranked tree shapes with $n = 9$ leaves, columns correspond to $\beta = -0.99, -0.5, 0~ \text{(coalescent)}, 10$ respectively. \emph{Row 1:} Fr\'{e}chet Mean of the distribution; \emph{Row 2:} Probability mass function of trees, arranged in increasing order of probability;  \emph{Row 3:} Probability mass function of trees, arranged in the Signed Distance order of Definition \ref{def:order}, with Fr\'{e}chet mean (red line) of distribution and interquartiles (green dashed lines); \emph{Row 4:} Histogram of distance to the Kingman Fr\'{e}chet mean, with median (blue line) and interquartiles (orange dashed lines) of the distance to the mean; \emph{Row 5:} Histogram of signed distance to the Kingman Fr\'{e}chet mean, with median (blue line) and interquartiles (orange dashed lines) of the signed distance; \emph{Row 6:} Multidimensional scaling visualization of the tree distribution, each dot represents a tree colored by its probability mass, with Fr\'{e}chet mean (red dot) and expected value (green dot).}
	\label{fig:theoretical}
\end{figure}

For ranked tree distributions with $n=100$ leaves, we simulated $N = 1000$ ranked tree shapes with the \texttt{R} package \textsc{apTreeshape} \citep{Maliet2018} and found the $d_{2}$ mean via the simulated annealing of Section \ref{sec:sa}. The resulting means are shown in Figure \ref{fig:frechet_mean_large_n} for $\beta \in \{ -1.9,-1.5,-1,0,100\}$.

\begin{figure}[H]
	\centering
	\includegraphics[width=\textwidth]{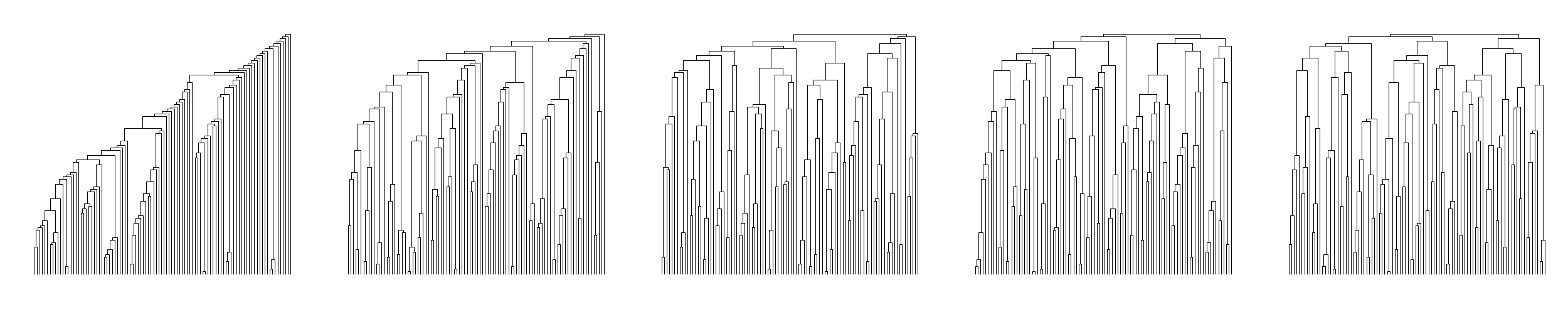}
	\caption{\textbf{Approximated Fr\'{e}chet means.} Fr\'{e}chet means are found via simulated annealing from a sample of $N = 1000$ trees with $n = 100$ leaves from $\beta$-splitting distribution.  Left to right: $\beta = -1.9, -1.5, -1, 0, 100$.  Simulated annealing with exponential cooling schedule and decay parameter $.9995$, initial temperature $1000$.}
	\label{fig:frechet_mean_large_n}
\end{figure}

Figure \ref{fig:theoretical} shows different summaries of four Blum-Fran\c{c}ois distributions on ranked tree shapes with $n=9$ leaves. The second row correspond to the probability mass function with trees (x-axis) arranged in increasing order of probability. When analyzing these plots, it is impossible to assess whether the $\beta=-0.5$ distribution puts more probability mass on unbalanced trees than balanced trees when compared to the $\beta=10$ distribution since the x-axes are not comparable. The third row shows the probability mass functions with the x-axes arranging trees in the signed distance total order. Here, all x-axes correspond to the same tree arrangements. It is now clear that the $\beta=-0.5$ distribution assigns more probability mass to unbalanced trees and the $\beta=10$ distribution assigns more mass to balanced trees. This is confirmed in the fifth row of Figure \ref{fig:theoretical}. The histogram of the signed distance to the mean is skewed to the right (balanced) when $\beta=10$. 

Row four of Figure \ref{fig:theoretical} shows the histograms of the distance to the mean. This one-dimensional summary of the tree distributions hinders whether some distributions put more probability mass to different types of trees. For example the last three histograms of the fourth column look very similar. Finally, while the MDS plots (Figure \ref{fig:theoretical}, last row) only explain about 53\% of all pairwise distances, the last panel shows the distinctions between the four probability mass distributions. Here, green dots corresponds to the points whose $\mathbf{F}$-matrix is $\E(F)$ (and do not lie in tree space) and red dots are the Fr\'{e}chet means. 

We note that the Fr\'{e}chet means of the distributions with $\beta=0.05, 0$ and $10$ are very close to each other. This indicates that a single central summary may not have a good discriminating power for detecting difference in distributions within the Blum-Fran\c{c}ois family.

\subsection{Characterization of mean Kingman tree} \label{ss:theory}

As stated in Section~\ref{ss:mip}, under $d_{2}$, the population Fr\'{e}chet mean is given by
\begin{align*}
	\bar{F}_2 = \underset{F \in \mathcal{F}_n}{\operatorname{argmin}} 
	{\sum_{k,l} \{F_{kl}^2 - 2F_{kl}M_{kl}} \},
\end{align*}
where $M_{kl}=\E(F_{k,l})$. If we know the matrix $M$, we only need to search for ranked tree shapes that are in a neighborhood of $M$ (see for example the red and green dots representing the $M$ and the Fr\'{e}chet mean in the last row of Figure \ref{fig:theoretical}). In fact, the only data input needed using \textsc{gurobi} or simulated annealing is $M$. Although there is no explicit formula for the Fr\'{e}chet mean for the distributions analyzed here, there is a explicit formula for $M$ for the Kingman/Yule coalescent distribution (Blum-Fran\c{c}ois with $\beta=0$). In Figure \ref{fig:theoretical}, we visualize $M$ and the Fr\'{e}chet mean in an MDS plot of the entire space.

\begin{theorem}
	\label{prop:kingman_Fmatrix}
	Let $F \in \mathcal{F}_{n}$ be an $\mathbf{F}$-matrix distributed according to the Blum-Fran\c{c}ois model with $\beta=0$, i.e. according to the Kingman/Yule coalescent distribution, then:
	
	1. The distribution of the $i$-th row of $F$ is independent of $n$.
	
	2. $\E[F_{ij}] = \dfrac{j(j+1)}{i}$ 
	
	3. $\operatorname{Var}[F_{ij}] = \dfrac{j^32(j+1)^2}{i^2(i-1)} + \dfrac{j(j+1)(i-2j-1)}{i(i-1)}$
	
	4. $\operatorname{Cov}[F_{i_1 j_1}, F_{i_2 j_2}] = $
	$$ \begin{cases}
		\frac{j_1(j_1+1)[j_2(j_2+2)+(i_1 + 1)(i_1-2j_2-2)]}{i_1^{2}(i_1 - 1)} &\text{when}~ i_1 = i_2, j_1 < j_2 \\
		\frac{j_2 (j_2 + 1) (j_2-i_2) (j_2-i_2+1)}{i_1 i_2 (i_2-1)} &\text{when}~ i_1 > i_2, j_1 = j_2 \\
		\frac{j_2(j_2+1)[(j_1+1)(j_1+2)+(i_1+1)(i_1-2j_1-2)]}{i_1^{2}(i_1-1)}+\frac{(i_1-i_2)j_1j_2(j_2+1)}{i_1i_2}\left[ \frac{j_1-1}{i_2-1}-\frac{j_1+1}{i_2+1}\right] &\text{when}~i_1 > i_2, j_1 > j_2 \\
		\frac{j_1(j_1+1)[(j_2+1)(j_2+2)+(i_1+1)(i_1-2j_2-2)]}{i_1^{2}(i_1-1)}+\frac{(i_1-i_2)j_1(j_1+1)j_2}{i_1i_2}\left[ \frac{j_2-1}{i_2-1}-\frac{j_2+1}{i_2+1}\right] &\text{when}~i_1 > i_2, j_1 < j_2
	\end{cases}$$
\end{theorem}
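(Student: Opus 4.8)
The plan is to realize the Yule/Kingman measure through its sequential generative description and then read off every moment from elementary ``survival of a tagged branch'' probabilities. Recall that $\mu(T)=2^{n-1-c}/(n-1)!$ is exactly the law of the ranked tree shape produced by the forward construction in which one starts from the root with a single branch and, at the $k$-th of the $n-1$ branching events, splits a uniformly chosen branch among the $k$ currently present; one checks that the number of choice-sequences producing a fixed shape is $2^{n-1-c}$, so this construction indeed samples $\mu$. The entry $F_{i,j}$ is, by the interpretation preceding Theorem~\ref{prop:bijection}, a deterministic function of the first $i$ branching events alone, since it counts how many of the $j+1$ branches present in $I_j$ are still unsplit at $I_i$. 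Because the rule for choosing the splitting branch is the same for every $n\ge i+1$, the law of the first $i$ events does not reference $n$, so the joint law of the $i$-th row $(F_{i,1},\dots,F_{i,i})$ depends on $i$ but not on $n$. This is Part 1.

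For Parts 2 and 3 I would write $F_{i,j}=\sum_{k=1}^{j+1}\xi_k$, where $\xi_k$ indicates that the $k$-th branch present in $I_j$ is never the chosen branch during the $i-j$ events separating $I_j$ from $I_i$. A single tagged branch, present among $j+t$ branches just before the $t$-th of these events, avoids being chosen with probability $(j+t-1)/(j+t)$, so by telescoping $\E[\xi_k]=\prod_{t=1}^{i-j}\frac{j+t-1}{j+t}=\frac{j}{i}$ and $\E[F_{i,j}]=(j+1)\frac{j}{i}$, which is Part 2. For two distinct tagged branches the ``both avoid being chosen'' probability telescopes to $\prod_{t=1}^{i-j}\frac{j+t-2}{j+t}=\frac{j(j-1)}{i(i-1)}$; substituting $\E[\xi_k]$, $\E[\xi_k^2]=\E[\xi_k]$ and $\E[\xi_k\xi_l]$ into $\operatorname{Var}(\sum_k\xi_k)$ and simplifying gives Part 3 (in the clean form $\frac{j(j+1)(i-j)(i-j-1)}{i^2(i-1)}$).

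For Part 4 I would compute $\E[F_{i_1 j_1}F_{i_2 j_2}]=\sum_{k,l}\E[\xi_k\xi_l]$, with $k$ ranging over the $I_{j_1}$-branches and $l$ over the $I_{j_2}$-branches, and subtract $\E[F_{i_1 j_1}]\E[F_{i_2 j_2}]$. The key organizing observation is that $F_{i,j}$ depends only on the choices in the event-window $\{j+1,\dots,i\}$; in fact the conditional law of $F_{i,j}$ given the history up to $I_j$ is a fixed law, so $F_{i,j}$ is independent of that history. Consequently the covariance vanishes whenever the two windows are disjoint, i.e. whenever $\max(j_1,j_2)\ge\min(i_1,i_2)$, and this is exactly what pins down the index ranges in the four cases. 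On overlapping windows I would (after using Part 1 to work in the smallest tree containing all four indices, and assuming without loss $i_1\ge i_2$) split the ordered-pair sum over $(k,l)$ according to the ancestral relation of the two tagged branches: either one is the ancestor of the other, in which case survival of the ancestor forces survival of the descendant by nesting, or they lie in distinct ancestral classes whose joint survival again telescopes. Each configuration $(i_1\lessgtr i_2,\ j_1\lessgtr j_2)$ then assembles into one of the four stated expressions.

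The main obstacle is the genuinely mixed cases $i_1>i_2$ with $j_1\neq j_2$ (the third and fourth cases of Part 4), where the two target intervals differ so that one factor is read off at $I_{i_2}$ while the other must survive further to $I_{i_1}$. Here the quantity coupling the two factors is the number of still-unsplit descendants, inside the overlap window, of the ancestral branch shared between the two terms; this count is random and correlated with the survival of the second tagged branch, and it is precisely this that produces the additive correction terms carrying the bracketed differences such as $\big[\frac{j_1-1}{i_2-1}-\frac{j_1+1}{i_2+1}\big]$. I expect the cleanest route is to condition on the full branch configuration at the earlier target $I_{i_2}$ and use the forward Markov property to factor the post-$I_{i_2}$ dynamics, tracking signs carefully, since (as a consistency check on indices such as $i_1=4,i_2=3,j_1=2,j_2=1$, where the covariance is $0$) the ``main'' and ``correction'' contributions can cancel, so the covariance may vanish even when the event-windows overlap.
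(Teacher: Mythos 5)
Parts 1--3 of your argument are correct and essentially the paper's own route: Part 1 is the same Markov-property observation, and for Parts 2--3 the paper likewise reduces everything to survival probabilities of tagged branches under the sequential Yule construction (it tags the two children of each internal node via an intermediate matrix $D$ for the mean, and tags the $n$ leaves backward in time via coalescent indicators $Z_{i,k}$ for the variance, whereas you tag the $j+1$ branches extant in $I_j$; these are interchangeable exchangeable-indicator decompositions, and your telescoping products $j/i$ and $j(j-1)/(i(i-1))$ are exactly the quantities that drive the paper's computation). Your closed form $\operatorname{Var}[F_{ij}]=j(j+1)(i-j)(i-j-1)/(i^{2}(i-1))$ agrees algebraically with the stated expression, whose $j^{3}2$ is a typo for $j^{2}$.

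The genuine gap is Part 4, which is where almost all of the work in the paper's proof lies. You give a plan --- sum $\E[\xi_k\xi_l]$ over pairs of tagged branches, classify by ancestral relation, condition on the configuration at the earlier row --- but none of the four cases is actually computed, and the mixed cases $i_1>i_2$, $j_1\neq j_2$ need a concrete device that your sketch does not supply: the paper couples the two rows by writing the smaller-row count as the larger-row count plus a sum of additional leaf indicators and then evaluates cross-terms such as $\operatorname{Cov}(Z_{1,j},Z_{m+1,i})$ one diagonal at a time, over several pages. Nothing in your outline certifies that your bookkeeping assembles into the four stated rational functions. A further caution: your (correct) observation that $F_{i_1,j_1}$ is independent of the entire history up to row $j_1$ forces the covariance to vanish whenever the event windows are disjoint, i.e.\ when $j_1\geq i_2$, yet the stated case-3 formula does not vanish there (for $i_1=4$, $j_1=2$, $i_2=2$, $j_2=1$ it returns $1/12$ even though $F_{2,1}\equiv 1$ is deterministic), and the printed case-1 formula disagrees with the appendix derivation (its $j_2(j_2+2)$ should be $(j_2+1)(j_2+2)$, as a direct check of $\operatorname{Cov}(F_{4,1},F_{4,2})=1/12$ confirms). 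So if you carry your plan through, expect to recover the stated expressions only on the interior index range and be prepared to state the boundary restrictions explicitly.
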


The proof can be found in Section \ref{app:proof_kingman_Fmean} 
in the appendix.

The relevance of Theorem~\ref{prop:kingman_Fmatrix} is that for one of the most popular models in population genetics: the coalescent  \citep{wakeley_coalescent_2008}, the Fr\'{e}chet mean can be obtained for any $n$, without the need for simulating a sample from the distribution as it is done in Figure \ref{fig:frechet_mean_large_n}. Moreover, given a sample of $\mathbf{F}$-matrices, the sample average (element-wise) converges almost surely to $M$ by the law of large numbers. Theorem~\ref{prop:kingman_Fmatrix} together with the Multivariate Central Limit Theorem \citep[Thm 5.4.4]{hogg2005introduction}, could be used to test whether a random sample of ranked tree shapes follows the standard coalescent distribution such as the Kingman/Yule model. 

\begin{theorem}(Central Limit Theorem for $\mathbf{F}$-matrices)
	\label{thm:clt}
	Let $F^1, \dots F^m \in \mathcal{F}_n$ be an i.i.d. sample of $\mathbf{F}$-matrices drawn from some distribution $P$.  Let $\bar{F}_m \in \mathbb{R}^{(n-1)\times(n-1)}$ be the matrix whose entries correspond to the sample average of $F^{1},\ldots,F^{m}$ (entrywise).  Then 
	\begin{equation}
		\sqrt{m}\left(\bar{F}_m - M\right) \overset{d}{\rightarrow} N(\mathbf{0}, \mathbf{\Sigma})
	\end{equation}
	where the mean $M \in \mathbb{R}^{(n-1)^2}$ is given by 
	$$\mathbf{\mu}_{ij} = {\E}_P[F_{ij}]$$ 
	and the covariance tensor $\mathbf{\Sigma} \in {\mathbb{R}^{(n-1)^2 \times (n-1)^2}}$ is given by
	$$\mathbf{\Sigma}_{ij,kl} = {\rm Cov}_P [F_{ij}, F_{kl}]$$
\end{theorem}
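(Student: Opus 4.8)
The plan is to reduce the statement to a direct invocation of the multivariate Central Limit Theorem after vectorizing the $\mathbf{F}$-matrices. First I would observe that $\mathcal{F}_n$ is a finite set, since by Theorem~\ref{prop:bijection} it is in bijection with $\mathcal{T}_n$, whose cardinality is the Euler zigzag number. Consequently every entry $F_{ij}$ is a bounded random variable: the constraints of Theorem~\ref{prop:bijection} force the row entries to be non-decreasing toward the diagonal, so $0 \le F_{ij} \le F_{ii} = i+1 \le n$ for $j \le i$, while the remaining entries are the deterministic values $i$ on the subdiagonal and $0$ in the upper triangle. In particular each $F_{ij}$ has finite second moment, so the population mean $\mu_{ij} = \E_P[F_{ij}]$ and covariance $\Sigma_{ij,kl} = \operatorname{Cov}_P[F_{ij}, F_{kl}]$ are well defined and finite, which is exactly the moment condition the multivariate CLT requires.

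Next I would vectorize. Writing $X_k = \operatorname{vec}(F^k) \in \mathbb{R}^{(n-1)^2}$ for the column-stacked version of the $k$-th matrix, the $X_1, \dots, X_m$ are i.i.d. random vectors with mean $\operatorname{vec}(M)$ and covariance matrix $\Sigma$, where $\Sigma$ is the tensor $\Sigma_{ij,kl}$ reindexed as a genuine $(n-1)^2 \times (n-1)^2$ matrix. Because the entries are bounded, $\Sigma$ has finite entries, so the hypotheses of the multivariate CLT \citep[Thm 5.4.4]{hogg2005introduction} are met and give $\sqrt{m}\bigl(\bar{X}_m - \operatorname{vec}(M)\bigr) \overset{d}{\rightarrow} N(\mathbf{0}, \Sigma)$. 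Reshaping back to matrix form is a fixed linear bijection between $\mathbb{R}^{(n-1)^2}$ and the space of $(n-1)\times(n-1)$ matrices, under which convergence in distribution is preserved by the continuous mapping theorem, so the claimed statement for $\sqrt{m}(\bar{F}_m - M)$ follows immediately.

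The only point requiring care is that $\Sigma$ is singular: the diagonal and subdiagonal entries are deterministic ($F_{ii} = i+1$, $F_{i+1,i} = i$) and the upper triangle is identically zero, so the corresponding rows and columns of $\Sigma$ vanish and the limiting Gaussian is degenerate, supported on the affine subspace of matrices agreeing with these fixed values. This is not a genuine obstacle, since the Lindeberg--L\'evy multivariate CLT holds verbatim for a singular limiting covariance; I would simply note that the convergence is to a possibly degenerate multivariate normal, and that one may restrict attention to the free coordinates $F_{ij}$ with $j < i$ (equivalently $i \ge j+2$, together with the first column) if a non-degenerate version is preferred. The argument is thus essentially bookkeeping: the substantive content, namely the explicit formulas for $M$ and $\Sigma$ under the Kingman/Yule model, resides in Theorem~\ref{prop:kingman_Fmatrix} rather than here.
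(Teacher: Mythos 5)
Your proposal is correct and follows essentially the same route as the paper: vectorize the $\mathbf{F}$-matrices as elements of $\mathbb{R}^{(n-1)^2}$, note that every entry is bounded in $[0,n]$ so all moments are finite, and invoke the multivariate CLT. Your additional remark about the degenerate (singular) limiting covariance coming from the deterministic diagonal, subdiagonal, and upper-triangular entries is a correct and worthwhile observation that the paper leaves implicit.
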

\begin{proof}
	The proof follows directly from the multivariate central limit theorem, considering the $\mathbf{F}$-matrices as elements of $\mathbb{R}^{(n-1)^2}$.  Since each entry of the $\mathbf{F}$-matrices is bounded in $[0,n]$, all expectations are finite.
\end{proof}

\begin{corollary}
	\label{thm:clt-unknown}
	Consider the setting of Theorem \ref{thm:clt}, and assume that $\mathbf{\Sigma}$ is invertible.  Let ${\hat{\mathbf{\Sigma}}} \in {\mathbb{R}^{(n-1)^2 \times (n-1)^2}}$ be the empirical covariance tensor.  We have
	\begin{equation}
		{\hat{\mathbf{\Sigma}}}^{-1/2}\sqrt{m}\left(\bar{F}_m - \mathbf{\mu}\right) \overset{d}{\rightarrow} N(\mathbf{0}, \mathbf{I})
	\end{equation}
	where $\mathbf{I} \in {\mathbb{R}^{(n-1)^2 \times (n-1)^2}}$ is the identity tensor given by $$\mathbf{I}_{ij,kl} = \mathbf{1}_{i = k, j = l}$$
\end{corollary}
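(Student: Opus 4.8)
The plan is to carry out a standard studentization argument, deriving the corollary from Theorem~\ref{thm:clt} by combining a law-of-large-numbers consistency statement for the empirical covariance with Slutsky's theorem. Throughout I would treat the $\mathbf{F}$-matrices as vectors in $\mathbb{R}^{(n-1)^2}$, exactly as in the proof of Theorem~\ref{thm:clt}, so that $\hat{\mathbf{\Sigma}}$ and $\mathbf{\Sigma}$ are genuine matrices and the symbols $\hat{\mathbf{\Sigma}}^{-1/2}$, $\mathbf{\Sigma}^{-1/2}$ denote matrix (symmetric) inverse square roots.

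First I would establish that the empirical covariance tensor is consistent, $\hat{\mathbf{\Sigma}} \overset{P}{\rightarrow} \mathbf{\Sigma}$, entrywise. This follows from the (weak or strong) law of large numbers applied to the plug-in estimator of each covariance entry ${\rm Cov}_P[F_{ij},F_{kl}]$; as noted in the proof of Theorem~\ref{thm:clt}, each $\mathbf{F}$-matrix entry is bounded in $[0,n]$, so all second and fourth moments are finite and the LLN applies without further assumptions. Next, since $\mathbf{\Sigma}$ is assumed invertible and is a covariance matrix, it is symmetric positive definite; the map $A \mapsto A^{-1/2}$ is continuous on the open set of symmetric positive-definite matrices, so by the continuous mapping theorem $\hat{\mathbf{\Sigma}}^{-1/2} \overset{P}{\rightarrow} \mathbf{\Sigma}^{-1/2}$.

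Then I would invoke Slutsky's theorem on the product $\hat{\mathbf{\Sigma}}^{-1/2}\cdot\sqrt{m}(\bar{F}_m - \mathbf{\mu})$: the first factor converges in probability to the constant matrix $\mathbf{\Sigma}^{-1/2}$ and the second factor converges in distribution to $N(\mathbf{0},\mathbf{\Sigma})$ by Theorem~\ref{thm:clt}, so the product converges in distribution to $\mathbf{\Sigma}^{-1/2}\, N(\mathbf{0},\mathbf{\Sigma})$. A linear transformation of a Gaussian is Gaussian, and a direct covariance computation gives
\begin{equation*}
	{\rm Cov}\!\left(\mathbf{\Sigma}^{-1/2}\, N(\mathbf{0},\mathbf{\Sigma})\right) = \mathbf{\Sigma}^{-1/2}\,\mathbf{\Sigma}\,\mathbf{\Sigma}^{-1/2} = \mathbf{I},
\end{equation*}
which yields the claimed limit $N(\mathbf{0},\mathbf{I})$.

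The main obstacle is the continuity step for the inverse square root and the attendant well-definedness: one must argue that $\hat{\mathbf{\Sigma}}$ is symmetric positive-definite with probability tending to one, so that $\hat{\mathbf{\Sigma}}^{-1/2}$ is eventually defined. This is handled by noting that $\hat{\mathbf{\Sigma}}\overset{P}{\rightarrow}\mathbf{\Sigma}$ together with positive-definiteness of the limit forces the smallest eigenvalue of $\hat{\mathbf{\Sigma}}$ to be bounded away from zero with high probability, so the estimator lies in the domain of continuity of $A\mapsto A^{-1/2}$ for all large $m$. Everything else is routine, and the boundedness of the $\mathbf{F}$-matrix entries makes the moment conditions automatic.
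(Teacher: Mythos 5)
Your proposal is correct and follows essentially the same route as the paper's own (very brief) proof: consistency of the empirical covariance, continuity of $M \mapsto M^{-1/2}$ on invertible covariance matrices, and the multivariate Slutsky theorem. Your additional care about $\hat{\mathbf{\Sigma}}$ being eventually positive definite is a welcome refinement of a detail the paper leaves implicit.
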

\begin{proof}
	The proof follows by the multivariate version of Slutsky's theorem, using consistency of the empirical covariance, and the fact that the function $M \rightarrow M^{-1/2}$ is continuous when $M$ is an invertible covariance matrix.
\end{proof}

\subsection{Summaries of coalescent ranked genealogical distributions} \label{ss:sim_genealogies}

In neutral (isochronous) coalescent models with variable population size, the tree topology and the distribution of branching event waiting times are independent. The ranked tree shape is distributed according to the Kingman/Yule/Blum-Fran\c{c}ois model with $\beta=0$ and the branching event times have the following conditional density:
\begin{equation}
	f(u_{i-1}\mid u_{i}, N_{e}(t))=\frac{\binom{i}{2}}{N_{e}(u_{i-1})} \exp \left\lbrace-\binom{i}{2}\int^{u_{i-1}}_{u_{i}}\frac{du}{N_{e}(u)}\right\rbrace,
\end{equation}
with $u_{n}=0$ and $N_{e}(t)$ is a non-negative function that denotes the effective population size \citep{Slatkin:1991wx}. 

We simulated 1000 ranked genealogies according to the neutral coalescent model with the following effective population size functions: \\
(1) Constant: $N_{e}(t)=10000$;\\
(2) Exponential: $N_{e}(t)=10000\exp\{-0.01t\}$; \\
(3) Logistic:
\begin{equation*}
	N_{e}(t)= \begin{cases}
		1000+\frac{9000}{1+\exp\left[6-2 (t \mod 12)\right]} & (t \mod 12)\leq 6\\
		1000+\frac{9000}{1+\exp\left[-18+2 (t \mod 12)\right]} & (t \mod 12)> 6.\\
	\end{cases}
\end{equation*}

Figure \ref{fig:frechet_Ne} shows the Multidimensional scaling representation of the three simulated distributions (using $d_{2}$), together with the Fr\'{e}chet means (stars) and medois (triangles). The corresponding means and triangles are depicted in Figure \ref{fig:frechet_Ne2}. Fr\'{e}chet mean topologies are calculated with simulated annealing and the branching event times correspond to the  sample means. Although in this case the two summaries (means and medois) are close to each other in MDS, the two genealogies can have very different branch lenghts as in the logistic simulation (last row, Figure \ref{fig:frechet_Ne2}).

\begin{figure}[H]
	\centering
	\includegraphics[width=\textwidth]{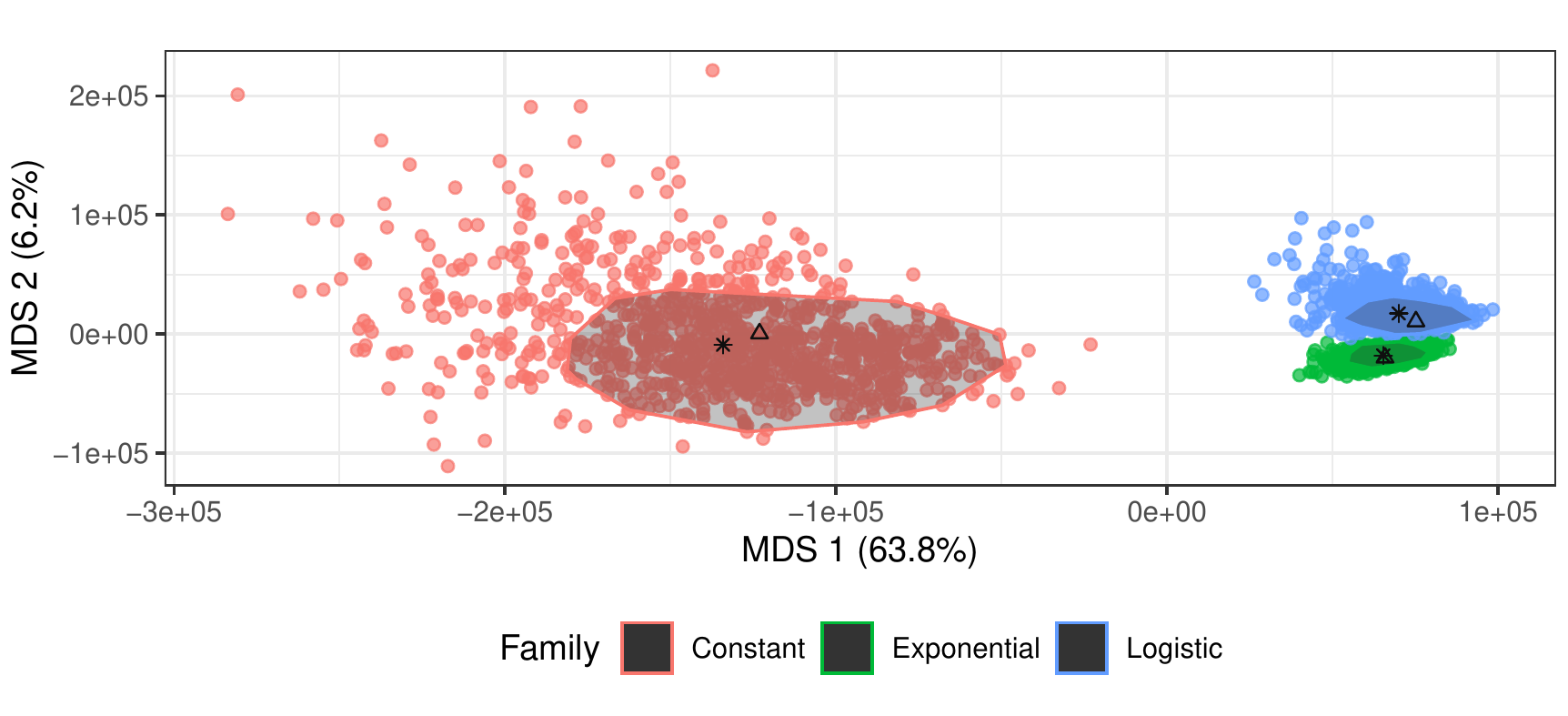}
	\caption{\textbf{MDS of ranked genealogies with different branching event times distributions.} Multidimensional scaling visualization of the coalescent distributions on ranked genealogies with varying effective population sizes: constant (red), exponential (green), and logistic (blue). Triangles denote medoids and stars denote Fre\'{e}chet means. Shaded areas represent the 50\% convex hulls.}
	\label{fig:frechet_Ne}
\end{figure}

\begin{figure}[H]
	\centering
	\includegraphics[width=0.7\textwidth]{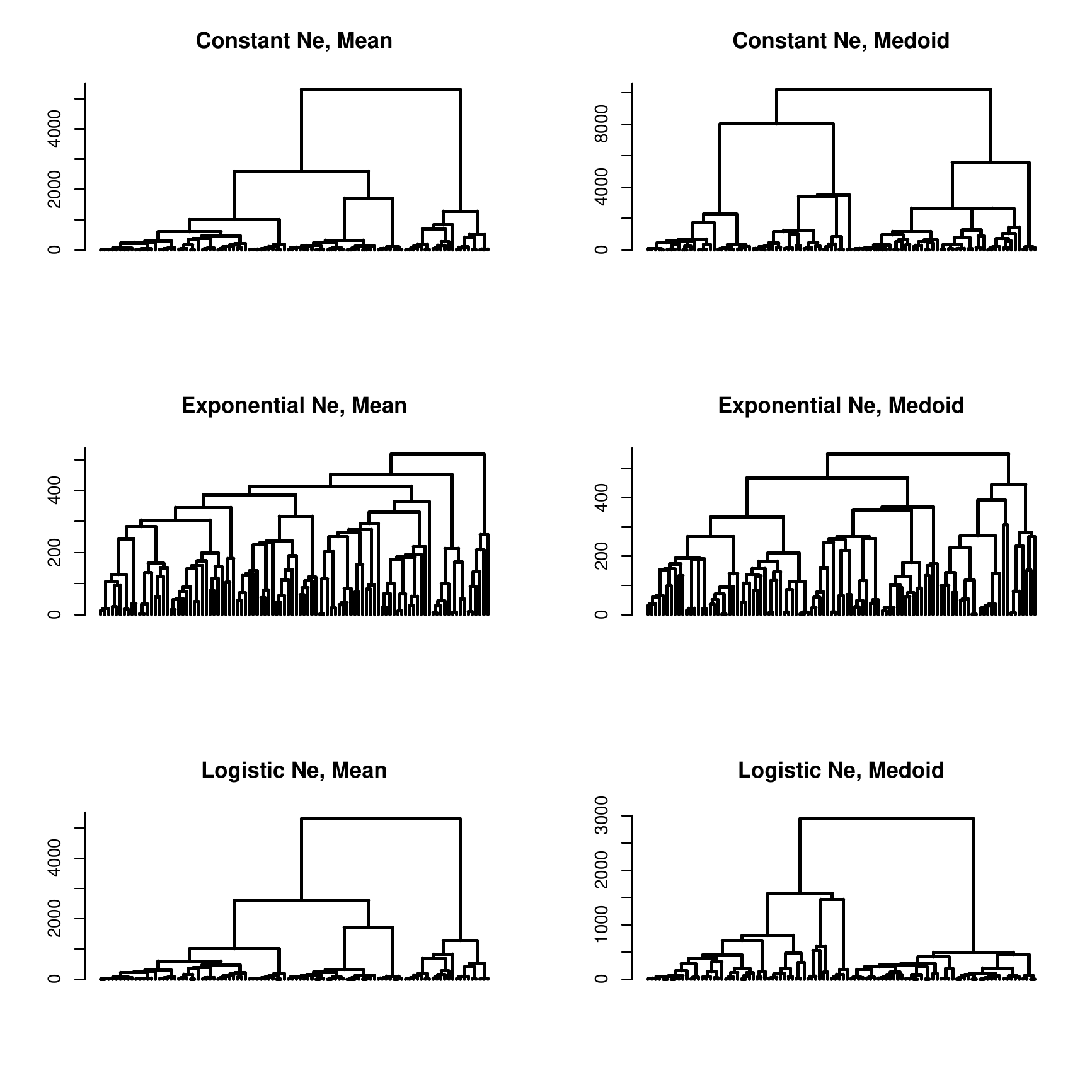}
	\caption{\textbf{Fr\'{e}chet mean vs medoid.} Fr\'{e}chet means (first column) and in-sample medoids (second column) of the three simulated coalescent distributions of genealogies with $n=100$ leaves and with varying effective population size trajectories. Fr\'{e}chet means are indicated as stars and in-sample medoids as triangles in Figure \ref{fig:frechet_Ne}.  Fr\'{e}chet means coalescent times are the sample mean coalescent times.}
	\label{fig:frechet_Ne2}
\end{figure}

\section{Analysis of SARS-CoV-2} \label{ss:covid}

The COVID-19 pandemic has had an enormous impact on all humanity. Tracking the evolution of the SARS-CoV-2 virus that causes COVID-19 has been of great importance for tracking the epidemic and for improving our understanding of the disease \citep{volz2021evaluating}. Here we analyse SARS-CoV-2 molecular sequences publicly available in the GISAID EpiCov database \citep{shu2017gisaid} from the states of California, Florida, Texas and Washington in the USA for the period of February 2020 to September 2020. 

A great challenge in molecular epidemiology of SARS-CoV-2 consists in being able to analyse all available sequences from a population. The number of available sequences exceeds the sample capacity that any Bayesian phylogenetic method (for example those implemented in BEAST \citep{Suchard2018}) can handle. The predominant approach is to subsample the available sequences and infer the posterior genealogical distribution with BEAST.  One important question then is how to assess whether the chosen sample's estimated genealogy is representative to the mean genealogy in the population of a sample of the same size as our subsample. Here, we compare subposterior distributions estimated from different random samples of 100 sequences from California and assess the stability of their Fr\'{e}chet means. 

We selected 9 samples of 100 sequences uniformly at random, obtained in California before June 1 of 2020 and 9 random samples of 100 sequences obtained after June 1, 2020.  For each study (random set of 100 samples), we generated 1,000 MCMC samples thinned every 50,000 iterations from the posterior distribution of model parameters (genealogy  and other parameters) with BEAST \citep{Suchard2018}. Details of parameters and prior distributions selected for BEAST analyses and data access acknowledgments can be found in the appendix. 

We first analyse the California trees sampled from a single posterior distribution. For ease of visualization, we show the MDS plot of all 1000 posterior genealogical samples in Figure \ref{fig:mds_single_Calrandom10} assuming the $d_{2}$ metric defined in Definition \ref{defn:d2-metric}. The posterior central genealogies are depicted as colored dots in Figure \ref{fig:mds_single_Calrandom10} and the corresponding genealogies are drawn in Figure \ref{fig:compare_summaries_Cal}. The two tree topologies of the Fr\'{e}chet means are obtained with our simulated annealing algorithm for heterochronous samples. The Fr\'{e}chet mean coalescent times are obtained as sample averages (purple tree in Figure \ref{fig:mds_single_Calrandom10}) and sample medians (cyan tree in Figure \ref{fig:mds_single_Calrandom10}). Note that the two tree topologies of the Fr\'{e}chet means are different. As mentioned in section \ref{sec:sa}, our SA algorithm finds the tree topology that optimizes the objective function conditioned on a given sequence of sampling and coalescing events.
The maximum clade credibility (MCC) tree is computed using the R package \textsc{phangorn}.  We also plot the in-sample medoid, which is the tree in the sample that minimizes the average distance to the rest of the trees in the sample. We shaded the 50\% credible convex hull around the Fr\'{e}chet mean with mean coalescent times. All central summaries are within the 50\% credible convex hull. 

\begin{figure}[H]
	\centering
	\includegraphics[width=\textwidth]{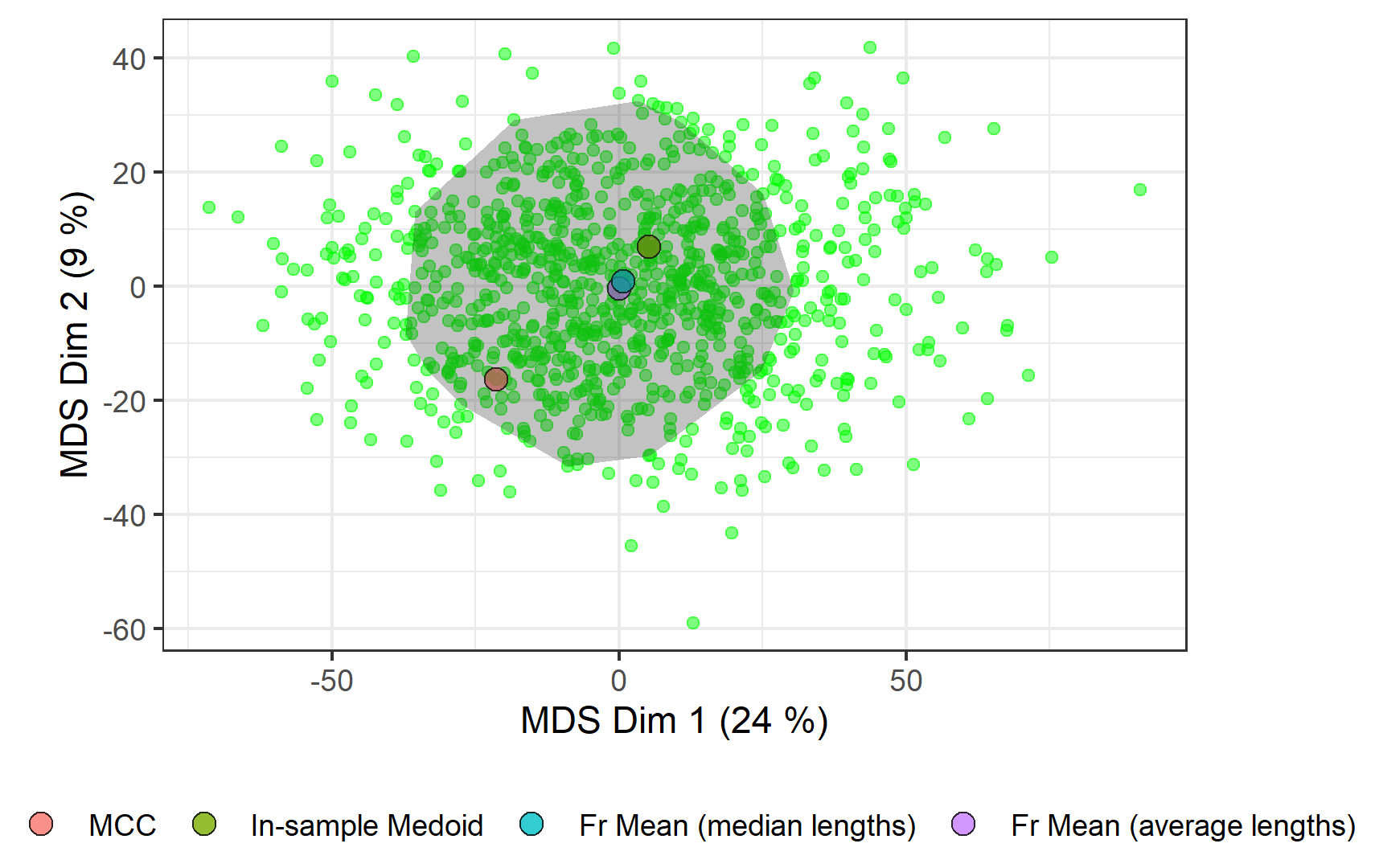}
	\caption{\textbf{MDS plot of a sample of $N = 1000$ trees from a single BEAST posterior distribution of California.}  Each dot represents a tree of $n = 100$ sequences randomly chosen among GISAID sequences from California sequenced between February and May 2020.  Shaded area corresponds to 50\% credible convex hull around the Fr\'{e}chet mean, using average branch lengths. }
	\label{fig:mds_single_Calrandom10}
\end{figure}

\textbf{Remark:} We note that both the Fr\'{e}chet mean and the in-sample medoid are designed to be central with respect to the $d_{2}$ metric using F-matrices, whereas the MCC tree is not.  Hence, it is not surprising that the MCC tree is further away from the center in the MCC plot as compared to the other point summaries.  We can notice that both Fr\'{e}chet means are closer to the center than the in-sample medoid.

\begin{figure}[H]
	\centering
	\includegraphics[width=\textwidth]{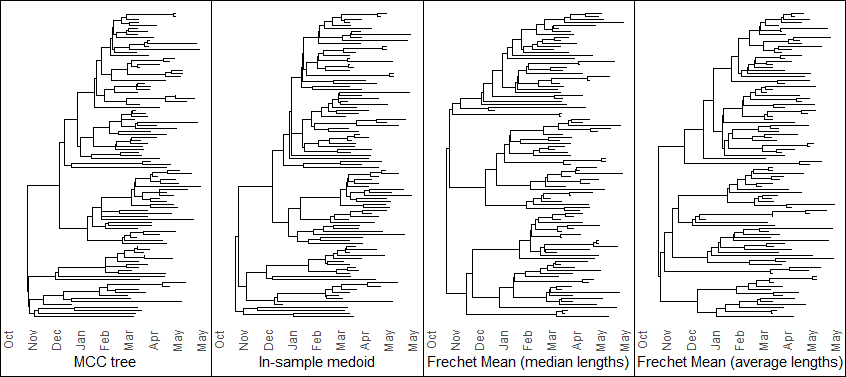}
	\caption{\textbf{Posterior summary trees of a sample of 100 sequences from California Feb-May 2020} Summary trees for a sample of $1000$ trees drawn from a single BEAST posterior using $n=100$ sequences from California. Left to right: Maximum Clade Credibility tree obtained with \textsc{phangorn}, in-sample medoid, Fr\'{e}chet mean using median coalescent times, and Fr\'{e}chet mean using average coalescent times. The four trees correspond to the four dots highlighted in Figure \ref{fig:mds_single_Calrandom10}.}
	\label{fig:compare_summaries_Cal}
\end{figure}

We now visualize multiple subposterior distributions of California together in the MDS plot of Figure \ref{fig:mds_multiple_Indrandom} using the $d_{2}$ metric. For ease of visualization we further subsampled only 20 trees (evenly spaced) from each BEAST posterior distribution to generate the MDS plot. The first group of trees on the left correspond to trees of samples sequenced between February and May 2020, and the second group on the right correspond to trees of samples sequenced between Jun-Sep 2020. The Fr\'{e}chet means are then computed with the SA algorithm for each subsample (shown as bigger dots with black border). 
As expected, the separation between samples from different time periods may reflect the fact that California experienced an increase in COVID-19 prevalence in the second semester of 2020. Moreover, samples from the second time period are more spread. The separation between the Fr\'{e}chet means of the two groups suggests that they could be good statistics in a two-sample (or $k$-sample) test for equality of distributions. We note that even for the first period, some subposterior distributions in California do not overlap.

\begin{figure}[H]
	\centering
	\includegraphics[width=0.8\textwidth]{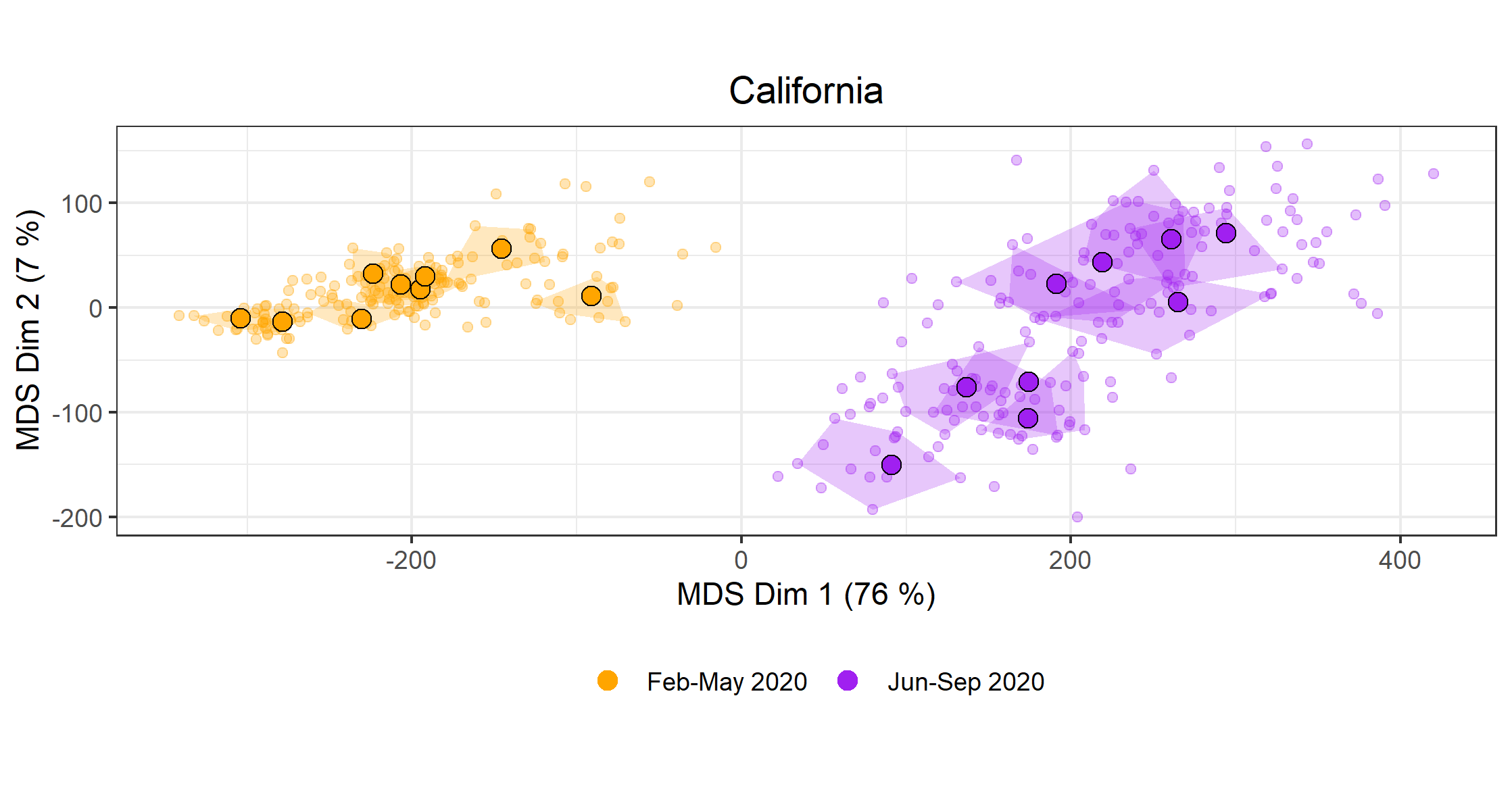}
	\vspace{-1cm}
	\caption{\textbf{MDS plot of 20 trees per each of the 18 posterior BEAST samples of trees with $n=100$ leaves of California}. The group of trees on the left correspond to trees of samples sequenced  Feb-May 2020 and the second group of trees on the right correspond to trees of samples sequenced between Jun-Sep 2020.  The Fr\'{e}chet means are calculated using the average coalescent times and are marked as red dots.}
	\label{fig:mds_multiple_Indrandom}
\end{figure}

We now compare posterior distributions across the four states: California, Washington, Texas and Florida.  Figure \ref{fig:mds_CA_FL_TX_WA} shows the MDS plot of three samples of 20 trees with $n=100$ leaves from each location. Sequences analyzed were genotyped between February and May 2020.  After computing the Fr\'{e}chet means, we downsample in order to better visualize the multidimensional scaling plot. We note that the subposteriors of Washington state are more concentrated than any of the other 3 states. Moreover, there is almost no overlap between the posterior genealogical distribution of Washington state with the posterior distributions of the other states. Indeed the reported number of confirmed COVID-19 cases in Washington state is the lowest and with smallest growth rate of all the states considered in this analysis (Figure \ref{fig:cases_CA_FL_TX_WA}). We note that in the case of Washington state we observe stable subposteriors; all subposteriors reflect the same evolutionary signal.

The posterior distributions of Florida are the second more concentrated in that their three Fr\'{e}chet means are close to each other and their convex hulls overlap. While the posterior distribution of evolutionary trees of Florida are different to those in Washington state, there is some overlap with the posteriors of California and Texas. California experienced the highest cumulative number of cases (Figure \ref{fig:cases_CA_FL_TX_WA}) and it also has the largest heterogeneity in subposterior distributions. Analyses on different subsamples in California can provide very different results. This large heterogeneity may be the result of local outbreaks sequencing efforts in the area.

Finally, while the cumulative number of reported cases in Texas is not as large as the one reported in California. Texas's subposteriors show similar heterogeneity to California, with similar posterior distributions of evolutionary histories (genealogies).

\begin{figure}[H]
	\centering
	\includegraphics[width=\textwidth]{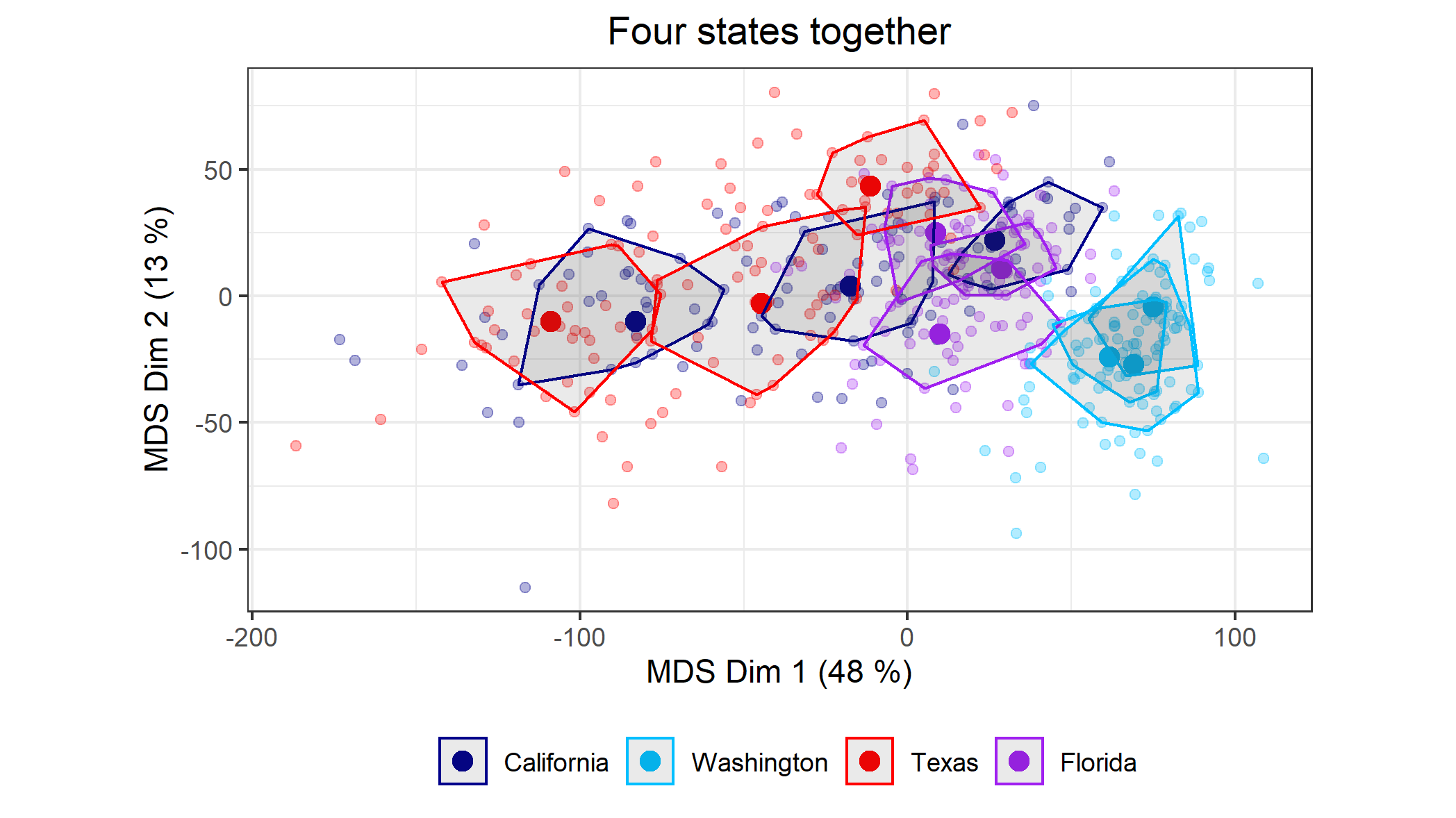}
	\caption{\textbf{MDS plot of multiple samples from California, Washington, Florida, and Texas.} Three samples of 20 trees of $n = 100$ samples randomly chosen among GISAID sequences in Feb-May 2020 per location. The Fr\'{e}chet means are calculated using average coalescent times and marked as red dots  The shaded region corresponds to 50\% credible convex hulls around the Fr\'{e}chet means.}
	\label{fig:mds_CA_FL_TX_WA}
\end{figure}

\begin{figure}[H]
	\centering
	\includegraphics[width=0.4\textwidth]{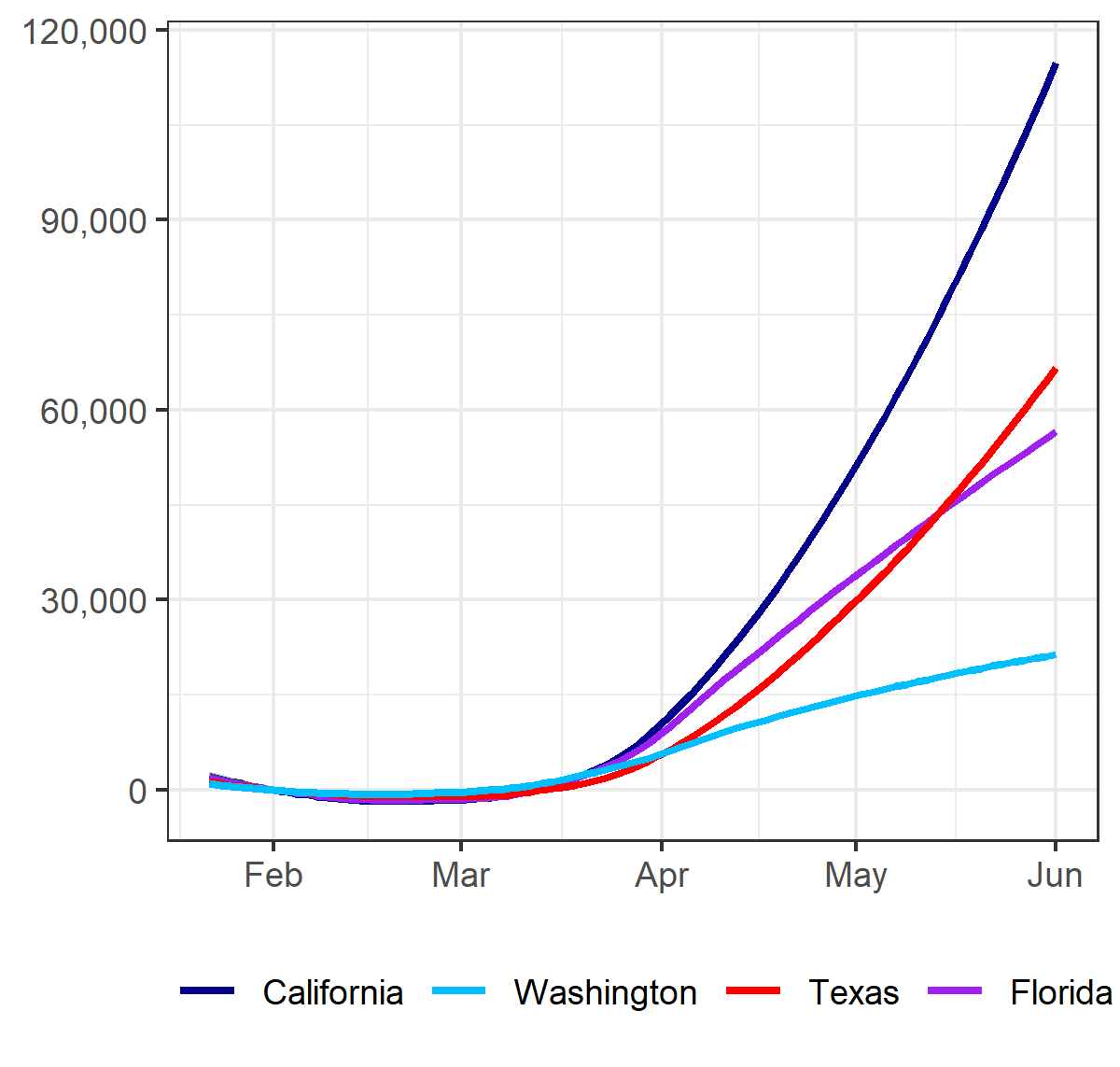}
	\caption{\textbf{Confirmed cumulative case counts in California, Washington, Florida, and Texas.} Data between January and September 2020, obtained from the COVID-19 Data Repository by the Center for Systems Science and Engineering (CSSE) at Johns Hopkins University \citep{jhu-csse}.}
	\label{fig:cases_CA_FL_TX_WA}
\end{figure}


\section{Discussion} \label{sec:discussion}

For discrete tree topologies, the Fr\'{e}chet mean ranked tree shape may not be unique. However in our experience, we found the Fr\'{e}chet means to be very close to each other. We conjecture that the set of Fr\'{e}chet means has a very small diameter and it will be explored in future research. While the non-uniqueness of the Fr\'{e}chet mean can potentially be problematic for hypotheses testing, we remark that the expected $F$ matrix, here denoted by $M$, is unique and the limit of the sample mean $\bar{F}_{m}$ by the strong law of large numbers. In this manuscript, we provided a Central Limit Theorem result for $\bar{F}_{m}$ and analytical expressions for $M$ and variance of the ranked tree shape distribution under the standard Kingman coalescent. Theoretical results of this kind for other distributions is left as future work. Similarly, analyses of several test statistics based on the distances analyzed here and Fr\'{e}chet means such as in \cite{Dubey2019Frechet} are subject of future research.

Extension of current work include defining distances and summaries for multifurcating tree shapes and phylogenetic networks. Our current implementations are publicly available in \verb|github.com/RSmayak/fmatrix|.


\section*{Acknowledgement}
We acknowledge Paromita Dubey and Jaehee Kim for useful discussions. J.A.P. is supported by National Institutes of Health Grant R01-GM-131404 and the Alfred P. Sloan Foundation.


\bibliography{treespace_ref}

\newpage

\appendix

\section{Distance calculation between heterochronous trees with different sampling events} \label{app:additional_events}.

In Section \ref{ss:dist},
we stated the definitions of  metrics for the case of isochronous ranked tree shapes and ranked unlabeled genealogies. Here, we detail the distance calculation for heterochronous trees with the same number of tips.  

Following the notation and details of Section 3 of the appendix of \citet{Kim2019}, let $T^R_1, \dots, T^R_M$ be heterochronous ranked tree shapes with $n$ tips, and $m_1, \dots m_M$ sampling events respectively.  In order to compute pairwise distances, we require these trees to be represented by $\mathbf{F}$-matrices of the same dimension.  We do this by considering the sequence of sampling and coalescent events for each tree, and inserting artificial sampling events to align the sequences of the different trees.  Note that the following formulation is done going backwards in time with time increasing from the present to the past.

Here, we modified the proposed approach to increase computation speed. Instead of inserting artificial sampling events one pair at a time, we insert all artificial sampling events needed by taking all trees at once. That is, for $i = 1,\dots, m$, let $\mathbf{E}^{(i)} = (e^{(i)}_{m_i + n - 1}, \dots, e^{(i)}_{1})$ be the vector of ordered sampling and coalescent events of tree $i$, where $e^{(i)}_{m_i + n - 1}$ denotes the most recent sampling event $(e^{(i)}_{m_i + n - 1} = s)$, assumed to occur at time $u^{(i)}_{m_i + n - 1} = 0$.  In particular, $(e^{(i)}_1 = c)$ denotes the coalescent event at time $u^{(i)}_1$ corresponding to the most recent common ancestor of the leaves in $T^R_i$.  Each $e^{(i)}_j$ is either a sampling event $(e^{(i)}_j = s)$ or a coalescent event $(e^{(i)}_j = c)$.  The event of type $c$ occurs $n-1$ times and the event of type $s$ occurs $m_i$ times in $\mathbf{E}^{(i)}$.

We then transform all $\mathbf{E}^{(i)}$ into extended vectors of higher dimension, in such a way that all vectors are of the same dimension and all coalescent events are position in the same entries for all trees. We first align all $n-1$ coalescent events among all the trees by adding empty spaces when needed.  Once all the type-$c$ events are aligned, we next align the sampling events between two successive coalescent events or between $t=0$ and the first $c$ event.  In each interval, we add additional sampling events to each tree if it has fewer than the maximum number among the different trees.  We denote these by $a$, and assign $0$ new samples to type-$a$ events.  We then construct the $\mathbf{F}$-matrices and their corresponding modified and extended $\mathbf{W}$ matrices following the construction in \citet{Kim2019}.


\section{Proof of Theorem \ref{prop:bijection}} \label{app:proof_bijection}

\begin{proof}
	This proof builds on the Proof of Theorem 1 in the appendix of \citet{Kim2019}.  We specialise to the case of isochronous trees, in which all tips are sampled at time $u_{n} = 0$.
	
	We want to define the mapping between $\mathcal{T}_n$ and $\mathcal{F}_n$.  Consider a given ranked tree shape $T$ with $n$ leaves.  We wish to construct $F$, the corresponding element of $\mathcal{F}_n$, and show that this mapping is a bijection.
	
	Remember that $\mathcal{F}_n$ is the space of $(n-1)\times (n-1)$ $\mathbf{F}$-matrices, which are lower triangular square matrices of non-negative integers that obey the following constraints:
	
	\begin{enumerate}[label=F\arabic*]
		\item The diagonal elements are $F_{i,i} = i+1$ for $i = 1, \dots, n-1$ and the subdiagonal elements are $F_{i+1, i} = i$ for $i = 1, \dots, n-2$. \label{def:Fmatr1}
		\item The elements $F_{i,1}, i=3, \dots, n-1$, in the first column satisfy $\max \{0, F_{i-1, 1} - 1\} \leq F_{i,1} \leq F_{i-1, 1}$. \label{def:Fmatr2}
		\item All the other elements $F_{i,k}, i= 4,\dots, n-1$ and $k = 2, \dots, i-2$ satisfy the following inequalities: 
		\begin{align}
			\max\{0, F_{i, k - 1}\} &\leq F_{i,k} \label{defF3:1}\\
			F_{i-1, k} - 1 &\leq F_{i,k} \leq F_{i-1, k}  \label{defF3:2} \\
			F_{i, k - 1} + F_{i-1, k} - F_{i-1,k-1} - 1 &\leq F_{i,k} \leq F_{i,k-1} + F_{i-1, k} -F_{i-1,k-1} \label{defF3:3}
		\end{align} \label{def:Fmatr3}
	\end{enumerate}
	
	We first set up some notation.  We view the tree as a branching process starting at the root, and ending at the time when there are $n$ leaves. Let each internal node of the ranked tree shape be labeled with the number of branches in the tree immediately after it bifurcates, and let $u_{i-1}$ be the time at which node $i$ bifurcates.  I.e. The root has label $2$, and the last internal node has label $n$.  Let $I_i$ be the interval between $u_{i}$ and $u_{i+1}$, i.e. the interval between the bifurcations of nodes $i-1$ and $i$.
	
	We define an intermediate matrix $D$ of the same shape before defining $F$.  Let $D$ be an $(n-1) \times (n-1)$ lower triangular matrix, where $D_{ij}$ is the number of descendants of node $j+1$ (i.e. that were born at time $u_{j}$), that have not yet bifurcated until $u_{i+1}$, i.e. the end of interval $I_i$.  We see that $D_{ij} \in \{0, 1, 2\}$, and in particular, we must also have:
	
	\begin{enumerate}[label=D\arabic*]
		\item $D_{ii} = 2, ~ i = 1, \dots {n-1}.$ 		\label{def:Dmatr1}
		\item $D_{(i-1)j} - 1 \leq D_{ij} \leq D_{(i-1)j} ~ \forall ~ i, j.$ \label{def:Dmatr2}
		\item Exactly one of the branches bifurcates at time $u_{i}, i = 1, \dots, n-1$, i.e. $\forall ~ j$, 
		\begin{equation*}
			\sum_{j=1}^{i-1}{\mathbf{1}_{\{ D_{(i-1)j} - D_{ij} = 1 \}}} = 1,
		\end{equation*}
		where $\mathbf{1}_P$ is the indicator function, which is $1$ if $P$ is true and $0$ otherwise. \label{def:Dmatr3}
	\end{enumerate}
	
	We see that this matrix $D$ is uniquely determined for any given binary ranked tree shape $T$.  We can reconstruct $T$ given the matrix $D$ by starting at the root and successively bifurcating branches according to the column of $D$ in which we observe a difference between $D_{(i-1)j}$ and $D_{ij}$.  This shows that the space $\mathcal{T}_n$ is in bijection with $\mathcal{D}_n$, the space of matrices $D$ that obey the conditions above.  We now show that this space is in bijection with $\mathcal{F}_n$.  
	
	For a given matrix $D$, let $\varphi(D) = F$ be given by 
	$$F_{ij} = \sum_{k=1}^j {D_{ik}}.$$
	The inverse mapping is $\varphi^{-1}:\mathcal{F}_n \rightarrow \mathcal{D}_n$ which is given by 
	$$D_{ij} = F_{ij} - F_{i(j-1)}$$
	considering $F_{i0} = 0$.   
	
	In terms of $T$, $F_{ij}$ denotes the total number of branches  at time $u_{j}$ that have not bifurcated by time $u_{i+1}$, i.e. the end of interval $I_i$.  It is clear that the mapping $\varphi: D \mapsto F$ is a one-to-one mapping.  We are left to show that the range of the mapping $\varphi$ is exactly $\mathcal{F}_n$.  It suffices to show that $\varphi(\mathcal{D}_n) \subseteq \mathcal{F}_n$ and $\varphi^{-1}(\mathcal{F}_n) \subseteq \mathcal{D}_n$.
	
	Consider a matrix $D \in \mathcal{D}_n$, and $F = \varphi(D)$.
	After any branching event, there is one node that has split, and the rest are as they were.  This translates to
	$ \sum_{k=1}^{i-1}{D_{ik}}= \left(\sum_{k=1}^{i-1}{D_{(i-1)k}}\right) - 1 $.  Note $F$ must satisfy $F_{ii} = i+1$, as $F_{11} = D_{11} = 2$, and 
	\begin{align*}
		F_{ii} = \sum_{k=1}^{i}{D_{ik}} 
		= D_{ii} + \sum_{k=1}^{i-1}{D_{ik}} 
		= 2 + \left(\sum_{k=1}^{i-1}{D_{(i-1)k}}\right) - 1 
		= F_{(i-1)(i-1)} + 1
	\end{align*}
	
	We will also have $F_{(i+1)i} = i$, as
	\begin{align*}
		F_{(i+1)i} = \sum_{k=1}^{i}{D_{(i+1)k}} 
		= \left(\sum_{k=1}^{i}{D_{ik}}\right) - 1 
		= F_{ii} - 1 = i
	\end{align*}
	
	This shows that $F$ satisfies condition \ref{def:Fmatr1}.
	
	We have $F_{i1} = D_{i1} ~ \forall ~ i$.  This shows condition \ref{def:Fmatr2} using \ref{def:Dmatr2} for $j=1$.
	
	Now we come to condition \ref{def:Fmatr3}.  \ref{def:Fmatr3}.\ref{defF3:1} follows by non-negativity and the face that the columns of $D$ are monotonically decreasing.  \ref{def:Fmatr3}.\ref{defF3:2} follows using \ref{def:Dmatr3}. \ref{def:Fmatr3}.\ref{defF3:3} is equivalent to 
	\begin{align*}
		0 &\leq  (F_{(i-1)k} - F_{ik}) - (F_{(i-1)(k-1)} - F_{i(k-1)})  \leq 1 \\
		\iff 0 &\leq (\sum_{j=1}^k{D_{(i-1)j}} - \sum_{j=1}^k{D_{ij}}) - (\sum_{j=1}^{k-1}{D_{(i-1)j}} - \sum_{j=1}^{k-1}{D_{ij}})  \leq 1 \\
		\iff 0 & \leq D_{(i-1)k} - D_{ik} \leq 1
	\end{align*}
	which follows from \ref{def:Dmatr2} and \ref{def:Dmatr3}.
	This shows that $\varphi(\mathcal{D}_n) \subseteq \mathcal{F}_n$.
	
	Now let $F$ be an element of $\mathcal{F}_n$ and $D = \varphi^{-1}(F)$.
	
	We first see that $D_{ii} = F_{ii} - F_{(i-1)i} = (i+1) - (i-1) = 2$.  This shows \ref{def:Dmatr1}.  Following the case above, we see that \ref{def:Fmatr3}.\ref{defF3:3} implies \ref{def:Dmatr2}, and \ref{def:Fmatr3}.\ref{defF3:2} implies \ref{def:Dmatr3}. This shows that $\varphi^{-1}(\mathcal{F}_n) \subseteq \mathcal{D}_n$ and completes the proof of the bijection.
	
\end{proof}


\section{Proof of Proposition \ref{prop:separate_top_time}} \label{app:proof_separate_top_time}

\begin{proof}
	Let $G = (F^G, \mathbf{u}^G)$, $H = (F, \mathbf{u}) \in \mathcal{G}_{n}$, two genealogies.
	Let $\mathbf{u}^{(.)} = (u_1^{(.)}, \dots, u_{n-1}^{(.)})$ be the vector of branching event times for $(.) = G, H$. For convenience, the weight matrix 
	$$w(\mathbf{u}^{(.)}) = W^{(.)} = 
	\begin{pmatrix}
		u_1^{(.)} - u_2^{(.)} \\
		u_1^{(.)} - u_3^{(.)} & u_2^{(.)} - u_3^{(.)} \\
		u_1^{(.)} - u_4^{(.)} & u_2^{(.)} - u_4^{(.)} & u_3^{(.)} - u_4^{(.)} \\
		\vdots & & & \ddots \\
		u_{1}^{(.)} & u_{2}^{(.)} & \dots & \dots &  u_{n-1}^{(.)}
	\end{pmatrix}, $$
	is vectorized as $w_1, \dots w_m$, $m = \dfrac{n(n-1)}{2}$. We then can re-express $d_{2}$ as $d_2(G, H)^2 := \sum_{j=1}^m{(F^G_{j}W^G_{j} - F_{j}W_{j})^{2}}$. Further, we assume that under $\nu$, the tree topology and the coalescent times are independent, that is $d\nu(H)=\mu(F)\prod^{n-1}_{j=1}f(u_{j} \mid u_{j+1})d(\mathbf{u}) = \mu(F)f(\mathbf{u})d(\mathbf{u})$ and re-express Eq.~\ref{eq:frech_gen}
	for $d_{2}$ as follows:
	\begin{equation*}
		\bar{G}_{2} \in 
		\argmin_{G=(F^G,u^G) \in \mathcal{G}_{n}} 
		\sum_{F \in \mathcal{F}_{n}} 
		\underset{0 = u_n < u_{n-1} < \dots < u_2 < u_1} {\int}
		\sum_{j=1}^m{(F^G_{j}W^G_{j} - F_{j}W_{j})^{2}}
		\mu(F) 
		f(\mathbf{u})d(\mathbf{u}).
	\end{equation*}
	
	Let $I^{G}_k = u^{G}_k - u^{G}_{k+1}$
	with $u^{G}_n = 0$, and let $I^{G} = (I^{G}_1, \dots, I^{G}_{n-1})$, then the weight matrix of $W^{G}$ becomes 
	$$g(I^{G}) = W^{G} = \begin{pmatrix}
		I^{G}_1 \\
		I^{G}_1 + I^{G}_2 & I^{G}_2 \\
		I^{G}_1 + I^{G}_2 + I^{G}_3 & I^{G}_2 + I^{G}_3 & I^{G}_3 \\
		\vdots & & & \ddots \\
		I^{G}_1 + \cdots + I^{G}_{n-1} & I^{G}_2 + \cdots + I^{G}_{n-1} & \dots & \dots &  I^{G}_{n-1}
	\end{pmatrix}, $$
	and 
	
	\begin{equation}
		A(G) = A(F^G, I^G) =
		\sum_{F \in \mathcal{F}_{n}} {
			\underset{0 = u_n < u_{n-1} < \dots < u_2 < u_1} {\int}{
				\sum_{j=1}^m {
					(F_j w_j(\mathbf{u}) - F^G_j g_j(I^G))^2 \mu(F) f(\mathbf{u}) d(\mathbf{u})
			}}
		}
		\label{min-objective}
	\end{equation}
	
	To find	the branching event time intervals $I^{G}$ that minimize $A(G)$, we take the partial derivative of $A(G)$, a continuous function of $I^G$, with respect to $I^G_k$. We exchange the order of summation and integration since everything is positive and exchanging differentiation and integration we get 
	
	\begin{align}
		\dfrac{\partial A}{\partial I^G_k}
		= 0 = 
		&\sum_{F \in \mathcal{F}_{n}} {
			\underset{0 = u_n < u_{n-1} < \dots < u_2 < u_1} {\int}{
				\sum_{j=1}^m {
					2 \cdot (F_j w_j(\mathbf{u}) - F^G_j g_j(I^G))\cdot F^G_j \cdot \dfrac{\partial g_j}{\partial I^G_k}(I^G) \mu(F)f(\mathbf{u}) d(\mathbf{u}).
			}}
		}  \\
		\Rightarrow
		&\sum_{F \in \mathcal{F}_{n}} {
			\sum_{j=1}^m {
				\left\{F_j^{H} \cdot \E[w_j(\mathbf{u})]\cdot F^G_j \cdot \dfrac{\partial g_j}{\partial I^G_k}(I^G) \cdot\mu(F) - (F^G_j)^2 \cdot g_j(I^G) \cdot \dfrac{\partial g_j}{\partial I^G_k}(I^G) \cdot \mu(F) \right\}
			}
		} = 0. \\
		\Rightarrow &\sum_{j=1}^m \left\{
		F^G_j \cdot \E[w_j(\mathbf{u})] \cdot \dfrac{\partial g_j}{\partial I^G_k}(I^G) \cdot 
		\sum_{F \in \mathcal{F}_{n}} \mu(F) F_j
		\right\}
		= 
		\sum_{j=1}^m {
			F^G_j \cdot g_j(I^G) \cdot \dfrac{\partial g_j}{\partial I_k}(I^G) \cdot 
			F^G_j .
		}
		\label{final-eqn}
	\end{align}
	
	Here the expectation is taken with respect to the joint density of $u$, i.e. $\E[w_j(\mathbf{u})] = \int_{\mathbf{u}} w_j(\mathbf{u}) f(\mathbf{u}) d(\mathbf{u})$.
	We see that (\ref{final-eqn}) is satisfied simultaneously for all $k$ if we have:
	$$F^G_j= \sum_{F \in \mathcal{F}_{n}} \mu(F) F_j=\E(F_{j}),$$ and  
	$$g_j(I^G)= \E[w_j(\mathbf{u})],$$
	for $j = 1, \dots m = n(n-1)/2$. We note that the expected value $\E(F)$ may not correspond to the Fr\'{e}chet mean tree;  although this is a solution, we are ignoring the constraints imposed to $F$.
\end{proof}


\section{A Markov chain on the space of ranked tree shapes} \label{app:mc-encod}

We drop the F matrix representation of ranked tree shapes and instead use two string representations of the spaces of isochronous and heterochronous ranked tree shapes respectively. We use the string representations to define two Markov chains on the corresponding spaces.  An isochronous ranked tree shape is encoded as a string of $n-1$ integers $t=(t_{1},t_{2},\ldots,t_{n-1})$, where $t_k$ indicates the parent node of the internal node with ranking $k+1$, $k \in \{1,\ldots,n-1\}$. It is assumed that the first integer $t_{1}$ of the string representation is 1 (parent of root node). 
Figure \ref{fig:fmat5} 
shows the string encodings of each of the $5$ ranked tree shapes at the bottom. The string representation was introduced earlier as the functional code for binary increasing trees \citep{donaghey1975alternating}. The set of all string representations of $n-1$ elements are in bijection with the space of isochronous ranked tree shapes of $n$ leaves and the space of binary increasing trees of $n-1$ nodes \citep{richard1999enumerative}.

To recover the tree $T$ from the encoding $t$, we can proceed in a generative fashion: we start at the root which has label $2$, and proceed by bifurcating the leaves in the order determined by $t$.  The space of strings $\mathfrak{T}_n$ is the set of all $t$ strings of length $n-1$ defined as follows:

\begin{definition}(Isochronous string representation). \label{encod-defining_props}
	A string $t$ of non-negative integers that encodes an isochronous ranked tree shape has the following defining properties:
	
	1. $t_{1}=1$
	
	2. For $i > 1$, $2 \leq t_i \leq i$.
	
	3. No entry of $t$ can appear more than twice.
\end{definition}

\begin{definition}(Markov chain on isochronous strings). \label{def:mc_defn}
	Let $t \in \mathfrak{T}_n$ be a string encoding an isochronous ranked tree shape as described in Definition \ref{encod-defining_props}.  We define a Markov chain on $\mathfrak{T}_n$ as follows:
	
	1. Pick an element $i \in 2, \dots, n$ uniformly at random.
	
	2. Pick the value of $t_i$ uniformly at random from the allowable choices in $2, \dots, i$, i.e. from those choices that do not already appear twice among $t_{-i}$.
	
\end{definition}

\begin{proposition} The Markov chain on isochronous strings (Definition \ref{def:mc_defn}) is ergodic with uniform stationary distribution on the space of strings of length $n-1$, or equivalently, on the space of ranked tree shapes with $n$ leaves.
\end{proposition}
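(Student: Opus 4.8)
The plan is to recognise this chain as a random-scan Gibbs-type sampler on the finite set $\mathfrak{T}_n$ and to verify the three standard ingredients — a symmetric (hence doubly stochastic) transition kernel, aperiodicity, and irreducibility — from which ergodicity with uniform stationary law follows. The single structural fact driving everything is that when a coordinate $i$ is selected, the set of allowable values $A_i(t_{-i})\subseteq\{2,\dots,i\}$ (those not already appearing twice among the remaining coordinates) depends only on $t_{-i}$ and not on the current entry $t_i$. In particular the current value $t_i$ always lies in $A_i(t_{-i})$, since deleting coordinate $i$ leaves it occurring at most once; thus $|A_i(t_{-i})|\ge 1$, so resampling $t_i$ to itself has positive probability and the chain is aperiodic.

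The same observation yields the stationary distribution directly. A single move changes exactly one coordinate, so for distinct valid strings $t\neq t'$ a transition $t\to t'$ is possible only if they differ in a single coordinate $i$, in which case $t_{-i}=t'_{-i}$ and therefore
\[
P(t,t') = \frac{1}{n-2}\cdot\frac{1}{|A_i(t_{-i})|} = \frac{1}{n-2}\cdot\frac{1}{|A_i(t'_{-i})|} = P(t',t).
\]
Hence $P$ is symmetric, so it is doubly stochastic and the uniform measure on $\mathfrak{T}_n$ is stationary (equivalently, the chain is reversible with respect to the uniform law).

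The substantive step is irreducibility, where the at-most-twice and range constraints must be respected at every single-coordinate move. I would route all states through the caterpillar string $c=(1,2,3,\dots,n-1)$, i.e. $c_i=i$, as a hub. Starting from an arbitrary valid $t$, process the coordinates in decreasing order $i=n-1,n-2,\dots,2$, setting $t_i:=i$ at each step. This is always a legal move: by the time coordinate $i$ is reached, the coordinates $j>i$ already hold their target values $j>i$, while every coordinate $j<i$ satisfies $t_j\le j<i$; hence the value $i$ occurs nowhere in $t_{-i}$, so $i\in A_i(t_{-i})$ and the updated string is again valid. After $n-2$ such moves we arrive at $c$, so $c$ is reachable from every state with positive probability, and by the symmetry of $P$ established above the reverse path has equal positive probability, making every state reachable from $c$; composing the two directions gives irreducibility.

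Finally, combining irreducibility, aperiodicity, and finiteness of $\mathfrak{T}_n$ yields ergodicity, and the unique stationary law must be uniform by the doubly stochastic property; the bijection between strings and ranked tree shapes then transports this to the uniform distribution on $\mathcal{T}_n$. The main obstacle is precisely the irreducibility argument: a naive single-coordinate search can stall whenever a desired parent value already appears twice, and the decreasing-order sweep is exactly what guarantees the target value is unoccupied at the moment it is needed, so I would take care to justify that $i$ is absent from $t_{-i}$ when coordinate $i$ is updated.
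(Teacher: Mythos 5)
Your proof is correct and follows essentially the same route as the paper: irreducibility is established by connecting every string to the caterpillar $t^*=(1,2,\dots,n-1)$ via a decreasing-index sweep, with symmetry of the kernel giving the reverse paths and the uniform stationary law. You are in fact somewhat more explicit than the paper, which asserts symmetry and validity of each sweep step without spelling out that the allowable set $A_i(t_{-i})$ depends only on $t_{-i}$ (giving $P(t,t')=P(t',t)$ and aperiodicity via the positive holding probability) and that the target value $i$ is absent from $t_{-i}$ at the moment coordinate $i$ is updated.
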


\begin{proof}
	Let $t$ be an arbitrary element of $\mathfrak{T}_n$. We show that $t$ is path connected to $t^* = (1,2,\dots,n-1)$.  This string corresponds to the most unbalanced tree, also called the caterpillar or the comb tree.  Since the Markov chain is symmetric, $t^*$ is path connected to every element of $\mathfrak{T}_n$ as well and hence the chain is irreducible. The following path has all transitions with positive probability:
	
	\begin{align*}
		t = t^{(0)} &= (1, t_2, t_3, \dots, t_{n-2}, t_{n-1}) \\
		t^{(1)} &= (1, t_2, t_3, \dots, t_{n-2}, n-1) \\
		t^{(2)} &= (1, t_2, t_3, \dots, n-2, n-1) \\
		&\vdots \\
		t^{(n-3)} &= (1, t_2, 3, \dots, n-2, n-1) \\
		t^{(n-2)} &= (1, 2, 3, \dots, n-2, n-1) = t^*
	\end{align*}
	Note that $t^{(i)} \mapsto t^{(i+1)}$ is always a valid transition due to Property \ref{encod-defining_props}.3.
\end{proof}


\begin{figure}
	\centering
	\includegraphics[width=0.6\textwidth]{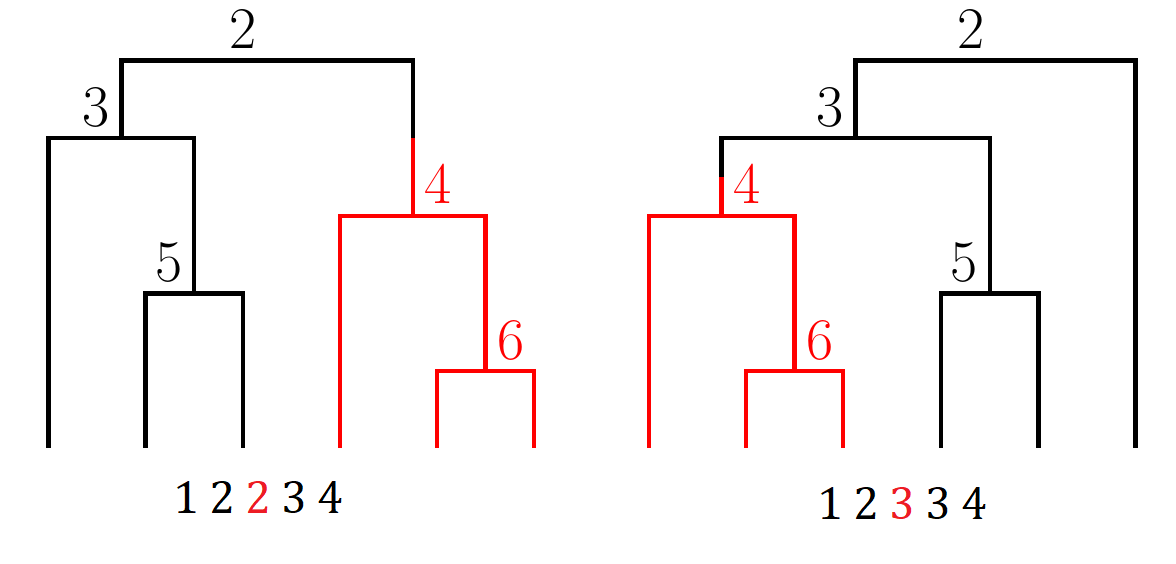}
	\caption{An example transition under the Markov chain of Definition \ref{def:mc_defn} from $(1, 2, 2, 3, 4)$ to $(1, 2, 3, 3, 4)$.  The subtree of node $4$ is plucked from under node $2$ and planted under node $3$.
	}
	\label{fig:example_transition}
\end{figure}

This representation can be extended to heterochronous trees as well, with additional entries indicating the sampling events.  We define these strings in the following way:

\begin{definition}(Heterochronous string representation). \label{encod-defining_props_hetero}
	A heterochronous ranked tree shape with $n$ leaves is encoded as a pair of strings $(t, \sigma)$ each of length $2n-1$. As before, $t$ is a string of non-negative integers that indicates the parent nodes of internal nodes (coalescent events), however, $t$ now also includes the parent nodes of all leaves. The sequence order is given by the time they are created and $\sigma$ is a $0-1$ string that indicates whether the corresponding node is internal (1) or a leaf (0).  These strings have the following defining properties:
	
	1. $t_{1}=1$, $\sigma_1 = 1$
	
	2. $|\{i:\sigma_i = 1\}| = n-1$
	
	3. $|\{i:\sigma_i = 0\}| = n$
	
	4. Each element of $\{2,\dots n\}$ occurs exactly twice in $t$.
	
	5. For each $i > 1$, $2 \leq t_i \leq 1 + \sum_{j=1}^{i-1}{\sigma_j}$.
	We note that the string $t_\sigma = (t_i:\sigma_i = 1)$ is a valid string encoding an isochronous ranked tree shape.
\end{definition}

For example, the string representation of the ranked tree shape of Figure \ref{fig:rut_example}(C) 
is $(t=123442365567788$, $\sigma=111100101100000)$. We note that the string $t_\sigma = (t_i:\sigma_i = 1)$ is a valid string encoding of an isochronous ranked tree shape. In addition, this representation can also admit extensions to multifurcating trees, which we leave for future study.

While defining Markov chains on the space of ranked tree shapes is useful for Bayesian inference in areas such as phylogenetics and phylodynamics, we rely on these chains for finding the Fr\'{e}chet mean via stochastic combinatorial optimization (Section \ref{sec:sa}).

\begin{definition}(Markov chain on heterochronous strings). \label{def:mc_defn_hetero}
	Let $(t, \sigma)$ be a pair of strings encoding an heterochronous ranked tree shape as described in Property \ref{encod-defining_props_hetero}.  We define a Markov chain on the space of such strings, conditional on $\sigma$ a fixed sequence of sampling and coalescent events, with transitions as follows:
	
	1. Pick two distinct element $i, j \in 2, \dots, 2n-1$ uniformly at random.
	2. Swap $t_i$ and $t_j$.
	If the result is a valid heterochronous string, accept the move, otherwise reject the move.
\end{definition}

The Markov chain of Definition \ref{def:mc_defn_hetero} on the space of ranked tree shapes with a given $\sigma$, i.e., with a given sequence of sampling and coalescence events, is also symmetric, aperiodic (we can pick a pair with the same label with positive probability) and irreducible.  The proof of irreducibility is similar to the isochronous case by considering only the coalescent events.

\begin{proof}
	Let $(t, \sigma)$ be the encoding of an arbitrary heterochronous ranked tree shape.   
	
	We first define $(t^*, \sigma)$ which is the analogue of the caterpillar or the most unbalanced tree, but with the given $\sigma$.  Let $t^*_\sigma = (t^*_i:\sigma_i = 1)$ be equal to $(1,2,\dots,n-1)$, and $t^*_{-\sigma} = (t^*_i:\sigma_i = 0)$ be equal to $(2, 3,\dots, n-1, n, n)$.  
	We can follow a similar method as in the isochronous case and show that $t$ is path connected to $t^*$ by a sequence of steps with positive probability.  Note that $t$ has length $2n-1$.
	
	Let $t^{(0)} = t$.  We obtain $t^{(i)}$ by swapping two terms of $t^{(i-1)}$ such that the last $i$ terms of $t^{(i)}$ and $t^*$ are equal.  Explicitly, let $j$ be the largest element in $2, \dots, 2n-i$ such that $t^{(i-1)}_j = t^*_{2n-i}$.  We can swap $t^{(i-1)}_j$ and $t^{(i-1)}_{2n-i}$, since $t^*_{2n-i}$ is the maximum allowable entry at position $2n-i$, and hence $t^{(i-1)}_{2n-i} \leq t^{(i-1)}_j$ which satisfies Definition \ref{encod-defining_props_hetero}.5.  The remaining conditions under Definition \ref{encod-defining_props_hetero} are not affected by the transitions.  Iteratively, we obtain $t^{(2n-1)} = t^*$. 
	
	Since the Markov chain is symmetric, we see that $t^*$ will be path connected to every $t$ as well, and the Markov chain is irreducible. 
	
\end{proof}


\section{Proof of Theorem \ref{prop:kingman_Fmatrix}} \label{app:proof_kingman_Fmean}
\begin{proof}
	\begin{enumerate}
		\item  This follows by the Markovian property of the Yule model, since the $i$-th row of the $\mathbf{F}$-matrix is only determined by the state of the tree when it has $i+1$ tips.
		
		\item  View $P$ as the Yule model, in which we start with one leaf, and successively bifurcate an independently choesn leaf at random.  Let $D_{i,j}$ be the number of branches that were created after the $j$-th split (i.e. when the node with label $(j+1)$ bifurcates), that have not bifurcated by interval $i$.  We must have $D_{ij} \in \{0, 1, 2\}$, and $F_{ij} = \sum_{k\leq j}{D_{ik}}.$
		
		We have $D_{11} = F_{11} = 2$, and $D_{i1} = F_{i1} \in \{0, 1\}$ for $i > 1$. 
		
		For the children of node $2$, i.e. when $j=1$, $D_{ij} = 1$ means that the extant branch descending from node $2$ has not bifurcated yet by interval $i$.  Since at each step in the Yule model, the branch to split is chosen at random, we have
		
		\begin{equation}
			P(D_{i1} = 1) = \dfrac{2}{3} \times \dfrac{3}{4} \times \cdots \times \dfrac{i-2}{i-1} \times \dfrac{i-1}{i} = \dfrac{2}{i}
		\end{equation}
		
		This implies 
		\begin{equation}
			E[D_{i1}] = P(D_{i1} = 1) = \dfrac{2}{i}
		\end{equation}
		
		Let $j > 1$.  Then $D_{ij} = 2$ means that neither of the branches descending from node $(j+1)$ have bifurcated by interval $i$.  By the same argument as above,
		
		\begin{equation}
			P(D_{ij} = 2) = \dfrac{j-1}{j+1} \times \dfrac{j}{j+2} \times \cdots \times \dfrac{i-3}{i-1} \times \dfrac{i-2}{i} = \dfrac{j(j-1)}{i(i-1)}
		\end{equation}
		
		Now $D_{ij} = 1$ means that exactly one branch descending from node $(j+1)$ has bifurcated by interval $i$.  We split this up according to the interval where this bifurcation occurs:
		
		\begin{align}
			P(D_{ij} = 1) &= \sum_{k = 1}^{i-j} {P(D_{ij} = 1, D_{(j+k-1) j} - D_{(j+k)j} = 1)}  \nonumber\\
			&=\sum_{k = 1}^{i-j} {\left(\prod_{l = 1}^{k-1}{\dfrac{j+l-2}{j+l}} \right) \times \dfrac{2}{j+k} \times \left(\prod_{l = k+1}^{i}{\dfrac{j+l-1}{j+l}} \right)} \nonumber\\
			&=\sum_{k = 1}^{i-j} {\dfrac{2j(j-1)}{i} \dfrac{1}{(j+k-2)(j+k-1)}} \nonumber\\
			&= \dfrac{2j(i-j)}{i(i-1)}
		\end{align}
		
		We thus have 
		\begin{align}
			P(D_{ij} = 0) &= 1 - P(D_{ij} = 1) - P(D_{ij} = 2) \nonumber \\
			&= \dfrac{(i-j)(i-j-1)}{i(i-1)}
		\end{align}
		
		and 
		\begin{align}
			E[D_{ij}] &= P(D_{ij} = 1) + 2P(D_{ij} = 2) \nonumber \\
			&= \dfrac{2j(i-j)}{i(i-1)} + 2\dfrac{j(j-1)}{i(i-1)} \nonumber\\
			&= \dfrac{2j}{i}
		\end{align}
		
		So
		\begin{equation}
			E[F_{ij}] = \sum_{k\leq j}{E[D_{ik}]} = \dfrac{j(j+1)}{i}
		\end{equation}
		
		\item  
		
		Similar to \citet{janson2011total}, we will use indicator variables tracking whether a specific external branch is still present at each time interval. We will start by arbitrarily labeling the leaves and define $Z_{i,k}$ as the indicator random variable that leaf $i$ is still present when there are $k$ branches, when viewing it as a Tajima coalescent process, starting with $n$ leaves and successively merging two at a time. Then $F_{n-1,k-1}=\sum^{n}_{i=1}Z_{i,k}$ and for $k \leq l$,
		\begin{equation}
			\text{Cov}(F_{n-1,k-1},F_{n-1,l-1})=\sum^{n}_{i=1}\sum^{n}_{j=1}\text{Cov}(Z_{i,k},Z_{j,l}),
		\end{equation}
		where
		\begin{equation}
			\text{Cov}(Z_{i,k},Z_{j,l})=\text{E}(Z_{i,k}Z_{j,l})-\text{E}(Z_{i,k})\text{E}(Z_{j,l})
		\end{equation}
		\begin{equation}
			\text{E}(Z_{i,k}Z_{j,l})=\text{P}(Z_{i,k}=Z_{j,l}=1)
		\end{equation}
		is the probability that the external branch with label $i$ and the external branch with label $j$ remain external branches when there are $k$ and $l$ branches respectively. That is, the probability that leaves $i$ and $j$ do not coalesce when there are $n,n-1,\ldots,l$ branches and when one of them do not coalesce when there are $l-1,\ldots,k$. That is
		\begin{equation}
			\text{E}(Z_{i,k}Z_{j,l})=\text{P}(Z_{i,k}=Z_{j,l}=1)=\frac{\binom{n-2}{2}}{\binom{n}{2}}\frac{\binom{n-3}{2}}{\binom{n-1}{2}}\cdots\frac{\binom{l-1}{2}}{\binom{l+1}{2}}\frac{\binom{l-1}{2}}{\binom{l}{2}}\cdots\frac{\binom{k}{2}}{\binom{k+1}{2}}=\frac{(l-1)(l-2)k(k-1)}{n(n-1)^{2}(n-2)},
		\end{equation}
		and
		\begin{equation}
			\text{E}(Z_{i,k})=\text{P}(Z_{i,k}=1)=\frac{\binom{n-1}{2}}{\binom{n}{2}}\frac{\binom{n-2}{2}}{\binom{n-1}{2}}\cdots\frac{\binom{k}{2}}{\binom{k+1}{2}}=\frac{k(k-1)}{n(n-1)},
		\end{equation}
		\begin{equation}
			\text{Var}(Z_{i,k})=\frac{k(k-1)}{n(n-1)}\left( 1- \frac{k(k-1)}{n(n-1)}\right),
		\end{equation}
		
		\begin{align}
			\text{Cov}(Z_{i,k}Z_{j,l})& =\frac{(l-1)(l-2)k(k-1)}{n(n-1)^{2}(n-2)}-\frac{k(k-1)l(l-1)}{n^{2}(n-1)^{2}} \nonumber\\
			& =\frac{(l-1)(k-1)k}{n(n-1)^{2}}\left[\frac{l-2}{n-2}-\frac{l}{n}\right] \nonumber\\
			& =\frac{-2(n-l)(l-1)k(k-1)}{n^{2}(n-1)^{2}(n-2)} \nonumber\\
		\end{align}
		\begin{align}
			\text{Cov}(Z_{1,i},Z_{2,i})&=\text{E}(Z_{1,i}Z_{2,i})-\text{E}(Z_{1,i})\text{E}(Z_{2,i})\nonumber\\
			&=\frac{\binom{i}{2}\binom{i-1}{2}}{\binom{n}{2}\binom{n-1}{2}}-\frac{i^{2}(i-1)^2}{n^{2}(n-1)^{2}}=\frac{-2i(i-1)^{2}(n-i)}{n^{2}(n-1)^{2}(n-2)}\nonumber\\
		\end{align}
		and
		\begin{align}
			\text{Var}(F_{n-1,i-1})&=\sum^{n}_{j=1}\text{Var}(Z_{j,i})+n(n-1)\text{Cov}(Z_{1,i},Z_{2,i})\nonumber\\
			&=\frac{i(i-1)}{n-1}\left( 1- \frac{i(i-1)}{n(n-1)}\right)+\frac{-2i(i-1)^{2}(n-i)}{n(n-1)(n-2)}\nonumber\\
			&=\frac{i(i-1)}{n(n-1)^{2}(n-2)}\left[n(n-1)(n-2)-i(i-1)(n-2)-2(i-1)(n-i)(n-1) \right]\nonumber\\
			&=\frac{i(i-1)}{n-1}-\frac{i(i-1)^{2}}{n(n-1)^{2}(n-2)}\left[i(n-2)+2(n-i)(n-1) \right]\nonumber\\
			&=\frac{i(i-1)}{n-1}-\frac{i^{2}(i-1)^{2}}{n(n-1)^{2}}-\frac{2i(i-1)^{2}(n-i)}{n(n-1)(n-2)} \nonumber\\
			&=\frac{i^{2}(i-1)^{2}}{(n-1)^{2}(n-2)}+\frac{i(i-1)(n-2i)}{(n-1)(n-2)} 
		\end{align}
		
		So $$\operatorname{Var}[F_{ij}] = \dfrac{j^2(j+1)^2}{i^2(i-1)} + \dfrac{j(j+1)(i-2j-1)}{i(i-1)}$$
		
		\item We continue to show the results of covariance.
		First, when $k\leq l$
		
		\begin{align}
			\text{Cov}(F_{n-1,k-1},F_{n-1,l-1})&=\sum^{n}_{i=1}\sum^{n}_{j=1}\text{Cov}(Z_{i,k},Z_{j,l})=n(n-1)\text{E}(Z_{1,k}Z_{2,l})+n\text{E}(Z_{1,k})-n^{2}\text{E}(Z_{1,k})\text{E}(Z_{1,l})\nonumber\\
			&=\frac{(l-1)(l-2)k(k-1)}{(n-1)(n-2)}+\frac{k(k-1)}{n-1}-\frac{k(k-1)l(l-1)}{(n-1)^{2}}\nonumber\\
			&=\frac{k(k-1)[l(l+1)+n(n-2l-1)]}{(n-1)^{2}(n-2)}\nonumber\\
		\end{align}
		
		So when $i_1 = i_2, j_1 \leq j_2$, $$\operatorname{Cov}[F_{i_1 j_1}, F_{i_2 j_2}] = \frac{j_1(j_1+1)[j_2(j_2+2)+(i_1 + 1)(i_1-2j_2-2)]}{i_1^{2}(i_1 - 1)} $$

		Now, when comparing values of the $\mathbf{F}$ matrix at different rows, for example $F_{n-1,k-1}$ and $F_{m-1,k-1}$ for $k<n<m$, we then assume that $F_{m-1,k-1}=\sum^{m}_{i=1}Z_{i,k}$ denotes the number of external branches in a ranked tree shape with $m$ leaves when there are $k$ branches as before, however, when considering it together with $F_{n-1,k}$, $F_{n-1,k}$ denotes the number of external branches when there are $k$ branches in a ranked tree shape with $n$ leaves. That is, when there are $n$ branches, there are $\sum^{m}_{i=1}Z_{i,n}$ external branches with respect to the bigger tree but there are additional $n-\sum^{m}_{i=1}Z_{i,n}$ branches that are external with respect to the smaller tree. We then have
		\begin{align}
			\text{Cov}(F_{m-1,k-1},F_{n-1,k-1})&=\text{Cov}\left(F_{m-1,k-1},F_{m-1,k-1}+\sum^{2m-n}_{j=m+1}Z_{j,k}\right)\nonumber\\
			&=\text{Var}(F_{m-1,k-1})+m\sum^{2m-n}_{j=m+1}\text{Cov}\left(Z_{1,k},Z_{j,k}\right)
		\end{align}
		where
		\begin{align}
			\text{Cov}(Z_{1,k},Z_{m+1,k})&=E\left[ Z_{1,k}Z_{m+1,k}\right]-\text{E}[Z_{1,k}] \text{E}[Z_{m+1,k}]\nonumber\\
			&=\frac{\binom{m-1}{2}}{\binom{m}{2}}\frac{\binom{m-3}{2}}{\binom{m-1}{2}}\cdots\frac{\binom{k}{2}}{\binom{k+2}{2}}\frac{\binom{k-1}{2}}{\binom{k+1}{2}}-\frac{\binom{k}{2}}{\binom{m}{2}}\frac{\binom{k}{2}}{\binom{m-1}{2}}\nonumber\\
			&=\frac{\binom{k}{2}\binom{k-1}{2}}{\binom{m}{2}\binom{m-2}{2}}-\frac{\binom{k}{2}}{\binom{m}{2}}\frac{\binom{k}{2}}{\binom{m-1}{2}}\nonumber\\
			&=\frac{k(k-1)^{2}(k-2)}{m(m-1)(m-2)(m-3)}-\frac{k^{2}(k-1)^{2}}{m(m-1)^{2}(m-2)}
		\end{align}
		\begin{align}
			\text{Cov}(Z_{1,k},Z_{m+2,k})&=\frac{\binom{m-1}{2}}{\binom{m}{2}}\frac{\binom{m-2}{2}}{\binom{m-1}{2}}\frac{\binom{m-4}{2}}{\binom{m-2}{2}}\cdots\frac{\binom{k}{2}}{\binom{k+2}{2}}\frac{\binom{k-1}{2}}{\binom{k+1}{2}}-\frac{\binom{k}{2}}{\binom{m}{2}}\frac{\binom{k}{2}}{\binom{m-2}{2}}\nonumber\\
			&=\frac{\binom{k}{2}\binom{k-1}{2}}{\binom{m}{2}\binom{m-3}{2}}-\frac{\binom{k}{2}}{\binom{m}{2}}\frac{\binom{k}{2}}{\binom{m-2}{2}}\nonumber\\
			&=\frac{k(k-1)^{2}(k-2)}{m(m-1)(m-3)(m-4)}-\frac{k^{2}(k-1)^{2}}{m(m-1)(m-2)(m-3)}
		\end{align}
		and
		\begin{align}
			\sum^{2m-n}_{j=m+1}\text{Cov}\left(Z_{1,k},Z_{j,k}\right)&=\sum^{m-n+1}_{l=2}\left\lbrace \frac{k(k-1)^{2}(k-2)}{m(m-1)(m-l)(m-l-1)}-\frac{k^{2}(k-1)^{2}}{m(m-1)(m+1-l)(m-l)} \right\rbrace
			\nonumber\\
			&=\frac{k(k-1)^{2}(k-2)(m-n)}{m(m-1)(m-2)(n-2)}-\frac{k^{2}(k-1)^{2}(m-n)}{m(m-1)^{2}(n-1)}\nonumber\\
			&=\frac{k(k-1)^{2}(m-n)}{m(m-1)}\left[ \frac{k-2}{(m-2)(n-2)}-\frac{k}{(m-1)(n-1)}\right]
		\end{align}

		Therefore,
		
		\begin{align}
			\text{Cov}(F_{m-1,k-1},F_{n-1,k-1})&=\frac{k^{2}(k-1)^{2}}{(m-1)^{2}(m-2)}+\frac{k(k-1)(m-2k)}{(m-1)(m-2)}\nonumber\\
			& \hspace{1cm} +\frac{k(k-1)^{2}(m-n)}{(m-1)}\left[ \frac{k-2}{(m-2)(n-2)}-\frac{k}{(m-1)(n-1)}\right] \nonumber \\
			& = \frac{k (k-1) (k-n) (k-n+1)}{(m-1) (n-1) (n-2)}
		\end{align}
		
		So when $i_1 > i_2, j_1 = j_2$, $$\operatorname{Cov}[F_{i_1 j_1}, F_{i_2 j_2}] = \frac{j_2 (j_2 + 1) (j_2-i_2) (j_2-i_2+1)}{i_1 i_2 (i_2-1)} $$

		For $i<j$, $i<n$ and $j<m$
		\begin{align}
			\text{Cov}(F_{n-1,i-1},F_{m-1,j-1})&=\text{Cov}(F_{m-1,i,-1}+\sum^{2m-n}_{l=m+1}Z_{l,i},F_{m-1,j-1})\nonumber\\
			&=\text{Cov}(F_{m-1,i-1},F_{m-1,j-1})+\text{Cov}(F_{m-1,j-1},\sum^{2m-n}_{l=m+1}Z_{l,i})
		\end{align}
		
		where
		
		\begin{align}
			\text{Cov}(F_{m-1,j-1},\sum^{2m-n}_{l=m+1} Z_{l,i}) &= m(m-n)\text{Cov}(Z_{1,j},Z_{m+1,i})\nonumber\\
			&=m(m-n)\left[ \text{E}[Z_{1,j} Z_{m+1,i} ]-\frac{j(j-1)}{m(m-1)}\frac{i(i-1)}{n(n-1)}\right]\nonumber\\
			&=m(m-n)\left[ \frac{(j-1)(j-2)}{m(m-1)}\frac{i(i-1)}{(n-1)(n-2)}-\frac{j(j-1)}{m(m-1)}\frac{i(i-1)}{n(n-1)}\right]\nonumber\\
			&=\frac{(m-n)(j-1)i(i-1)}{(m-1)(n-1)}\left[ \frac{j-2}{n-2}-\frac{j}{n}\right]\nonumber\\
		\end{align}
		Then, for $i<j$, $i<n$, $j<m$, and $m>n$
		
		\begin{align}
			\text{Cov}(F_{n-1,i-1},F_{m-1,j-1})&=\frac{i(i-1)[j(j+1)+m(m-2j-1)]}{(m-1)^{2}(m-2)}+\frac{(m-n)(j-1)i(i-1)}{(m-1)(n-1)}\left[ \frac{j-2}{n-2}-\frac{j}{n}\right]\nonumber\\
		\end{align}
		
		So when $i_1 > i_2, j_1 > j_2$, $$\operatorname{Cov}[F_{i_1 j_1}, F_{i_2 j_2}] =\frac{j_2(j_2+1)[(j_1+1)(j_1+2)+(i_1+1)(i_1-2j_1-2)]}{i_1^{2}(i_1-1)}+\frac{(i_1-i_2)j_1j_2(j_2+1)}{i_1i_2}\left[ \frac{j_1-1}{i_2-1}-\frac{j_1+1}{i_2+1}\right] $$
		
		Now, for $j<i$, $i<n$, $j<m$, and $m>n$
		\begin{align}
			\text{Cov}(F_{n-1,i-1},F_{m-1,j-1})&=\text{Cov}(F_{m-1,i-1}+\sum^{2m-n}_{l=m+1}Z_{l,i},F_{m-1,j-1})\nonumber\\
			&=\text{Cov}(F_{m-1,i-1},F_{m-1,j-1})+\text{Cov}(F_{m-1,j-1},\sum^{2m-n}_{l=m+1}Z_{l,i})\nonumber\\
		\end{align}
		
		where
		
		\begin{align}
			\text{Cov}(F_{m-1,j-1},\sum^{2m-n}_{l=m+1} Z_{l,i}) &= m(m-n)\text{Cov}(Z_{1,j},Z_{m+1,i})\nonumber\\
			&=m(m-n)\left[ \text{E}[Z_{1,j} Z_{m+1,i} ]-\frac{j(j-1)}{m(m-1)}\frac{i(i-1)}{n(n-1)}\right]\nonumber\\
			&=m(m-n)\left[ \frac{(i-1)(i-2)}{m(m-1)}\frac{j(j-1)}{(n-1)(n-2)}-\frac{j(j-1)}{m(m-1)}\frac{i(i-1)}{n(n-1)}\right]\nonumber\\
			&=\frac{(m-n)(j-1)j(i-1)}{(m-1)(n-1)}\left[ \frac{i-2}{n-2}-\frac{i}{n}\right]\nonumber\\
		\end{align}
		
		Then, for $j<i$, $i<n$, $j<m$, and $m>n$:
		\begin{align}
			\text{Cov}(F_{n-1,i-1},F_{m-1,j-1})&=\frac{j(j-1)[i(i+1)+m(m-2i-1)]}{(m-1)^{2}(m-2)}+\frac{(m-n)(j-1)j(i-1)}{(m-1)(n-1)}\left[ \frac{i-2}{n-2}-\frac{i}{n}\right]\nonumber\\
		\end{align}
		
		So when $i_1 > i_2, j_1 < j_2$, $$\operatorname{Cov}[F_{i_1 j_1}, F_{i_2 j_2}] = \frac{j_1(j_1+1)[(j_2+1)(j_2+2)+(i_1+1)(i_1-2j_2-2)]}{i_1^{2}(i_1-1)}+\frac{(i_1-i_2)j_1(j_1+1)j_2}{i_1i_2}\left[ \frac{j_2-1}{i_2-1}-\frac{j_2+1}{i_2+1}\right]$$

	\end{enumerate}
\end{proof}




\section{Temperature schedules for Simulated Annealing} \label{app:annealing_params}

We now consider the choice of temperature schedule in the Simulated Annealing (SA) algorithm of Section \ref{sec:sa}.
in the main document.  The parameters to be chosen are whether we use a logartihmic, linear, or exponential cooling regime, as well as the initial temperature $R_0$, and cooling rate $\alpha$.  Theoretical convergence guarantees exist for the logarithmic cooling schedule $R_k = R_0(1 + \alpha \log(1 + k))^{-1}$
with sufficiently high initial temperature and appropriately chosen $\alpha$ (see Chapter 3 of \citet{aarts1988simulated}).  
Other options include the linear schedule $R_k = R_0(1 + \alpha k)^{-1}$, or the exponential cooling schedule $R_k = R_0 \alpha^k$.  
In practice, the logartithmic schedule is very inefficient and better choices may be made for each different problem.

In our simulations, we observed that the exponential cooling schedule with $\alpha = .9995$  works well for trees with upto $n=250$ tips.  We note the logarithmic schedule is prohibitively slow, and we do not gain much by using the linear schedule over the exponential schedule.

\begin{figure}
	\centering
	\includegraphics[width=0.5\textwidth]{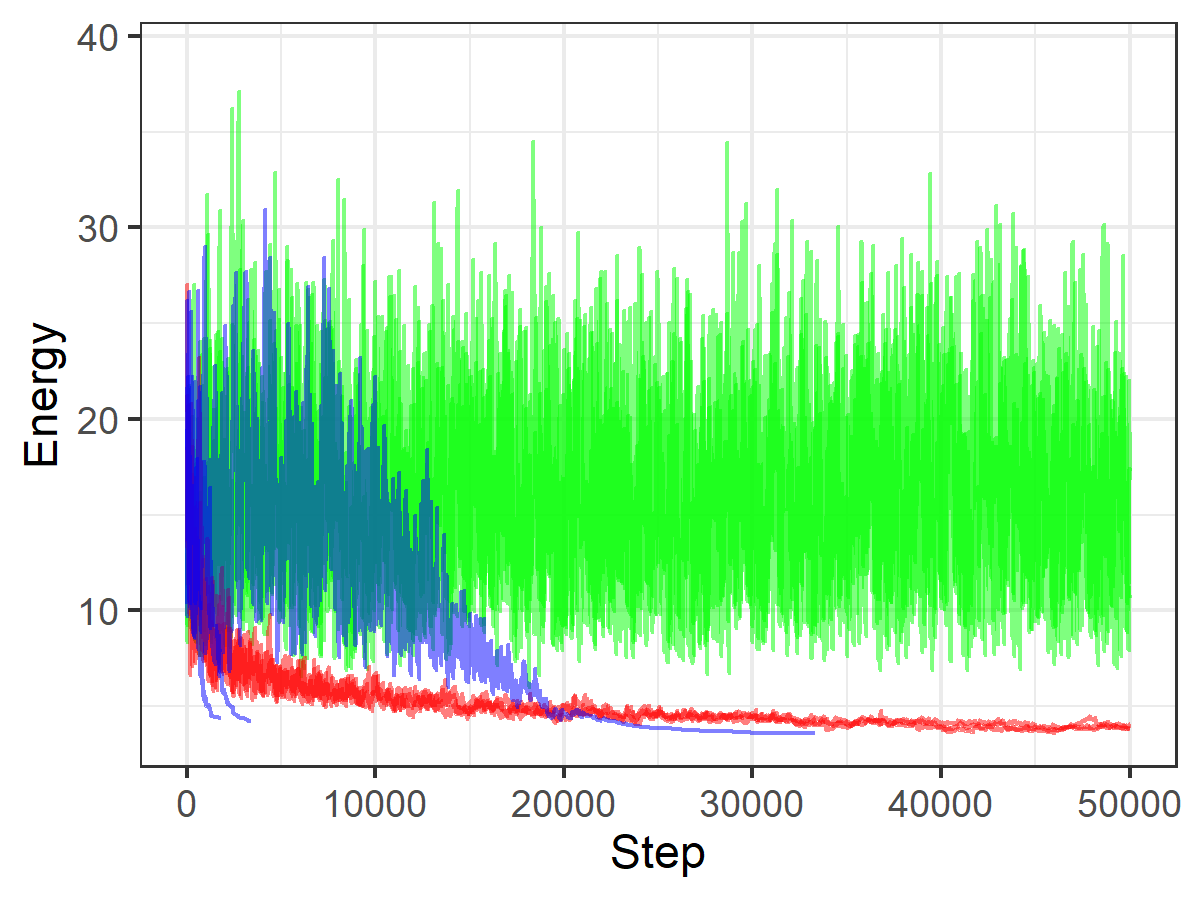}
	\caption{Example energy runs a simulated annealing chain, with trees on $n = 100$ tips.  Three runs each for the logarithmic (green), linear (red), and exponential (blue) schedules.  We observe that the exponential schedule with $\alpha = .9995$ has the best performance.}
	\label{fig:energy}
\end{figure}


\section{SARS-CoV-2 Posterior Inference and Sample Information}

We accessed a global alignment of SARS-CoV-2 publicly available molecular sequences obtained from infected human individuals on November 4, 2020 in the  GISAID EpiCov database \citep{shu2017gisaid}. We analyzed a subset of available sequences sampled in the states of California, Florida, Texas and Washington in the USA for the period of February 2020 to September 2020. We analyzed 19 random samples of 100 sequences in California and 3 random samples of 100 sequences in the rest of the states. We independently sampled genealogies from the posterior distribution for each random sample independently with BEAST \citep{Suchard2018}. In BEAST, we placed the HKY mutation prior on the substitution process with fixed global mutation rate of $8\times 10^{-4}$ substitutions per site per year as it has been done in other studies \citep{Nadeaue2012008118}, and we placed the Skyride process prior on the coalescent effective population size \citep{minin2008smooth}. We ran each chain for $50$ million iterations and thinned every $5000$ iterations.

We acknowledge the following submitting and originating laboratories that made the sequences used in this manuscript available in GISAID: San Diego County Public Health Laboratory, Quest Diagnostics, UCLA Pathology Clinical Microbiology Lab, Cedars-Sinai Medical Center, Department of Pathology \& Laboratory Medicine, Molecular Pathology Laboratory, San Joaquin County Public Health Lab, Orange County Public Health Laboratory, UCSF Clinical Microbiology Laboratory, UC San Diego Center for Advanced Laboratory Medicine, County of Santa Clara Public Health, Scripps Medical Laboratory, California Department of Public Health, Ventura County Public Health Lab, Rady's Childrens Hospital, Stanford clinical virology lab, Santa Clara County Public Health Department, County of Santa Clara Public Health Department, Humboldt County Public Health Laboratory, Department of Immunology, Contra Costa Public Health Lab, San Francisco Public Health Laboratory, National Institute for Communicable Disease Control and Prevention (ICDC) Chinese Center for Disease Control and Prevention (China CDC), Innovative Genomics Institute, UC Berkeley, San Bernardino County Public Health Lab, Alameda County Public Health Lab, California Department of Health, County Of San Luis Obispo Public Health Laboratory, University of California, Davis, Andersen Lab, Microbial Diseases Laboratory, University of Wisconsin-Madison AIDS Vaccine Research Laboratories, OHSU Lab Services Molecular Microbiology Lab, San Luis Obispo Public Health Department, Tuolumne County Public Health, FL Bureau of Health Laboratories Tampa, Florida Bureau of Public Health Laboratories, Florida Department of Health, Public Health, United States Air Force School of Aerospace Medicine, University of Florida, FL Bureau of Public Health Laboratories-Miami, University of Miami Immunology and Histocompatibility Laboratory, FL Bureau of Public Health Laboratories, Laboratory of Dr. John Lednicky, Environmental and Global Health, University of Florida - Gainesville, FL Bureau of Public Health Laboratories-Tampa, Mayo Clinic Laboratories, University of Florida, FL Bur. of Public Health Laboratories-Jacksonville, Emerging Pathogens Institute, University of Florida, Environmental and Global Health, Texas DSHS Lab Services, Houston Methodist Hospital, Texas Department of State Health Services, LSUHS Emerging Viral Threat Laboratory, Texas Department of State Health Services (TXDSHS), Baylor College of Medicine, City of El Paso Department of Public Health Laboratory, Washington State Department of Health, University of Washington Virology Lab, WA State Department of Health, University of Washington, Laboratory Medicine, Laboratory Medicine, University of Washington, Washington State Public Health Lab, Harborview Medical Center, Andersen lab at Scripps Research, Kruglyak Lab, Cedars-Sinai Medical Center, Molecular Pathology Laboratory of Department of Pathology \& Laboratory Medicine and Genomic Core, Chan-Zuckerberg Biohub, Chiu Laboratory, University of California, San Francisco, Q Squared Solutions - QRTP facility, Pathogen Discovery, Respiratory Viruses Branch, Division of Viral Diseases, Centers for Disease Control and Prevention, Oregon SARS-CoV-2 Genome Sequencing Center, Ginkgo Bioworks Clinical Laboratory, University of Florida, Microbial Genome Sequencing Center.

\label{app:covid_data}

\end{document}